\newcommand{\bR}{\mathbb{R}}
\newcommand{\bN}{\mathbb{N}}
\newcommand{\bC}{\mathbb{C}}
\newcommand{\bZ}{\mathbb{Z}}
\newcommand{\bE}{\mathbb{E}}
\newcommand{\cB}{\mathcal{B}}
\newcommand{\cH}{\mathcal{H}}
\newcommand{\cG}{\mathcal{G}}
\newcommand{\PEPS}{{\rm PEPS}}
\newcommand{\proj}{\dyad}
\newcommand{\1}{\mathbbm{1}}
\newcommand{\id}{{\rm id}}
\newcommand{\pathnorm}[1]{{\left\vert\kern-0.25ex\left\vert\kern-0.25ex\left\vert #1
        \right\vert\kern-0.25ex\right\vert\kern-0.25ex\right\vert}}
\DeclareMathOperator{\OExp}{Exp}
\DeclareMathOperator{\diam}{diam}
\DeclareMathOperator{\dist}{dist}
\newcommand{\be}{\begin{equation}}
\newcommand{\ee}{\end{equation}}
\newcommand{\bea}{\begin{eqnarray}}
\newcommand{\eea}{\end{eqnarray}}
\newcommand{\bes}{\begin{equation*}}
\newcommand{\ees}{\end{equation*}}
\newcommand{\beas}{\begin{eqnarray*}}
\newcommand{\eeas}{\end{eqnarray*}}
\theoremstyle{plain}
\newtheorem{thm}{Theorem}
\newtheorem{corollary}[thm]{Corollary}
\newtheorem{conjecture}[thm]{Conjecture}
\newtheorem{lemma}[thm]{Lemma}
\newtheorem{prop}[thm]{Proposition}
\newtheorem{res}{Main Result}
\newtheorem*{thm*}{Theorem}
\newtheorem*{lem*}{Lemma}
\theoremstyle{definition}
\newtheorem{definition}[thm]{Definition}
\theoremstyle{remark}
\newtheorem{rem}[thm]{Remark}
\begin{document}

\title{\sc Locality at the boundary implies gap in the bulk for 2D PEPS}
\author{Michael J. Kastoryano$^{1,2}$, Angelo Lucia$^{1,3,4}$, David Perez-Garcia$^{5,6}$}
\address{
  $^1$ NBIA, Niels Bohr Institute, University of Copenhagen, Denmark \\
  $^2$ Institute for Theoretical Physics, University of Cologne, Germany \\
  $^3$ QMATH, Department of Mathematical Sciences, University of Copenhagen, Denmark  \\
  $^4$ Walter Burke Institute for Theoretical Physics and Institute for Quantum Information \& Matter,
  California Institute of Technology, Pasadena, CA 91125, US \\
  $^5$ Dpto. An\'alisis Matem\'atico, Universidad Complutense de Madrid, 28040 Madrid, Spain \\
  $^65$ Instituto de Ciencias Matem\'aticas, 28049 Madrid, Spain
}

\date{\today}

\begin{abstract}
  Proving that the parent Hamiltonian of a Projected Entangled Pair State (PEPS) is gapped remains
  an important open problem. We take a step forward in solving this problem by showing two results:
  first, we identify an approximate factorization condition on the boundary state of rectangular
  subregions that is sufficient to prove that the parent Hamiltonian of the bulk 2D PEPS has a
  constant gap in the thermodynamic limit; second, we then show that Gibbs state of a local,
  finite-range Hamiltonian satisfy such condition. The proof applies to the case of injective and
  MPO-injective PEPS, employs the martingale method of nearly commuting projectors, and exploits a
  result of Araki \cite{Araki} on the robustness of one dimensional Gibbs states. Our result
  provides one of the first rigorous connections between boundary theories and dynamical properties
  in an interacting many body system.
 \end{abstract}
 \maketitle

\tableofcontents

\section{Introduction}

Quantum Information Theory (QIT) and the Theory of Quantum Many Body (QMB) systems are inextricably
connected. On the one hand, QMB systems and their properties, such as entanglement or topological
order, play a crucial role in the design of quantum computers and quantum simulators. On the other
hand, the use of QIT tools and ideas have shed new light on the structure, properties and mechanisms
present in QMB systems.

An important part of these developments have come through the so called Tensor Networks States
(TNS), variational families of states which mimic the entanglement structure present in ground
states and thermal states of QMB systems. Indeed, as proven in a series of papers, Projected
Entangled Pair States (PEPS), a particularly relevant family of TNS, approximate well ground and
thermal states of local Hamiltonians \cite{hastings2006solving, hastings2007area, arad2013area,
  brandao2015exponential, molnar2015approximating}. Based on this, they have been used to design
better algorithms to simulate QMB systems (see e.g. \cite{orus2014practical} for a recent review on
this). A remarkable example in this direction is \cite{arad2016rigorous, roberts2017rigorous}, where
a new algorithm to approximate ground states of gapped spin chains is presented and shown
competitive compared to DMRG, while providing a guarantee of fast convergence.

Being arbitrarily close to any ground or thermal state of a QMB system, PEPS have been used also as
a new analytical tool, based on QIT techniques, to give a rigorous mathematical treatment of some of
the challenges posed by QMB systems. In this direction one can highlight for instance the
classification given in \cite{pollmann2010, schuch2011class, chen2011classification,
  fidkowski2011topological} of (symmetry-protected) phases in 1D systems, or the microscopic
explanation of topologically ordered systems of \cite{schuch2010, buerschaper2014, MPOinjective,
  cirac2017, bultinck2017}. Since a PEPS is always the ground state of an associated short range
Hamiltonian (called the parent Hamiltonian), this led for instance to the first local Hamiltonian
having the Resonating Valence Bond (RVB) State as unique ground state in the kagome lattice (up to
topological degeneracy) \cite{schuch2012resonating}, a question that can be traced back to the
seminal work of P. W. Anderson in the 70s \cite{anderson1973resonating}. Recently, TNS started to
play also a key role in providing a rigorous framework to address some of the main open problems in
high energy physics, such as those coming from Maldacena's AdS/CFT holographic correspondence
\cite{Swingle2012, pastawski2015holographic, hayden2016holographic,kim2017entanglement}.

A key insight in all these analytical developments is the existence of suitable connections between
the bulk of the system and its boundary in every TNS. Such bulk-boundary correspondences have also
played an important role in the study in QMB physics recently thanks to the seminal paper of Li and
Haldane \cite{li2008ES}, where they observed that the spectrum of the reduced density matrix of a
fractional quantum Hall ground state closely resembled a conformal field theory at the boundary
\cite{li2008ES}. This observation led the authors to conjecture that the entanglement spectrum
contains information about the counting of edge states.

Following the path initiated by Li and Haldane, the bulk-boundary correspondence in PEPS was made
explicit in \cite{ciracES} via an isometry and the definition of the so called boundary
Hamiltonian. This is an auxiliary 1D interaction whose thermal state (called boundary state) has two
key features: (1) its spectrum can be associated to the spectrum of the reduced density matrix of
the given PEPS in the bulk and (2) its (boundary) correlations can be associated to the correlations present in the bulk.

Since thermal states of local, finite-range Hamiltonians have exponentially decaying correlations (a
result due to Araki in \cite{Araki}), the following conjecture was stated in \cite{ciracES}:
\begin{conjecture}\label{conjecture-bulk-boundary}
The bulk parent Hamiltonian of the PEPS is gapped {\rm if and only if} the boundary Hamiltonian is short range.
\end{conjecture}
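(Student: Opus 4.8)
The plan is to prove the two implications separately, using the bulk--boundary isometry of \cite{ciracES} as the common bridge: it identifies the spectrum and the correlations of the reduced density matrix of a rectangular bulk region with those of the boundary state on the surrounding one-dimensional chain. Both directions of the biconditional are thereby reduced to a single relationship between the \emph{spectral gap of the bulk parent Hamiltonian} and the \emph{clustering properties of the boundary state}, which in turn is controlled by the range of the boundary Hamiltonian.

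For the implication ``boundary short range $\Rightarrow$ bulk gapped'' I would proceed as follows. First invoke Araki's theorem \cite{Araki}: if the boundary Hamiltonian is a local, finite-range one-dimensional interaction, its thermal (boundary) state clusters exponentially and, more importantly, its restrictions to disjoint intervals approximately factorize. Transporting this factorization through the isometry yields an approximate factorization of the reduced density matrices of overlapping rectangular subregions of the PEPS, which is exactly the control on the overlaps of the local ground-space projectors that the martingale method of nearly commuting projectors requires. I would then run the martingale recursion over a sequence of growing rectangles, at each merging step bounding the norm of the product of the two projectors being combined by the factorization error, and show that the accumulated errors sum to a constant, producing a spectral gap bounded below uniformly in system size. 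This is the tractable direction, and it is the one that the paper's two results (approximate factorization $\Rightarrow$ gap, and Gibbs states satisfy the factorization) are designed to carry out.

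For the converse, ``bulk gapped $\Rightarrow$ boundary short range,'' I would start from the standard consequence of the gap---exponential decay of connected correlations in the bulk ground state, of Hastings--Koma type---and again use the isometry, now read in reverse, to conclude that the boundary state itself has exponentially decaying correlations. The decisive remaining step is an \emph{inverse} problem: one must show that a one-dimensional state with exponential clustering, here presented explicitly as the matrix product operator $\rho_\partial$ coming from the reduced density matrix of the PEPS, is the Gibbs state of a short-range boundary Hamiltonian, i.e. that $H_\partial = -\log\rho_\partial$ decomposes into interaction terms whose strength decays with their diameter.

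I expect this converse to be the main obstacle, and it is precisely why the statement is still only a conjecture. Exponential clustering of a state does \emph{not} in general force the logarithm of its density operator to be a sum of short-range terms: $-\log\rho_\partial$ is generically fully nonlocal, and controlling the decay of its interaction tails would require a quantitative inverse of Araki's theorem---a reconstruction of a quasi-local Hamiltonian from a clustering state---which is not available. Two further difficulties compound this. First, ``short range'' almost certainly has to be relaxed to ``exponentially decaying interactions,'' since finite range cannot be expected from mere exponential clustering, and matching the two notions consistently across the isometry is delicate. Second, in the MPO-injective (topologically ordered) case the boundary state need not be a faithful Gibbs state of any local Hamiltonian in the naive sense, so the converse may require restricting to the injective case or reformulating the boundary theory. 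My plan is therefore to establish the forward direction in full, and to obtain the converse only under the additional hypothesis of injectivity together with a quasi-local reading of ``short range,'' flagging the general converse as the open core of the conjecture.
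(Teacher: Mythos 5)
First, a point of framing: the statement you were asked to prove is Conjecture~\ref{conjecture-bulk-boundary}, which the paper itself does \emph{not} prove --- it proves only a partial step toward one direction (Theorem~\ref{thm:main}: approximate factorization implies the martingale condition \eqref{eqn:martingale}, hence a gap; Theorem~\ref{thm:strictlocality}: Gibbs states of \emph{finite-range}, homogeneous 1D Hamiltonians are approximately factorizable). Your proposal correctly recognizes this, correctly identifies the forward direction as the tractable one, and your high-level plan for it matches the paper's program. But your forward-direction sketch has two genuine gaps. (i) You describe approximate factorization as ``restrictions to disjoint intervals approximately factorize,'' obtained from Araki's clustering. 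That is not the condition the paper needs. Definition~\ref{def:qfact} demands a \emph{multiplicative} decomposition of the (unnormalized) boundary states themselves into overlapping, possibly non-Hermitian factors, with the \emph{same} four operators $\Delta_{az},\Delta_{zb},\Omega_{dz},\Omega_{zc}$ coherently assembled as in \eqref{eqn:sigma} across the four distinct boundaries $\partial ABC$, $\partial AB$, $\partial BC$, $\partial B$, and compared to the true boundary states in the norms \eqref{QF1}--\eqref{QF2}. Since each region carries its own boundary state on a \emph{different} boundary, this cannot follow from locality of a single boundary Hamiltonian: the paper needs the additional structural hypothesis of $\delta$-homogeneity, relating the interaction terms of $Q_{\partial A}$ and $Q_{\partial AB}$ on overlapping segments, and your proposal has no counterpart of it. The construction itself goes through Araki's imaginary-time bounds (Theorem~\ref{thm:Araki}) and the expansional calculus, not through clustering of the thermal state. (ii) Your claim to ``establish the forward direction in full'' overstates what these tools give: the paper is explicit that ``short range'' as suggested by the numerics means exponentially decaying interactions, while the Araki-based argument covers only strictly finite range (or faster-than-exponential decay, conditionally on the unproven Conjecture~\ref{cnj:qlAraki}); the plain-exponential case is flagged as hard. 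A smaller slip: the martingale condition controls ground-space \emph{projectors} $P_{AB},P_{BC},P_{ABC}$ via Lemma~\ref{lemma:main}, not reduced density matrices of bulk regions.

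On the converse, your diagnosis (it is an inverse problem; $-\log\rho_{\partial}$ is generically nonlocal) agrees with why the paper leaves it open, but the route the paper envisions differs from yours: rather than Hastings--Koma clustering plus an ``inverse Araki,'' the conclusion sketches gap $\Rightarrow$ local indistinguishability \cite{SchwarzPEPS}, conjecturally $\Rightarrow$ decay of conditional mutual information and local recovery maps \cite{fawzi2014}, $\Rightarrow$ the boundary state is close to a quasi-local Gibbs state by \cite{KatoBrandao16}. In particular, the reconstruction tool you declare unavailable does exist in the CMI-decay setting (Kato--Brand\~ao); what is missing is the implication from the bulk gap to CMI decay of the \emph{boundary} state, not the Hamiltonian reconstruction step itself. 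Your caveats about injectivity versus the MPO-injective case are sound and consistent with the paper's treatment in Section~\ref{sec:topo}.
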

This conjecture was verified numerically in \cite{ciracES} for Ising PEPS \cite{verstraete2006criticality},
one of the few models for which the bulk spectral gap is analytically proved.  More generally,
Conjecture \ref{conjecture-bulk-boundary} has led to new ways of numerically analyzing spectral gaps
and quantum phase transitions in PEPS-based algorithms by looking at the boundary (see
e.g. \cite{poilblanc2012topological, schuch2013topological, poilblanc2014entanglement,
  rispler2015long, gauthe2017entanglement}).

Conjecture \ref{conjecture-bulk-boundary} also opens a new line of thought in the problem of
characterizing analytically when a given 2D (or higher dimensional) PEPS has a gapped parent
Hamiltonians.  This problem has seen virtually no progress since the original AKLT paper (see
however \cite{nachtergaele1996,knabe1988}), even though it was stated as an important open problem
in Ref. \cite{AKLT}. Its difficulty can be easily guessed from the recent result that proving
spectral gaps in 2D short range Hamiltonians is undecidable \cite{cubitt2015undecidability}. Despite
this, the problem is undoubtedly important. Apart from the potential solution of the long-standing
spectral gap problem for the 2D AKLT Hamiltonian, a sufficiently sharp condition for the existence
of a spectral gap in PEPS Hamiltonians seems crucial to attack two other central open problems in
QMB systems.

The first one is a mathematically complete classification of gapped quantum phases in 2D, at least
for the cases which can be characterized well within the PEPS framework (groundstates of gapped
non-chiral Hamiltonians are believed to have an efficient description in terms of PEPS, while the
situation for gapped chiral models is less clear \cite{Yang_2015,Dubail_2015,
  Poilblanc_2015,Wahl_2013}). Let us recall that two Hamiltonians are said to be in the same phase
if they can be deformed into each other without crossing a phase transition point (i.e. closing the
gap).  Recent work on Renormalization Fixed Points (RFP) \cite{cirac2017} allows to conclude then
(up to subtle but important technical details) that essentially only Levin-Wen string-net models
\cite{levin2005} appear as 2D RFP. The main remaining step is then to connect with a gapped path of
(parent) Hamiltonians each PEPS with one of the RFP models. Clearly this can only be done if one
finds a suitable criterion to guarantee the presence of such gap.

The second potential application of a sharp gap criterion is a mathematical proof of the existence a
gapped $SU(2)$-invariant topological spin liquid phase. Topological spin liquids constitute a
long-sought new quantum phase that can be traced back to the revolutionary work of Anderson
\cite{anderson1973resonating}, and for which experimental evidence has been observed, even in
naturally occurring materials such as herbertsmithite, although in a gapless
form \cite{Norman_2016}. They are characterized by the absence of order even at zero temperature, and
in the case of gapped spin liquids they are expected to show topological order and quasi-particle
excitations \cite{Savary2016}. A long standing conjecture regards the existence of a gapped spin
liquid which does not break a local $SU(2)$ symmetry, for which the most promising candidate is the
RVB state. The only remaining step to prove is to show that the parent Hamiltonian for the RVB
constructed in \cite{schuch2012resonating} is in the same phase as the Toric Code.  An interpolating
path of parent Hamiltonians connecting the Toric Code and the RVB is constructed in
\cite{schuch2012resonating} and numerical evidence is provided that no phase transition occur along
the path. But the lack of a suitable gap characterization makes it hard to prove analytically the
existence of a gap throughout the path.

In this paper, we take a first important step in solving the PEPS gap problem following the approach
of Conjecture \ref{conjecture-bulk-boundary}. We identify one condition on the boundary state of a
2D injective PEPS, which we call \emph{approximate factorization}, and show that this condition is
sufficient to prove a spectral gap in the bulk.
\begin{res}\label{main-result-I}
  If for any rectangular subregion of the lattice, the boundary state on the virtual indices of an
  injective 2D PEPS is approximately factorizable then the gap of the bulk parent Hamiltonian is bounded below by a constant, in
  the thermodynamic limit.
\end{res}
The result can be extended to topologically ordered systems described by MPO-injective PEPS in a
natural way. In this case, the boundary Hamiltonian is given by a sum of quasi-local terms plus one
global projector that commutes with all other terms, and specifies the topological sector, and the
approximate factorization condition is required on the non-topological part of the boundary state.

The proof strategy of Main Result~\ref{main-result-I} is based on the so-called martingale method: a general strategy
that relates the gap of a frustration free Hamiltonian to the approximate commutativity of
overlapping projectors. The martingale method is inspired from the classical proofs for showing
rapid mixing of Glauber dynamics of Ising type models at finite temperature
\cite{martinelli2004}. Nachtergaele first adapted this strategy for proving the gap of one
dimensional VBS models \cite{nachtergaele1996}. Here we use a slightly different version of the
proof due to Bertini et al. \cite{bertini2002} which was generalized for frustration free
Hamiltonians in \cite{Angelo} and for commuting quantum Gibbs samplers in
\cite{GibbsSampling}. Subsequently, the martingale condition was shown to be implied by the gap
\cite{Angelo,GibbsSampling}. The basic idea underlying the proof is to show that when we double the
system size, the gap remains almost unchanged. Then one can grow the size indefinitely while keeping
the gap constant. In this paper, we will prove the version of the martingale condition
(Eqn.~(\ref{eqn:martingale})) due to Bertini et al. (see Fig.~\ref{fig:ABC}) and refer back to
Ref. \cite{Angelo} for the details of how this condition implies a gap of the parent Hamiltonian in
the thermodynamic limit.

In the setting of Glauber dynamics for classical spin systems, the martingale condition on the
ground state projectors can be rather easily shown to follow from decay of correlations in the bulk
Gibbs state because of the (Dobrushin-Lanford-Ruelle) DLR theory of boundary conditions for
classical Gibbs states \cite{dobrushin1968,lanford1969}. The DLR theory roughly states that the set
of reduced states of all global Gibbs states restricted to a region is equal to the set of all local
Gibbs states on that region with arbitrary boundary conditions. Since this set is convex, one can
further restrict attention to states with pure boundary conditions, which in the classical setting
are product states. Unfortunately, the DLR theory breaks down in the quantum setting both for
proving gaps of quantum Gibbs samplers and for proving gaps of frustration free Hamiltonians
\cite{fannes1995}. Therefore the connection between bulk correlations and the martingale condition
is lost. In short, the dynamic vs. static equivalence that the martingale condition provides in
the classical setting is lost in the quantum setting.
Main Result~\ref{main-result-I} shows that for PEPS this equivalence can be
partially recovered: for injective PEPS, proving the martingale condition
reduces to showing that the boundary states on the rectangular regions of
Fig.~\ref{fig:ABC} can be decomposed in a special way that we call
approximate factorization. Approximate factorization of a state $\rho_{ABC}$ on
a line $ABC$ broken up such that $B$ separates $A$ from $C$, claims that the
state can be written as $\rho_{ABC}\approx \Omega_{AB}\Sigma_{BC}$ for operators
$\Omega_{AB}$ and $\Sigma_{BC}$ with support on $AB$ and $BC$ respectively
(Eqns.~(\ref{eqn:sigma},\ref{QF1},\ref{QF2})). The actual condition of
approximate factorization in Def.~\ref{def:qfact} is more complicated, and
involves several boundary states on rectangular regions as in
Fig.~\ref{fig:bdregions}.

We then show that locality of the interactions in the boundary Hamiltonian is sufficient to prove
the approximate factorization condition.
\begin{res}\label{main-result-2}
  Gibbs (thermal) states of a one dimensional local finite-range Hamiltonian are approximately factorizable.
\end{res}
The proof of Main Result~\ref{main-result-2} is based on a careful use and generalization of the techniques developed
by Araki in \cite{Araki}. The limitation on faster than exponentially decaying interactions arises
from the generalization of these techniques, which we are only able to prove for this case.
By proving a generalization of Main Result~\ref{main-result-2} to the case of exponentially decaying interactions,
which are the ones suggested by the numerical evidence, one would obtain a proof of one half of
Conjecture~\ref{conjecture-bulk-boundary}, namely that a short range boundary Hamiltonian implies a
bulk gap.  Whether Main Result~\ref{main-result-2} admits such generalization is one of the open questions
left in this paper. We will show how one can extend Main Result~\ref{main-result-2} to interactions decaying faster
than any exponential, if one assumes that Araki's theorem holds in this case. In the following we
will call such interactions quasi-local. Improving this result to a plain exponential, if true,
seems a hard task since it is related with the believed but unproven fact that thermal states of
1D short range Hamiltonians with exponentially decaying interactions have exponential decay of
correlations.

The paper is organized as follows.
In Sec.~\ref{sec:prelim} we introduce the notation and overview
the basics of PEPS including the construction of the parent Hamiltonian. We also define boundary
states of PEPS for rectangular regions of the lattice, and we introduce the equivalence between the
martingale condition and the spectral gap of the parent Hamiltonian of the PEPS \cite{Angelo}.
In Sec.~\ref{sec:main} we show that, for injective PEPS, the martingale
condition can be reduced to the approximate factorization property of the
boundary states on rectangular regions. Moreover, we consider the case of a PEPS
defined on a 1D chain, which are known as Matrix Product States (MPS). We show
that the boundary states of injective MPS are always approximately factorizing,
providing an independent proof that the parent Hamiltonian of injective MPS is
gapped \cite{FCS}.
In Sec.~\ref{sec:topo}, we show how our main theorem relating
approximate factorization to the martingale condition can be extended to MPO-injective PEPS in a
natural way.
Sec.~\ref{sec:qfact} is devoted to the Main Result~\ref{main-result-2}, namely that for Gibbs
states of a one dimensional finite-range local Hamiltonian (not to be confused with the parent
Hamiltonian of the PEPS) the approximate factorization property holds. This section requires a
number of tools and results on the analysis of Gibbs states of local Hamiltonians on a line
\cite{Araki}. In order to compare states on different overlapping regions, we need to
introduce the assumptions of \textit{locality} and \textit{homogeneity} on the Hamiltonian.
Finally, we show how an extension of Araki's theorem to the case of quasi-local interaction, which
we conjecture holds but we do not provide a proof of, can be straightforwardly used to generalize our
results to this more general class of interactions.
In the conclusion, we discuss further problems and implications of our theorem, as well as the
relationship between exponential decay of correlations in the bulk and the fact that the boundary
states are quasi-local and quasi-homogeneous Gibbs states.

\section{Preliminaries}\label{sec:prelim}

\subsection{PEPS basics}
We will consider $\Lambda$ to be a finite subset of an infinite graph $(G,E)$, which can be
isometrically embedded in $\bR^2$ (the standard examples being the square lattice $\bZ^2$ or the
honeycomb lattice).  At each $u\in G$ we will associate a finite-dimensional Hilbert space $\cH_d$
of some fixed dimension $d$.  We will denote by $\cH_\Lambda = \bigotimes_{u \in \Lambda} \cH_d$ the
Hilbert space associated to $\Lambda$, and $\cB(\cH_\Lambda)$ is the set of bounded operators on
$\cH_\Lambda$.  If $\{\ket{j_i}\}_{i=1}^d$ is an orthonormal basis for the Hilbert space at site
$j$, any pure state in $\cH_\Lambda$ can be written as
\begin{equation}
  \ket{\psi}=\sum_{j_1,\cdots,j_N=1}^d R_{j_1,\cdots,j_N}\ket{j_1}\otimes\cdots\otimes \ket{j_N},
\end{equation}
where $N = |\Lambda|$ and $R_{j_1,\cdots,j_N}$ is the vector of amplitudes of $\ket{\psi}$, which we
can think of as a tensor in $(\cH_d^*)^{\otimes N}$.

Projected Entangled Pair States (PEPS) are a class of pure states for which it is possible to find a
special description of the tensor $R$. They are constructed by associating to each edge $e\in E$ a
maximally entangled state $\ket{\omega_e}=D^{-1/2}\sum_{j=1}^D\ket{j,j}$, where $D$ is called the
bond dimension, and to each vertex $v\in G$ a linear map $T_v:\cH_D^{\otimes r} \to \cH_d$, where
$r$ is the degree of the vertex $v$ (the number of edges incident to $v$). The Hilbert space
associated with the edges is called the virtual space, whereas the one associated to vertices is
called the physical space. $T_v$ can be considered as a map from the virtual space associated to the
edges $e$ connected to $v$, onto the physical space at $v$:
\begin{equation}
T_v=\sum_{k_v=1}^d\sum_{j_1,\cdots,j_r=1}^D T^{k_v}_{j_1,\cdots,j_r}\ket{k_v}\bra{j_1,\cdots,j_r}.
\end{equation}

\begin{figure}[h]
\centering
  \includegraphics[scale=0.40]{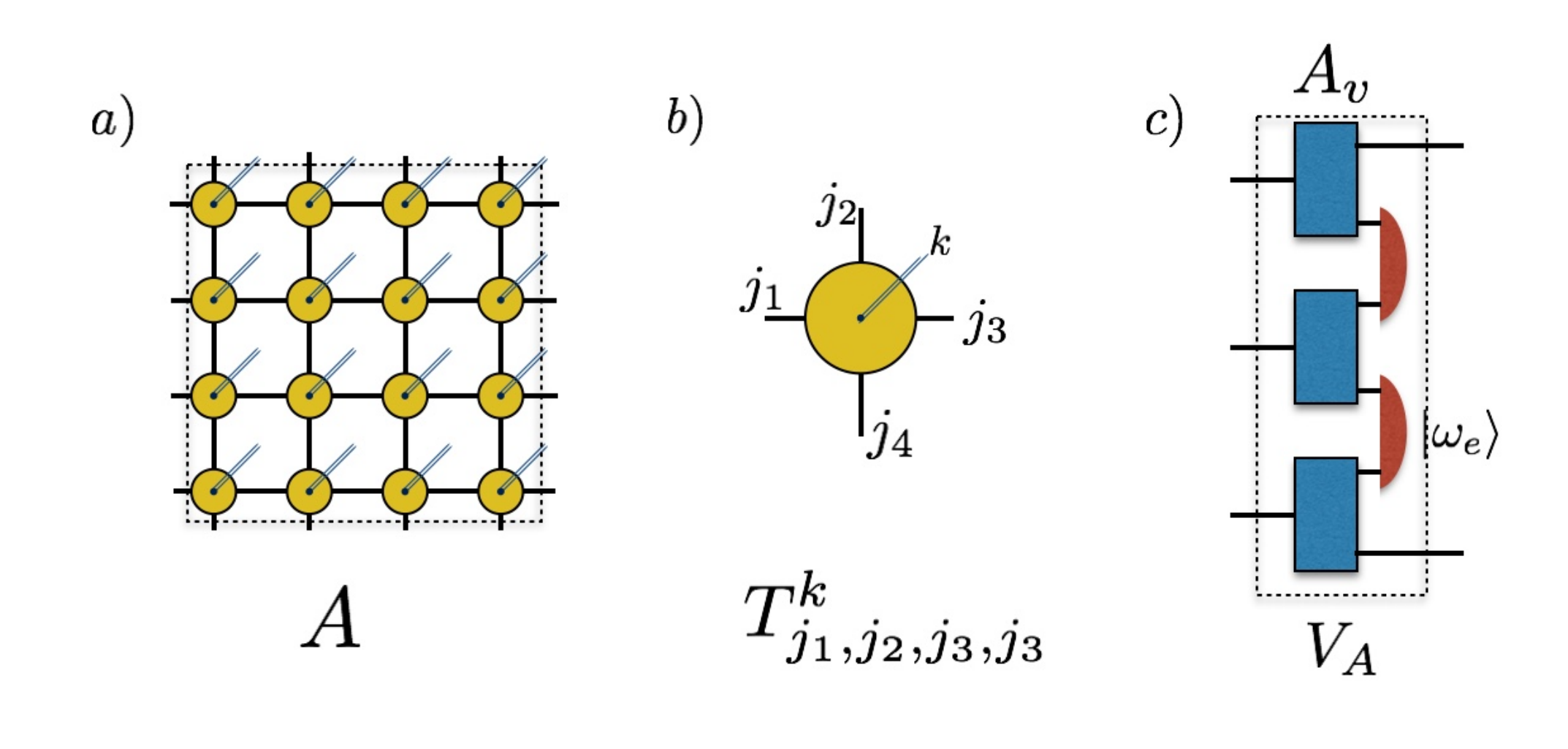}
  \caption{We consider a PEPS state on a square lattice. a) Section of the PEPS on region
    $A\subseteq\Lambda$, b) Graphical representation of the tensors $T^k_{j_1,j_2,j_3,j_4}$, c)
    representation of the operator $V_A$ on a one dimensional lattice. The operator is to be read as
    mapping virtual indices (from the right) to physical indices (to the left).  }
    \label{fig:PEPS1}
\end{figure}
Then $(T^{k_v}_{j_1,\cdots,j_r})$ is a tensor with one physical index and $r$ virtual indices. On a
square lattice, as in Fig.~\ref{fig:PEPS1}, $r=4$.  If $\Lambda$ has no outgoing edges, then we can
define a state in $\cH_\Lambda$ via a tensor contraction: \be \ket{\PEPS_\Lambda}=\bigotimes_{v\in
  \Lambda}T_v\bigotimes_{e\in E_\Lambda}\ket{\omega_e}.\label{eqn:PEPS}\ee If instead there are
edges connecting $\Lambda$ with its complement in $(G,V)$, then we obtain a state in $\cH_\Lambda$
for each choice of ``boundary condition'', in the following sense: denote with $E_{\bar \Lambda}$
the edges which are incident to $\Lambda$, with $E_\Lambda$ the edges with are contained in
$\Lambda$, and with $\partial \Lambda = E_{\bar \Lambda} \setminus E_\Lambda$ the edges that connect
$\Lambda$ with its complement.  Let
$\cH_{\partial \Lambda} = \bigotimes_{e \in \partial \Lambda} \cH_D$ (note that while at each edge
we associated $\ket{\omega_e}\in \cH_D \otimes \cH_D$, we are only including one copy of $\cH_D$ in
$\cH_{\partial \Lambda}$). Then for each vector $\ket{X} \in \cH_{\partial \Lambda}$ we can define a
state
\begin{equation}
\ket{\PEPS_{\Lambda, X}} = \bra{X} \bigotimes_{v \in \Lambda} T_v \bigotimes_{e\in E_{\bar \Lambda}} \ket{\omega_e}.
\end{equation}
This defines a linear map from $\cH_{\partial \Lambda}$ to $\cH_{\Lambda}$, which we will denote
with $V_\Lambda$. It is a mapping from the virtual indices at the boundary of $\Lambda$ to the
physical indices in the bulk of $\Lambda$ (see Fig.~\ref{fig:PEPS1} for an illustration):
\begin{align*}
  V_\Lambda : &\cH_{\partial \Lambda} \to \cH_{\Lambda} \\
  & \ket{X} \mapsto \ket{\PEPS_{\Lambda, X}} .
\end{align*}
A PEPS is said to be \textit{injective on $\Lambda$} \cite{PEPS1} if $V_\Lambda$ is an injective map.
As shown in Ref.~\cite{PEPS1}, if a PEPS is injective on disjoint regions $A$ and $B$, it is also injective on $A\cup B$, so we will simply assume, up to coarse graining of the lattice, that $V_\Lambda$ is injective for every finite $\Lambda$.

Again following Ref.~\cite{PEPS1}, for any injective PEPS, we can define a local Hamiltonian, called the \textit{parent Hamiltonian}, for which the PEPS is the unique groundstate. This is done by considering, for each edge $e = (a,b)$, the orthogonal projector $h_e$ on the orthogonal complement of $\operatorname{Im} V_{\{a,b\}}$. Then $H_\Lambda = \sum_{ (a,b) \in E_\Lambda} h_e$ is a local Hamiltonian, and clearly $H_\Lambda \ket{\PEPS_{\Lambda, X}} = 0$. $H_\Lambda$ is \textit{frustration-free}: i.e. $h_e\ket{\PEPS_{\Lambda,X}}=0$ for all $e \in E_\Lambda$.

It will be very important for us to talk about sub-regions of the lattice $A\subseteq \Lambda$, and
to consider the associated local ground subspace $\cG_A=\{\ket{\varphi} \in \cH_\Lambda\, |\, H_A\ket{\phi}=0\} = \operatorname{Im} V_A$, for $H_A=\sum_{e\in E_A}h_e$. We will denote with $P_A$ the orthogonal projector on $\cG_A$. Because of frustration freeness, for any $A\subseteq B\subseteq \Lambda$, we have $\cG_\Lambda\subseteq\cG_B\subseteq\cG_A$, and therefore $P_A P_B = P_B = P_B P_A$.

At times, we will need to refer to Hamiltonians both in the bulk (2D) and at the boundary (1D). In order to avoid confusion, we will always denote one dimensional boundary Hamiltonians by the letters $Q,R,S,T$, while the parent Hamiltonian of the PEPS will always be referred to as $H$.

\subsection{Boundary states of PEPS}\label{sec:bnd}

\begin{figure}[h]
\centering
  \includegraphics[scale=0.40]{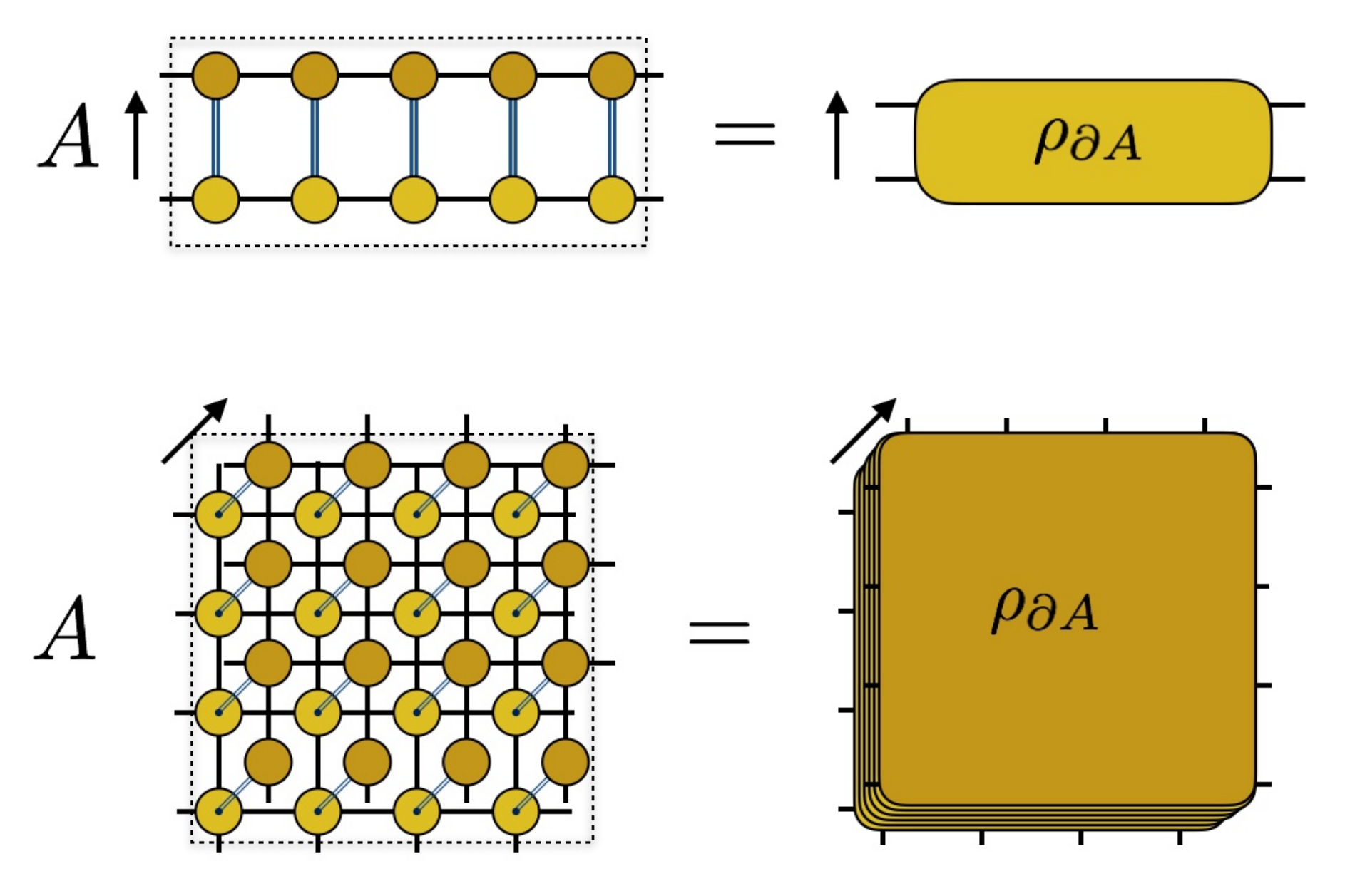}
    \caption{Setup of the boundary state in one and two dimensions. The arrows indicate the input and output directions for the (unnormalized) boundary density matrix $\rho_{\partial A}$. }
    \label{fig:bnd}
\end{figure}

The main conceptual contribution of this paper is that `boundary states' play a very important role in the analysis of ground state projectors for PEPS. These will be (unnormalized) positive operators acting on the virtual space associated with the edges connecting a region $A$ and its complement.
They are obtained by contracting the physical indices inside $A$, and leaving the virtual indices at the boundary open, as depicted in Fig.~\ref{fig:bnd}.
\begin{definition}
For a finite region $A\subseteq\Lambda$, the boundary state of $A$ is
\be\rho_{\partial A}:=V^\dag_AV_A \in \cB(\cH_{\partial A}).\ee
Moreover, we define the following linear operator $W_A : \cH_{\partial A} \to \cH_A$
\be W_A = V_A \rho_{\partial A}^{-1/2},\ee
where the inverse is taken on the support of $\rho_{\partial A}$ if it is not full rank.
\end{definition}

\begin{rem}\label{remark:boundaries}
Some properties of $\rho_{\partial A}$ and $W_A$ follow immediately from the definition
\begin{enumerate}
  \item $\rho_{\partial A}$ is positive semi-definite;
  \item $\ker \rho_{\partial A} = \ker V_A$, and in particular $\rho_{\partial A} >0$ if the PEPS is injective;
  \item $W_A W_A^\dag = V_A \rho_{\partial A}^{-1} V_A^\dag = P_A$;
  \item $W_A^\dag W_A = \1_{(\ker V_A)^{\perp}}$, and therefore $W_A$ is a unitary from $(\ker V_A)^\perp$ to $\operatorname{Im} V_A$, and a partial isometry from $\cH_{\partial A}$ to $\cH_A$ (an isometry if the PEPS is injective).
  \end{enumerate}
\end{rem}

The only point which might not be immediately clear from the definition is the fact that $P_A = W_A W_A^\dag$:
this can be shown by observing that $W_A W_A^\dag$ is a projector, which commutes with $P_A$ since $P_A V_A = V_A$, and has exactly the same image space as $V_A$ (and thus $P_A$).

\begin{rem}
  The entanglement spectrum is the spectrum of the reduced density matrix of a pure state
  \cite{li2008ES}. In the case where $\Lambda$ has no outgoing edges, the entanglement spectrum is
  related to the boundary state in the following way: call $A^{c}=\Lambda\setminus A$ and note that
  \bea \tr_{A^c}[\ket{\PEPS_\Lambda}\bra{\PEPS_\Lambda}]&=&V_A\rho_{\partial A^c}V^\dag_A\nonumber \\
  &=& W_A \rho^{1/2}_{\partial A} \rho_{\partial A^c}\rho^{1/2}_{\partial A} W_A^\dag.\eea $W_A$ is
  an isometry, so the spectrum of the reduced state on $A$ is equal to the spectrum of
  $\rho^{1/2}_{\partial A} \rho_{\partial A^c}\rho^{1/2}_{\partial A}$. Here, $\rho_{\partial A}$
  and $\rho_{\partial A^c}$ are different states that live on the same boundary
  $\partial A = \partial A^c$. $\rho_{\partial A}$ is constructed by contracting all of the physical
  indices in $A$, while $\rho_{\partial A}$ is constructed by contracting the physical indices outside of
  $A$. In certain, very special setups, these two states end up being equal, and so the entanglement spectrum, is equal to the spectrum of
  $\rho_{\partial A}^2$. This is the case in Ref.~\cite{ciracES}, where the system has periodic
  boundary conditions, and it is split exactly in two.
\end{rem}

\subsection{The martingale condition and the spectral gap}
The main result of the paper is to show that under certain conditions, the parent Hamiltonian of a PEPS is gapped. In order to show this, we will invoke an equivalence theorem between the approximate commutativity of ground state projectors and the spectral gap of a frustration free  Hamiltonian \cite{nachtergaele1996,Angelo}. This equivalence has adopted the name `\textit{the martingale method}' although its connection to martingales in probability theory is tenuous at best.
We start by defining the \textit{spectral gap} of a Hamiltonian $H$.

\begin{definition}\label{def:spectral-gap}
  Let $\{H_\Lambda: H_\Lambda \in \cB(\cH_\Lambda) \}_\Lambda$ be a collection of Hamiltonians indexed by some set of regions
  $\Lambda\subset G$. The \emph{spectral gap} of $H_\Lambda$, which we will denote by
  $\lambda(\Lambda)$, is defined as the difference between the two smallest distinct eigenvalues of
  $H_\Lambda$.
  We say that that the family $\{H_\Lambda\}_\Lambda$ is \emph{gapped} if
  \begin{equation}
    \inf_{\Lambda} \lambda(\Lambda) > 0.
  \end{equation}
\end{definition}
It is clear that the condition of being gapped is non trivial only in the case of an infinite family of
Hamiltonians, and in particular we are interested in showing that $\lambda(\Lambda)$ is lower bounded
by a constant independent of the volume $|\Lambda|$ as $\Lambda$ tends to $G$.
We will only consider families of Hamiltonians which are \emph{local} (i.e. which can be constructed
as sum of local interactions) and \emph{frustration-free} (i.e. for which groundstates are the
eigenvalues with minimal energy of all the local interaction terms).
\begin{definition}\label{def:local-Hamiltonian}
  Fix $r\in \bN$. A Hamiltonian $H_\Lambda$ is said to be \emph{$r$-local} if it can be decomposed as
  \begin{equation}
    H_\Lambda = \sum_{Z \subset \Lambda} h_Z^\Lambda,
  \end{equation}
  where each $h_Z^\Lambda \in \cB(\cH_Z)$ is Hermitian, and moreover $h^\Lambda_Z=0$ if $\diam Z >
  r$.
  The value $r$ will be called the \emph{range} and $J = \sup_{Z}\norm{h_Z}$ the \emph{strength} of $H_\Lambda$.
  Moreover, a family $\{H_\Lambda\}_\Lambda$ of Hamiltonians will be said to be \emph{local} if
  there is a choice of $r$ and $J$ such that $H_\Lambda$ is $r$-local with strength less than $J$,
  uniformly in $\Lambda$.
\end{definition}

In a slight abuse of language, we will say that a (single) Hamiltonian $H_\Lambda$ is \emph{local}
(without specifying the range) if it belongs to a local family of Hamiltonians.
We will also restrict to Hamiltonians where interactions terms are given as
projectors, and which are frustration free. Assuming that the interactions are
projections is not restrictive for finite range Hamiltonians, as long as each of
the local terms $h^\Lambda_Z$ can be lower bounded by the projector on its
range, up to a constant independent of $\Lambda$ and $Z$, since this will not
change the low-energy properties of the Hamiltonian.
\begin{definition}[Frustration free]\label{def:ff-Hamiltonian}
  Let $H_\Lambda = \sum_{Z} h^\Lambda_Z$ be a $r$-local Hamiltonian, and let $P_\Lambda$ be the projector on
  its groundstate space. We say that $H_\Lambda$ is frustration-free if $h^\Lambda_Z$ is an orthogonal projector and $h_Z^\Lambda P_\Lambda= 0$ for all $Z$.
\end{definition}
In the case of frustration free Hamiltonians, the spectral gap is simply the smallest non-zero eigenvalue of the Hamiltonian.

We can now state the theorem relating ground state projectors to the spectral gap of $H_\Lambda$.
\begin{figure}[h]
\centering
 \includegraphics[scale=0.35]{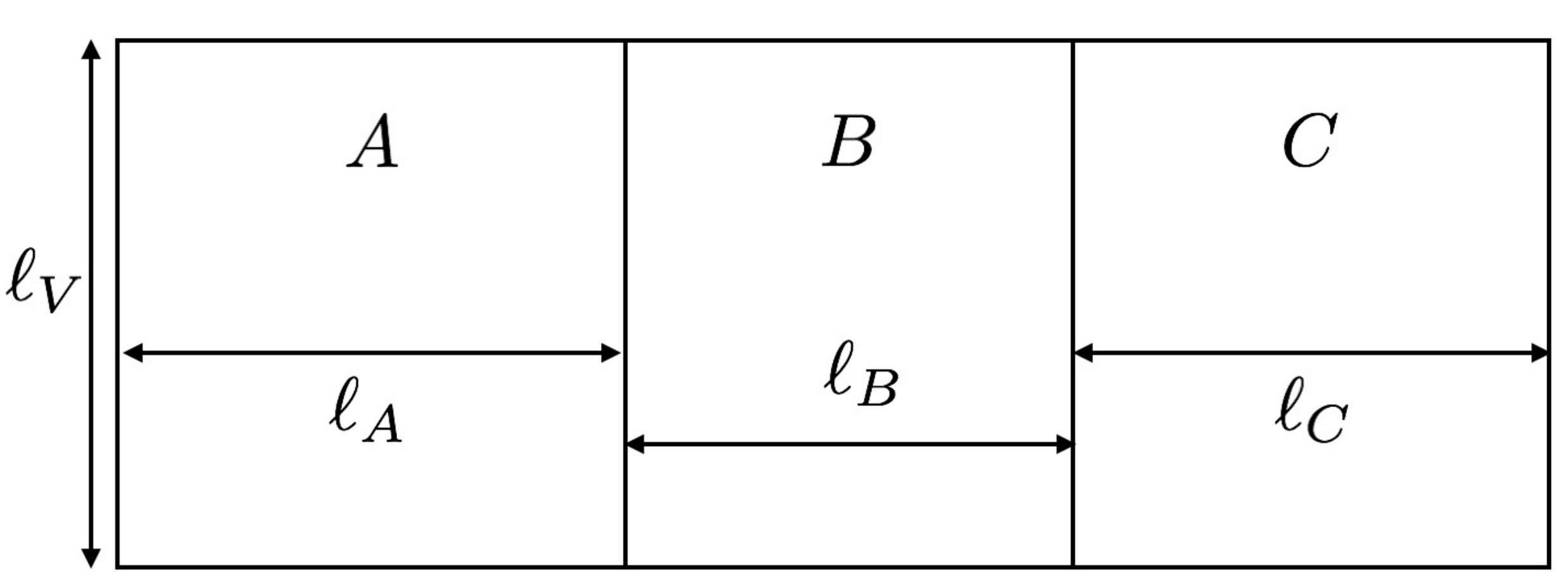}

\caption{Setup for the martingale condition in Theorem \ref{thm:martingale}.}
    \label{fig:ABC}
\end{figure}

\begin{thm}[\cite{Angelo}]\label{thm:martingale}
  Let $\{H_\Lambda\}_\Lambda$ be a family of local, frustration-free Hamiltonians defined on square
  regions $\Lambda\subset G$ (in 2D) whose local terms are orthogonal projections.
  If there exist positive constants $c$ and $\alpha$ and $\beta \in (0,1)$, and if for every three
  adjacent rectangles $A, B, C\subset\Lambda$ as depicted in Fig.~\ref{fig:ABC}, such that the width
  of $B$ separating $A$ and $C$ is $\ell_B\geq (L^*)^\beta $ where
  $L^* = \max\{\ell_V,\ell_A,\ell_C\}$, the following holds
\begin{equation}\label{eqn:martingale}
  \norm{P_{AB}P_{BC}-P_{ABC}} \leq c \ell_B^{-\alpha},
\end{equation}
then $\lambda(\Lambda)$ is bounded from below by a constant independent of $|\Lambda|$. Conversely,
if $\{H_\Lambda\}_\Lambda$ is gapped, then for every $\Lambda=ABC$
\be\norm{P_{AB}P_{BC}-P_{ABC}}\leq c e^{-\ell_B/\xi},\label{eqn:martingale2}\ee
for some constants $c,\xi$.
\end{thm}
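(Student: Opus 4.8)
The plan is to prove the two implications by different mechanisms, both of which hinge on a single algebraic observation. Since $AB\subseteq ABC$ and $BC\subseteq ABC$, frustration-freeness gives $P_{AB}P_{ABC}=P_{ABC}=P_{ABC}P_{BC}$, so $P_{ABC}\le P_{AB}$ and $P_{ABC}\le P_{BC}$, and the differences $Q_{AB}:=P_{AB}-P_{ABC}$ and $Q_{BC}:=P_{BC}-P_{ABC}$ are orthogonal projectors. A direct expansion, using $P_{AB}P_{ABC}=P_{ABC}=P_{ABC}P_{BC}$ and $P_{ABC}^2=P_{ABC}$, yields the identity
\[
  P_{AB}P_{BC}-P_{ABC}=Q_{AB}\,Q_{BC},
\]
so that the quantity appearing in \eqref{eqn:martingale} is exactly the overlap $\norm{Q_{AB}Q_{BC}}$ between the two excitation subspaces $\operatorname{Im}Q_{AB}$ and $\operatorname{Im}Q_{BC}$. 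I would use this identity in both directions.

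For the forward implication (martingale condition $\Rightarrow$ gap) I would run a recursive coarse-graining on the linear size of the square regions. Let $\gamma(L)$ be the infimum of $\lambda(\Lambda)$ over admissible regions of size $L$; the goal is $\inf_L\gamma(L)>0$. The engine is a \emph{gluing estimate}: decomposing $\1-P_{ABC}$ through the orthogonal telescoping increments of a nested family of regions, and bounding $H_{ABC}$ from below by the gapped sub-Hamiltonians $H_{AB}$ and $H_{BC}$, one obtains a bound of the schematic form $\gamma(\Lambda)\ge(1-\delta)\min\{\gamma_{AB},\gamma_{BC}\}$, where $\gamma_{AB},\gamma_{BC}$ are the gaps of $H_{AB},H_{BC}$ and the loss $\delta$ is controlled by $\norm{Q_{AB}Q_{BC}}\le c\,\ell_B^{-\alpha}$, i.e.\ by the failure of the two excitation subspaces to be orthogonal across the buffer $B$. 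Growing the region by a fixed geometric ratio reaches size $L$ in $O(\log L)$ steps; the admissibility constraint $\ell_B\ge(L^*)^\beta$ with $\beta\in(0,1)$ forces the buffers at scale $2^k$ to grow like $2^{k\beta}$, whence the per-step losses $\delta_k\sim 2^{-k\alpha\beta}$ are geometrically summable and the product $\prod_k(1-\delta_k)$ stays bounded below by a positive constant uniformly in $L$. Together with a uniform positive gap for the bounded-size base regions, this gives $\inf_L\gamma(L)>0$. I expect the main obstacle to be precisely this bookkeeping: making the telescoping and the gluing estimate rigorous while keeping every constant volume-independent, since converting approximate orthogonality of increments into a genuine spectral-gap lower bound is delicate. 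I would follow Ref.~\cite{Angelo} for these details.

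For the converse (gap $\Rightarrow$ exponential martingale bound) I would bound $\norm{Q_{AB}Q_{BC}}$ directly, and I expect this to be the easier direction. A unit vector in $\operatorname{Im}Q_{BC}$ is a ground state of $H_{BC}$ orthogonal to $\cG_{ABC}$, so under $H_{ABC}$ its energy comes only from the interaction terms outside $BC$, i.e.\ from the $A$-side of the buffer; symmetrically, vectors in $\operatorname{Im}Q_{AB}$ carry energy only on the $C$-side. The overlap $\norm{Q_{AB}Q_{BC}}$ therefore measures how strongly two excitations localized on opposite sides of a buffer of width $\ell_B$ can overlap, which in a gapped frustration-free system is exponentially suppressed. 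I would make this quantitative with a standard gap-implies-clustering tool — a Combes--Thomas / quasi-adiabatic resolvent estimate or the detectability lemma — expressing $Q_{AB}$ and $Q_{BC}$ through the resolvents of $H_{AB}$ and $H_{BC}$ and using the gap to exponentially damp the tunneling across $B$, giving $\norm{Q_{AB}Q_{BC}}\le c\,e^{-\ell_B/\xi}$ with $\xi$ fixed by the gap. The only subtlety is that $Q_{AB}$ and $Q_{BC}$ are not strictly supported away from $B$, so the localization of their energy must be upgraded into a genuine norm estimate; again I would follow Ref.~\cite{Angelo}.
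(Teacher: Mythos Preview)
The paper does not contain a proof of this theorem: it is quoted as an external result and the text immediately following the statement says only ``See Ref.~\cite{Angelo} for a detailed proof and discussion of the theorem in any dimension.'' So there is no in-paper argument to compare your proposal against; the paper's ``proof'' is simply a citation, and you yourself defer to the same reference for the delicate bookkeeping in both directions.

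That said, your sketch is aligned with the standard martingale-method machinery that Ref.~\cite{Angelo} develops. The algebraic identity $P_{AB}P_{BC}-P_{ABC}=(P_{AB}-P_{ABC})(P_{BC}-P_{ABC})$ is correct and is a clean way to frame the quantity in \eqref{eqn:martingale}. Your forward direction --- recursive coarse-graining by geometric growth of the linear size, with per-step multiplicative losses controlled by $\norm{Q_{AB}Q_{BC}}\le c\,\ell_B^{-\alpha}$ and summability coming from $\ell_B\gtrsim 2^{k\beta}$ --- is the right outline; the genuine work (which you correctly flag) is turning approximate orthogonality of the increments into an honest lower bound on the gap, and this is exactly what the cited reference supplies. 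For the converse, your localization-of-energy picture is the right intuition, and the detectability-lemma route is one of the standard ways to convert a uniform gap into the exponential estimate \eqref{eqn:martingale2}. In short: your plan matches the approach the paper invokes by citation, and there is nothing further in the paper to benchmark it against.
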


See Ref. \cite{Angelo} for a detailed proof and discussion of the theorem in any dimension. Eqn.~(\ref{eqn:martingale}) is referred to as the martingale condition. The norm on the r.h.s. of the
equation can be understood as a strong measure of correlations in the groundstate. One way of seeing this
is to observing that the expression is not small if there is a completely delocalized excitation: a state not in the
groundstate of $ABC$, but which cannot be recognized as such by looking at regions $AB$ and $BC$
alone (so that the excitation is ``hidden'' and can only be measured if we have access to
$A$ and $C$ at the same time).  Theorem \ref{thm:martingale} then states that if this particular measure of
correlations is decaying sufficiently fast in the overlap between regions, then the system is gapped
and the decay is actually exponential. The usefulness of this characterization is due to the fact
that, in the case of PEPS, a complete description of the groundstate subspace is available.

In the remainder of the paper, we will focus on proving that local boundary states imply the
martingale condition, which will immediately imply that the parent Hamiltonian is gapped. An
immediate corollary of Theorem \ref{thm:martingale} is that Eqn.~(\ref{eqn:martingale}) implies
exponential decay of correlation in the ground state $\ket{\Psi_\Lambda}$ of $H_\Lambda$ if the ground state is unique:
\be |\bra{\Psi_\Lambda} f^\dag g\ket{\Psi_\Lambda}-\bra{\Psi_\Lambda} f^\dag \ket{\Psi_\Lambda}\bra{\Psi_\Lambda} g\ket{\Psi_\Lambda}|\leq \sqrt{\bra{\Psi_\Lambda} f^\dag f \ket{\Psi_\Lambda}\bra{\Psi_\Lambda} g^\dag g\ket{\Psi_\Lambda}}\, \gamma^{d(f,g)},\ee
where $d(f,g)$ is the lattice distance between the supports of operators $f,g$ and $\gamma \in (0,1)$ is some constant \cite{aharonov2011}.

\section{The gap theorem for injective PEPS}\label{sec:main}
\subsection{Approximate factorization of the boundary states}

In this section, we will give sufficient conditions on the boundary states such that the martingale condition (Eqn.~(\ref{eqn:martingale})) holds for any regions $ABC$ as in Fig.~\ref{fig:ABC}.

\begin{figure}[h]
\centering
 \includegraphics[scale=0.35]{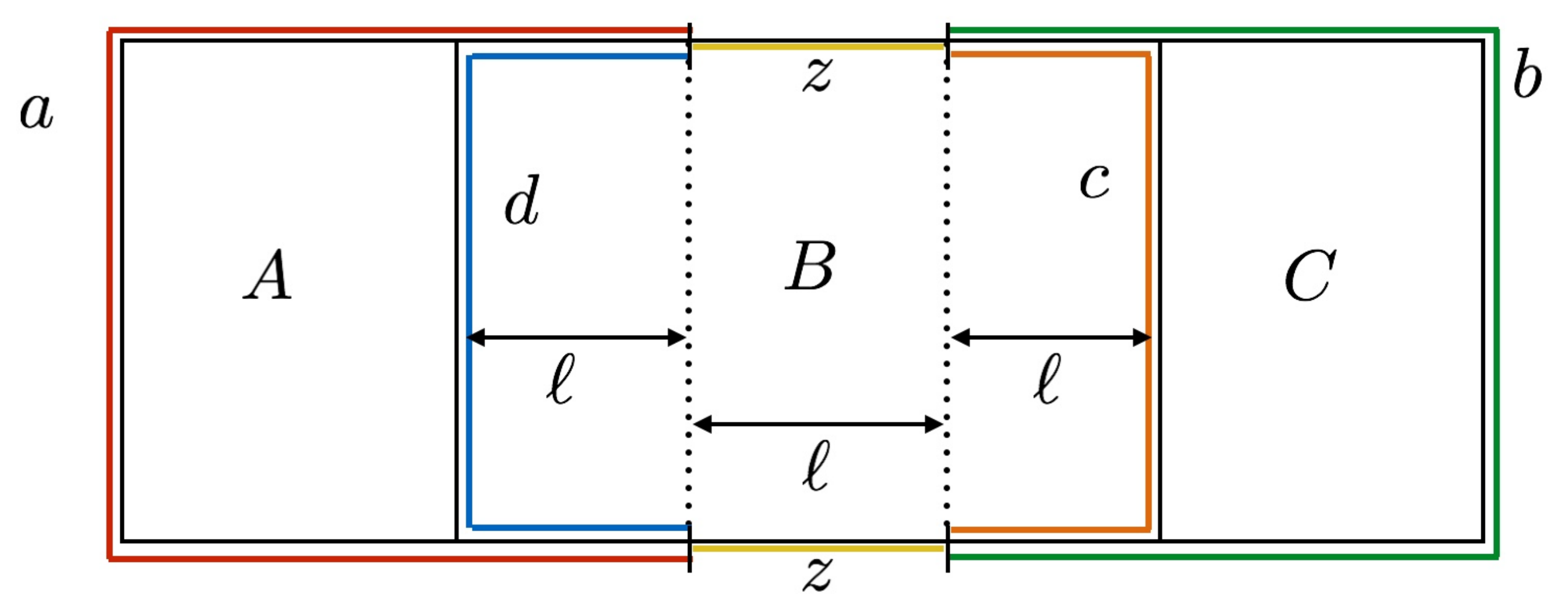}

    \caption{Regions on the boundaries are labeled with lower case roman letters. Both
      $a,d$ and $c,d$ overlap on a region of length $\ell$, which is also a lower bound for the size of
      region $z$. All possible boundaries of
      $A,B,C$ can now be reconstructed in terms of $a,b,c,d,z$, e.g. $\partial AB=a\cup
      z\cup c\equiv azc$. }
    \label{fig:bdregions}
\end{figure}

We consider a specific set of rectangles $ABC$ and assume that the width of $B$ is at least $3\ell$ for some reference length scale $\ell$. In Fig.~\ref{fig:bdregions}, we label regions of the boundaries
with lower case roman letters. It is understood that $a,d$ overlap on a region of length at least
$\ell$, and the same holds true for $c,d$, while they are all disjoint from region $z$. The
boundaries of regions $A,B,C$ can now be reconstructed in terms of $a,b,c,d,z$. For instance,
$\partial AB=a\cup z\cup c\equiv azc$.  Given the notation introduced above, we consider four
invertible matrices $\Delta_{az},\Delta_{zb},\Omega_{dz},\Omega_{zc}$, and define a set of operators
\begin{align}
\sigma_{\partial ABC}&:=\Delta_{zb}\Delta_{az},
& \sigma_{\partial B}&:=\Omega_{zc}\Omega_{dz},\nonumber\\
 \sigma_{\partial AB}&:=\Omega_{zc}\Delta_{az},
&\sigma_{\partial BC}&:=\Delta_{zb}\Omega_{dz}.\label{eqn:sigma}
\end{align}

The operators $\sigma$ are understood as approximate (unnormalized) boundary states that can be
factorized into two overlapping (possibly non-Hermitian) operators, in much the same way as Gibbs states of commuting local Hamiltonians \footnote{Note that in the case of isometric PEPS \cite{MPSrep}, in which the parent Hamiltonian is made of commuting terms, the boundary states are indeed Gibbs states of commuting interactions and have  exactly the form \eqref{eqn:sigma}. As expected, in this case $\epsilon=0$  in Theorem \ref{thm:main}.}. All operators involved are invertible. It is important to note that by construction $\sigma_{\partial BC}\sigma^{-1}_{\partial B}\sigma_{\partial AB}=\sigma_{\partial ABC}$, and $\sigma^{-1}_{\partial AB}\sigma_{\partial B}\sigma^{-1}_{\partial BC}=\sigma^{-1}_{\partial ABC}$.
The support of each operator is indicated by their subscript according to the labels in Fig.~\ref{fig:bdregions}.

\begin{definition}\label{def:qfact}
Let regions $ABC$ be as in Fig.~\ref{fig:ABC}, and let $\rho_{\partial AB},\rho_{\partial B}, \rho_{\partial BC}, \rho_{\partial ABC}$ be the boundary states of regions $AB$, $B$, $BC$ and $ABC$ respectively.
We say that the boundary states are $\epsilon$-\emph{approximately factorizable} with respect to regions $ABC$, if there exist states  $\{\sigma_{\partial AB},\sigma_{\partial B}, \sigma_{\partial BC}, \sigma_{\partial ABC}\}$ with decomposition as in \eqref{eqn:sigma}
such that the following conditions hold:
\begin{align}
  \norm{\rho^{1/2}_{\partial R}\sigma^{-1}_{\partial R}\rho^{1/2}_{\partial R}-\1} &\le
  \epsilon \quad \text{for } R \in \{ ABC, AB, BC\}; \label{QF1} \\
  \norm{\rho^{-1/2}_{\partial B}\sigma_{\partial B}\rho^{-1/2}_{\partial B}-\1} &\le
  \epsilon. \label{QF2}
\end{align}

\end{definition}

We can now state the first main result of the paper:
\begin{thm}\label{thm:main}
If the boundary states of regions $ABC$ are $\epsilon$-approximately factorizable for some $\epsilon
\le 1$, then
$\norm{P_{AB}P_{BC}-P_{ABC}}\leq 8 \epsilon$.
\end{thm}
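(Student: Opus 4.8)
The plan is to translate the bulk statement into a computation on the virtual boundary, where the factorized structure of (\ref{eqn:sigma}) produces an exact cancellation, and then to absorb the mismatch between the true boundary states $\rho$ and their surrogates $\sigma$ using the two estimates of Definition~\ref{def:qfact}. First I would record the consequences of frustration-freeness. Since $B$ separates $A$ from $C$, every edge of $E_{ABC}$ lies in $E_{AB}\cup E_{BC}$, so $\cG_{ABC}=\cG_{AB}\cap\cG_{BC}$, and the image containments give $P_{AB}P_{ABC}=P_{ABC}=P_{BC}P_{ABC}$ together with the adjoint relations. Using $P_R=W_RW_R^\dagger$ with $W_R=V_R\rho_{\partial R}^{-1/2}$ an isometry (Remark~\ref{remark:boundaries}), and $P_R=V_R\rho_{\partial R}^{-1}V_R^\dagger$, I would rewrite the three projectors and collapse the bulk. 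The key structural fact is that forming the product $P_{AB}P_{BC}$ contracts the physical indices of the shared region $B$, and this contraction produces exactly the boundary state $\rho_{\partial B}=V_B^\dagger V_B$ on the virtual legs of $B$. After conjugating the full difference $P_{AB}P_{BC}-P_{ABC}$ by the appropriate PEPS isometry (so that the operator norm is preserved), one is left with a comparison, schematically, between the glued product built from $\rho_{\partial AB}^{-1}$, $\rho_{\partial B}$, $\rho_{\partial BC}^{-1}$ and the joint state $\rho_{\partial ABC}^{-1}$, with $\rho_{\partial B}$ acting as the glue between the two overlapping regions.

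With this boundary representation in hand, I would introduce the surrogate obtained by replacing every $\rho_{\partial R}$ by the factorized $\sigma_{\partial R}$ of (\ref{eqn:sigma}). The entire point of the decomposition into the cross operators $\Delta_{az},\Delta_{zb},\Omega_{dz},\Omega_{zc}$ is that the surrogate product telescopes \emph{exactly}: using $\sigma_{\partial AB}^{-1}\sigma_{\partial B}\sigma_{\partial BC}^{-1}=\sigma_{\partial ABC}^{-1}$ (the inverse relation noted after (\ref{eqn:sigma})), the cross operators cancel in pairs and the surrogate version of $P_{AB}P_{BC}$ coincides with the surrogate version of $P_{ABC}$. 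Thus the surrogate contributes no error whatsoever, and the remaining discrepancy is precisely the accumulated cost of the four replacements $\rho_{\partial R}\to\sigma_{\partial R}$ for $R\in\{ABC,AB,BC\}$ and for the glue $R=B$.

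Finally I would estimate each replacement using the hypotheses: (\ref{QF1}) controls $\rho_{\partial R}\approx\sigma_{\partial R}$ for $R\in\{ABC,AB,BC\}$, and (\ref{QF2}) controls $\rho_{\partial B}\approx\sigma_{\partial B}$, each to within $\epsilon$. Writing the true minus surrogate difference as a telescoping sum over the four swaps and applying the triangle inequality, each term is bounded by $\epsilon$ times residual operator-norm factors of the form $\|\rho_{\partial R}^{1/2}\sigma_{\partial R}^{-1/2}\|$ and $\|\sigma_{\partial R}^{1/2}\rho_{\partial R}^{-1/2}\|$; for $\epsilon\le 1$ these are all bounded by a fixed constant, again via (\ref{QF1})--(\ref{QF2}). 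Collecting the four contributions and tracking these factors yields the stated bound $\norm{P_{AB}P_{BC}-P_{ABC}}\le 8\epsilon$.

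The hard part is the passage to the boundary in the first paragraph: verifying that conjugating $P_{AB}P_{BC}-P_{ABC}$ by the PEPS isometries and contracting the overlap $B$ reproduces \emph{precisely} the advertised product of boundary states with $\rho_{\partial B}$ in the middle, as an exact identity of operators (or of operator norms), and that the supports line up with the $a,b,c,d,z$ decomposition of Fig.~\ref{fig:bdregions} so that the surrogate telescoping applies verbatim. This requires careful bookkeeping of the maximally entangled bonds and the shared virtual edges, and is where all the geometry of the configuration is used; once it is in place, the surrogate cancellation is purely algebraic and the final estimates are routine norm manipulations.
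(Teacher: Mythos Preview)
Your proposal is correct and follows essentially the same route as the paper: introduce surrogate ``projectors'' $\tilde P_X := V_X\sigma_{\partial X}^{-1}V_X^\dagger$, check via the factorized form (\ref{eqn:sigma}) that $\tilde P_{AB}\tilde P_{BC}$ and $\tilde P_{ABC}$ differ only by the swap $\rho_{\partial B}\leftrightarrow\sigma_{\partial B}$ in the middle (this costs $(1+\epsilon)^2\epsilon$ by (\ref{QF2})), and then telescope the three remaining swaps $P_X\leftrightarrow\tilde P_X$ using (\ref{QF1}) to collect the constant $8$. One caution on your final paragraph: the $\sigma_{\partial R}$ are products of the generally non-Hermitian operators $\Delta,\Omega$, so the factors $\sigma_{\partial R}^{\pm 1/2}$ you invoke are not well-defined; the paper sidesteps this via Lemma~\ref{lemma:main}, which gives $\norm{\tilde P_X-P_X}=\norm{\rho_{\partial X}^{1/2}\sigma_{\partial X}^{-1}\rho_{\partial X}^{1/2}-\1}$ and $\norm{\tilde P_X}=\norm{\rho_{\partial X}^{1/2}\sigma_{\partial X}^{-1}\rho_{\partial X}^{1/2}}$ directly, never splitting $\sigma$.
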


By Theorem \ref{thm:martingale}, we get that if for any sufficiently large regions $ABC$, where $B$
has diameter $\ell$, the boundary states are $\epsilon(\ell)$-approximately factorizable, for a sufficiently fast decaying  $\epsilon(\ell)$, then the parent Hamiltonian of the PEPS is gapped.

In order to prove Theorem \ref{thm:main}, we first start by showing that the measure of distance
between $\rho_{\partial X}$ and $\sigma_{\partial X}$(for $X$ being $AB$, $BC$, or $ABC$) is the ``correct'' one for the application we need.
\begin{lemma}\label{lemma:main}
For a region $X \subset \Lambda$, let $\rho_{\partial X}$ be the boundary state and $\sigma_{\partial X}$ another invertible operator on $\cH_{\partial X}$. Let $\tilde P_X = V_X \sigma_{\partial X}^{-1} V_X^\dag$. Then
\begin{align}
\norm{\tilde P_X} &=
                 \norm{\rho_{\partial X}^{1/2} \sigma_{\partial X}^{-1} \rho_{\partial X}^{1/2}} \label{eq:lemma-main-1}, \\
\norm{P_X - \tilde P_X} &=
                       \norm{\rho_{\partial X}^{1/2} \sigma_{\partial X}^{-1} \rho_{\partial X}^{1/2} - \1} \label{eq:lemma-main-2}.
\end{align}
\end{lemma}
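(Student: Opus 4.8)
The plan is to reduce the entire statement to the elementary fact that conjugation by an isometry preserves the operator norm, using the partial isometry $W_X = V_X \rho_{\partial X}^{-1/2}$ from Remark~\ref{remark:boundaries}. For an injective PEPS this $W_X$ satisfies $W_X^\dag W_X = \1$ on $\cH_{\partial X}$ and $W_X W_X^\dag = P_X$, which is exactly the structure I want to exploit.

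First I would rewrite both quantities in terms of $W_X$. From $W_X = V_X \rho_{\partial X}^{-1/2}$ one has $V_X = W_X \rho_{\partial X}^{1/2}$, and substituting into the definition $\tilde P_X = V_X \sigma_{\partial X}^{-1} V_X^\dag$ gives
\[
\tilde P_X = W_X\, \big(\rho_{\partial X}^{1/2}\sigma_{\partial X}^{-1}\rho_{\partial X}^{1/2}\big)\, W_X^\dag = W_X\, M\, W_X^\dag ,
\]
where I abbreviate $M := \rho_{\partial X}^{1/2}\sigma_{\partial X}^{-1}\rho_{\partial X}^{1/2}$, an operator on $\cH_{\partial X}$. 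Likewise, writing $P_X = W_X W_X^\dag = W_X\, \1\, W_X^\dag$, I obtain $P_X - \tilde P_X = W_X(\1 - M)W_X^\dag$. Thus both left-hand sides of \eqref{eq:lemma-main-1} and \eqref{eq:lemma-main-2} are norms of operators of the form $W_X N W_X^\dag$, with $N = M$ and $N = \1 - M$ respectively.

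The one computational ingredient I would isolate as a sub-lemma is: for an isometry $W$ (i.e.\ $W^\dag W = \1$) and any bounded operator $N$, one has $\norm{W N W^\dag} = \norm{N}$. The inequality $\norm{W N W^\dag}\le \norm{N}$ is submultiplicativity together with $\norm{W} = \norm{W^\dag} = 1$; the reverse inequality follows from the identity $N = (W^\dag W)\,N\,(W^\dag W) = W^\dag (W N W^\dag) W$, which uses $W^\dag W = \1$ twice, whence $\norm{N}\le \norm{W N W^\dag}$. Applying this with $N = M$ yields \eqref{eq:lemma-main-1}, and with $N = \1 - M$ yields \eqref{eq:lemma-main-2}.

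There is no genuine obstacle here; the argument is a short algebraic manipulation plus the norm-preservation fact. The only point requiring care is that the clean identity $W_X^\dag W_X = \1$ holds on all of $\cH_{\partial X}$ precisely because the PEPS is injective, which is the standing assumption of this section. In the non-injective case $\rho_{\partial X}$ is supported on $(\ker V_X)^\perp$ and $W_X$ is merely a partial isometry, so $M$ and $\1 - M$ would have to be read as operators on $\supp \rho_{\partial X}$, where $W_X$ restricts to a genuine isometry; the same computation then goes through verbatim with $\1$ denoting the identity on that subspace.
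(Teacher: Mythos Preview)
Your proof is correct and follows essentially the same route as the paper: rewrite $\tilde P_X = W_X M W_X^\dag$ and $P_X - \tilde P_X = W_X(\1 - M)W_X^\dag$ via $V_X = W_X \rho_{\partial X}^{1/2}$, then use that $W_X$ is an isometry to conclude. Your explicit justification of $\norm{WNW^\dag} = \norm{N}$ and the remark on the non-injective case are welcome additions, but there is no substantive difference in strategy.
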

\begin{proof}
We first recall that $P_X = V_X \rho_{\partial X}^{-1} V_X^\dag = W_X W_X^\dag$.
Therefore, we can rewrite $\tilde P_X$ as follows:
\[ \tilde P_X = V_X \sigma_{\partial X}^{-1} V_X^\dag =
V_X \rho_X^{-1/2} \left(\rho_X^{1/2} \sigma_X^{-1} \rho_X^{1/2} \right) \rho_X^{-1/2} V_X^\dag =
W_X \left(\rho_X^{1/2} \sigma_X^{-1} \rho_X^{1/2} \right) W_X^\dag.
 \]
Then Eq. \eqref{eq:lemma-main-1} follow immediately from the fact that $W_X$ is an isometry, and similarly does Eq. \eqref{eq:lemma-main-2}:
\[ \norm{P_X - \tilde P_X} = \norm{W_X \left(\1 - \rho_X^{1/2} \sigma_X^{-1} \rho_X^{1/2} \right) W_X^\dag }
= \norm{ \1 - \rho_X^{1/2} \sigma_X^{-1} \rho_X^{1/2}}.
 \]
\end{proof}

\begin{figure}[h]
  \centering
  \includegraphics[scale=0.5]{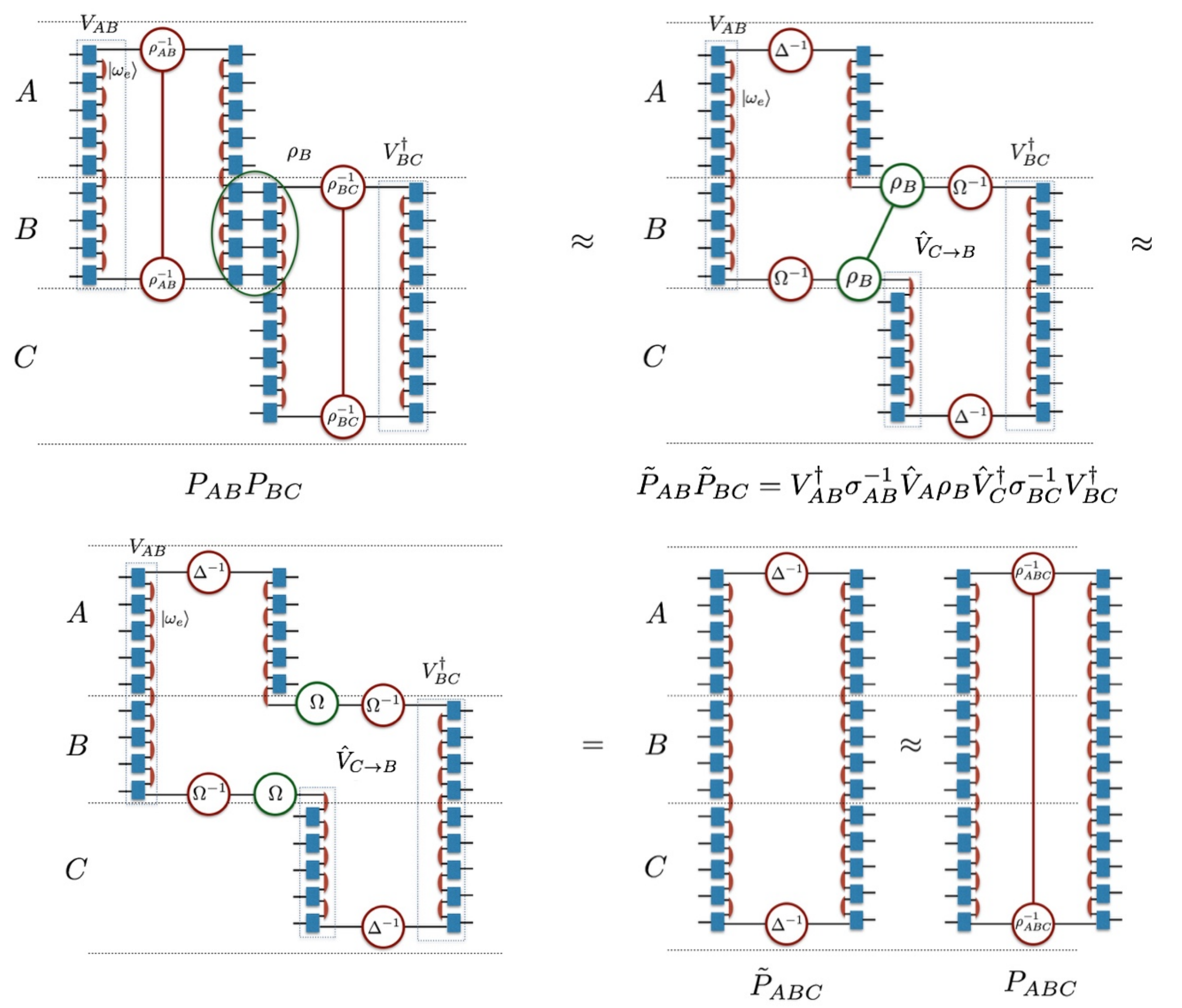}
    \caption{Sketch of the proof in 1D. In the first step we simply write out the product of $P_{AB}P_{BC}$ in terms of the operators $V$ and the boundary states $\rho$. Then, using Eqn. \eqref{QF2}, we approximate the boundary states of $AB$ and $BC$ by factorized operators $(\Omega,\Delta)$, giving us $\tilde{P}_{AB}\tilde{P}_{BC}$. In the next step we approximate the boundary state of $B$ by the factorized from Eqn.~(\ref{QF1}). Then the two ends of the respective boundaries cancel out, and we are left with $\tilde{P}_{ABC}$. In the final step we approximate the factorized boundary state on $ABC$ with the real boundary state using Eqn.~(\ref{QF1}) again.}
    \label{fig:sketch}
\end{figure}

We are now ready to prove Theorem \ref{thm:main}.
\begin{proof}[Proof of Theorem \ref{thm:main}]
A sketch of the proof is provided in Fig.~\ref{fig:sketch} for one dimensional systems, but provides the necessary intuition to follow the steps below.
We  start by setting some notation: given the regions $A$ and $B$, we denote by $\hat V_{A\to B}$ a
modified version of $V_A$, where we have transposed from inputs to outputs the sites corresponding to
the boundary shared between $A$ and $B$. $\hat V_{A\to B}$ then is a linear map from $\cH_{\partial
  A\setminus \partial B}$ to $\cH_{A} \otimes \cH_{\partial A \cap \partial B}$, and we can write
$V_{ABC} = V_{BC} \hat V_{A\to B}$ and $V_{AB} = V_B \hat V_{A\to B}$. We will do the same to define $\hat V_{C\to B}$.
Using such notation, we see that
\[ \tilde P_{AB} \tilde P_{BC} =
V_{AB} \sigma_{\partial AB}^{-1} \hat V_{A\to B}^\dag V_B^\dag V_B \hat V_{C\to B} \sigma^{-1}_{\partial BC} V_{BC}^\dag =
V_{AB} \sigma_{\partial AB}^{-1} \hat V_{A\to B}^\dag \rho_{\partial B} \hat V_{C\to B} \sigma^{-1}_{\partial BC} V_{BC}^\dag,\]

 while using  \eqref{eqn:sigma} we have that:
\begin{eqnarray*}
\tilde P_{ABC} &=& V_{ABC} \sigma_{\partial ABC}^{-1} V_{ABC}^\dag \\
&=& V_{AB} \hat V_{C\to B} \Delta_{az}^{-1}  \Delta_{zb}^{-1} \hat V_{A\to B}^\dag V_{BC}^\dag \\
&=&V_{AB}  \Delta_{az}^{-1} \hat V_{A\to B}^\dag    \hat V_{C\to B}^\dag  \Delta_{zb}^{-1} V_{BC}^\dag  \\
&=&V_{AB}  \Delta_{az}^{-1} \Omega_{zc}^{-1} \hat V_{A\to B}^\dag  \Omega_{zc} \Omega_{dz}  \hat V_{C\to B}^\dag  \Omega_{dz}^{-1} \Delta_{zb}^{-1} V_{BC}^\dag \\
&=& V_{AB} \sigma_{\partial AB}^{-1} \hat V_{A\to B}^\dag \sigma_{\partial B} \hat V_{C\to B} \sigma^{-1}_{\partial BC} V_{BC}^\dag.
\end{eqnarray*}
We can therefore bound their difference as follows
\begin{align*}
 \norm{\tilde P_{AB} \tilde P_{BC} - \tilde P_{ABC}} =
 \norm{ V_{AB} \sigma_{\partial AB}^{-1} \hat V_{A\to B}^\dag \left(\rho_{\partial B}-\sigma_{\partial B}\right) \hat V_{C\to B} \sigma^{-1}_{\partial BC} V_{BC}^\dag}  &\\
 =
 \norm{ V_{AB}  \sigma_{\partial AB}^{-1} \hat V_{A\to B}^\dag \rho_{\partial B}^{1/2}\left(\1- \rho_{\partial B}^{-1/2}\sigma_{\partial B} \rho_{\partial B}^{-1/2}\right)  \rho_{\partial B}^{1/2}\hat V_{C\to B} \sigma^{-1}_{\partial BC} } &\\
\le \norm{ V_{AB}  \sigma_{\partial AB}^{-1} \hat V_{A\to B}^\dag \rho_{\partial B}^{1/2} } \,
\norm{ V_{BC} \sigma_{\partial BC}^{-1} \hat V_{C\to B}^\dag \rho_{\partial B}^{1/2}} \,
\norm{ \1- \rho_{\partial B}^{-1/2}\sigma_{\partial B} \rho_{\partial B}^{-1/2}}&.
\end{align*}
In order to bound the first two terms in the r.h.s. of the last equation, we observe that
\begin{multline*}
\left(V_{AB}  \sigma_{\partial AB}^{-1} \hat V_{A\to B}^\dag \rho_{\partial B}^{1/2}\right)
\left(V_{AB}  \sigma_{\partial AB}^{-1} \hat V_{A\to B}^\dag \rho_{\partial B}^{1/2}\right)^\dag
\\ =
V_{AB} \sigma_{\partial AB}^{-1} \hat V_{A\to B}^\dag \rho_{\partial B} \hat V_{A\to B} \sigma_{\partial AB}^{-1} V_{AB}^\dag = \tilde P_{AB}^2,
\end{multline*}
which implies that
\[ \norm{ V_{AB}  \sigma_{\partial AB}^{-1} \hat V_{A\to B}^\dag \rho_{\partial B}^{1/2} } = \norm{ \tilde P_{AB}^2 }^{1/2} \le \norm{ \tilde P_{AB} } .\]
Since the same holds for the terms on $BC$, we have proven that
\begin{equation} \norm{\tilde P_{AB} \tilde P_{BC} - \tilde P_{ABC} } \le \norm{ \tilde P_{AB} } \, \norm{\tilde P_{BC} } \, \norm{ \1- \rho_{\partial B}^{-1/2}\sigma_{\partial B} \rho_{\partial B}^{-1/2} }.
\end{equation}

We can now apply Lemma \ref{lemma:main} and Eqs.~\eqref{QF1} and \eqref{QF2}, to get
\[ \norm{\tilde P_{AB} \tilde P_{BC} - \tilde P_{ABC} } \le (1+\epsilon)^2 \epsilon ,\]
since $\norm{ \tilde P_{AB} } \le \norm{ \tilde P_{AB} - P_{AB} } + \norm{ P_{AB} } \le \epsilon + 1$.

To conclude the proof, we just have to resort one last time to Lemma \ref{lemma:main}, as we have that:
\bea
\norm{P_{AB}P_{BC}-P_{ABC}}\leq&&\norm{\tilde{P}_{AB}\tilde{P}_{BC}-\tilde{P}_{ABC}}+\norm{\tilde{P}_{AB}}~\norm{P_{BC}-\tilde{P}_{BC}}\nonumber\\
&+&\norm{P_{AB}-\tilde{P}_{AB}}~\norm{P_{BC}}+\norm{\tilde{P}_{ABC}-P_{ABC}} \\
\le && (1+\epsilon)^2 \epsilon + (1+\epsilon)\epsilon + 2 \epsilon \le 8 \epsilon.
\nonumber\eea
\end{proof}

\subsection{Matrix Product States}
As is often the case, in one dimension the situation becomes particularly simple since the boundary
is zero dimensional and has two spatially separated ends. We will show that the boundary state of an
injective
MPS is approximately factorizable when the region is long enough. Consider a translationally invariant MPS on a
chain of length $N$:
\begin{equation*}
  \ket{{\rm MPS}}= \sum_{j_1,\cdots,j_N = 1}^d\tr{T_{j_1}\cdots T_{j_N}}\ket{j_1,\cdots,j_N},
\end{equation*}
where $\{T_{j}\}_{j=1}^d$ is a collection of $d$ matrices of size $D\times D$.

Many of the properties of an MPS can be succinctly described by the transfer operator
$\bE(f)=\sum_{j} T_{j} f T_{j}^\dag$, which maps virtual bonds to virtual bonds by contracting one
single physical bond (see Fig.~\ref{fig:MPS}). The boundary state $\rho_{\partial A}$ of a region $A$ of length $m$ is the Choi matrix of the
$m$-th power of $\bE$:
\begin{equation}
  \rho_{\partial A} = \bE^m \otimes \id ( \proj{\Omega}), \quad \ket\Omega = \sum_{i=1}^D \ket{i,i}.
\end{equation}
Note that by construction $\bE$ is a completely positive map, so its Choi matrix is a positive operator.

\begin{figure}[ht]
\centering
  \includegraphics[scale=0.4]{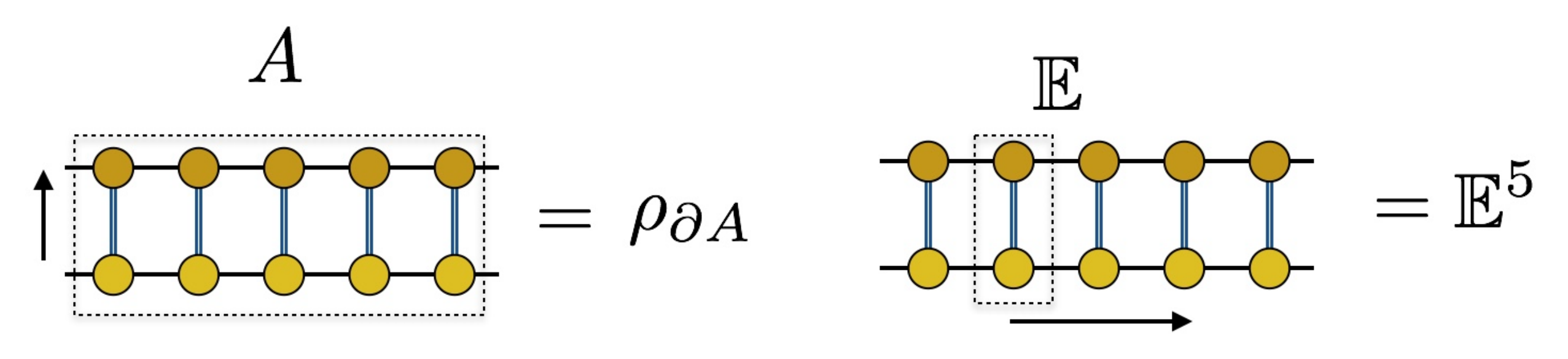}
    \caption{In one dimension, the boundary state can be reinterpreted as a power of the transfer operator when we flip the indices from reading up-down to left-right.}
    \label{fig:MPS}
\end{figure}

If we replace the matrices $T_{j}$ with matrices $Z T_{j} Z^{-1}$, for some invertible matrix
$Z$, it is easy to see that the state $\ket{\rm MPS}$ is left invariant \cite{MPSrep} (although the
boundary state $\rho_{\partial A}$ will not: it will be mapped to
$(Z\otimes Z^{-1})^\dag \rho_{\partial A}(Z \otimes Z^{-1})$). This operation is usually referred to
as ``choosing the gauge'' of the MPS representation. It is a well known fact that, in the case of
injective MPS, there is a choice of gauge that makes the transfer operator a trace-preserving map
\cite{MPSrep}. This allows us to prove the following property of the boundary states of an injective
MPS.
\begin{lemma}\label{lemma:mps-state-convergence}
  Given an injective MPS, there is a representation with the following property: there exist a full
  rank state $\sigma \in \cB(\bC^d)$, a positive constant $c$ and a $\gamma \in (0,1)$ such that,
  for every chain $A$ of length $m$ it holds that
  \begin{equation}\label{eq:mps-state-convergence}
    \norm{\rho_{\partial A} - \sigma \otimes \1}_1 \le c \gamma^{m},
  \end{equation}
  where $\rho_{\partial A}$ is the boundary state on $A$.
\end{lemma}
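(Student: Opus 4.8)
The plan is to exploit the spectral structure of the transfer operator $\bE$, which for an injective MPS is primitive. First I would fix the gauge so that $\bE$ becomes trace-preserving, i.e. a quantum channel; as recalled in the text, such a gauge always exists for injective MPS \cite{MPSrep}. In this gauge the dual map $\bE^*$ fixes $\1$, and the standard characterization of injective MPS \cite{MPSrep} guarantees that $\bE$ is primitive. Primitivity gives a unique fixed point $\sigma$ of $\bE$, which is a full-rank density matrix (normalized so that $\tr\sigma = 1$), together with a spectral gap: every eigenvalue of $\bE$ other than $1$ has modulus strictly smaller than $1$.

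Next I would rewrite the boundary state as a Choi matrix and identify its limit. Writing $\proj{\Omega} = \sum_{i,k}\ket{i}\bra{k}\otimes\ket{i}\bra{k}$ and applying $\bE^m\otimes\id$ gives $\rho_{\partial A} = \sum_{i,k}\bE^m(\ket{i}\bra{k})\otimes\ket{i}\bra{k}$. I then introduce the rank-one superoperator $\cE_\infty(\cdot) := \sigma\,\tr(\cdot)$, which is precisely the spectral projection of $\bE$ onto the eigenvalue $1$ (its right eigenvector is $\sigma$ and its left eigenvector is the functional $\tr(\cdot)$, by trace-preservation). A direct computation shows $(\cE_\infty\otimes\id)(\proj\Omega) = \sigma\otimes\1$, so the claimed limit is exactly $(\cE_\infty\otimes\id)(\proj\Omega)$, and
\[ \rho_{\partial A} - \sigma\otimes\1 = \big((\bE^m - \cE_\infty)\otimes\id\big)(\proj\Omega). \]

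The core of the argument is the exponential decay of $\bE^m - \cE_\infty$. Here I would use the identity $\bE^m - \cE_\infty = (\bE - \cE_\infty)^m$, which follows from $\bE\,\cE_\infty = \cE_\infty\,\bE = \cE_\infty = \cE_\infty^2$. The spectral radius of $\bE - \cE_\infty$ equals the second-largest eigenvalue modulus $\mu < 1$ of $\bE$, so by the spectral radius (Gelfand) formula, for any $\gamma\in(\mu,1)$ there is a constant $C$ with $\norm{(\bE-\cE_\infty)^m} \le C\gamma^m$ in the superoperator norm. Applying $\bE^m - \cE_\infty$ to the fixed-dimension operator $\proj\Omega$ and passing to the trace norm then yields $\norm{\rho_{\partial A} - \sigma\otimes\1}_1 \le c\gamma^m$, where the dimensional factor incurred in bounding the trace norm of a Choi matrix by the superoperator norm has been absorbed into $c$.

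The main obstacle will be the spectral-gap step. It requires the precise equivalence between injectivity of the MPS and primitivity of $\bE$, so that the entire subdominant spectrum is confined strictly inside the unit disk, and the bound $\norm{(\bE-\cE_\infty)^m}\le C\gamma^m$ must tolerate possible nontrivial Jordan blocks in this subdominant part. Such blocks contribute polynomial prefactors $m^{k}\mu^{m}$, which are harmless because they can be absorbed by choosing $\gamma$ slightly larger than $\mu$; making this absorption quantitative, together with the passage from the superoperator norm to the trace norm of the Choi matrix, is the only delicate point and is where I expect the work to lie.
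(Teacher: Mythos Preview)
Your proposal is correct and follows essentially the same route as the paper: fix the gauge so that $\bE$ is a trace-preserving primitive channel, identify the Choi matrix of the limiting rank-one map $\cE_\infty$ as $\sigma\otimes\1$, and use the spectral gap of $\bE$ (handling Jordan blocks) to obtain exponential convergence, then pass from the superoperator norm to the trace norm of Choi matrices. The only cosmetic difference is that the paper phrases the convergence step via the diamond norm and the inequality $\norm{\tau_T-\tau_S}_1\le\norm{T-S}_{1\to1,cb}$, whereas you use the identity $\bE^m-\cE_\infty=(\bE-\cE_\infty)^m$ together with Gelfand's formula and absorb the dimensional factor at the end; both arguments are equivalent in this finite-dimensional setting.
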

\begin{proof}
  The MPS is injective on a
  chain of length $m$ if and only if the set
  $\{T_{j_1} \cdots T_{j_m}\, |\, j_1, \dots, j_m = 1 \dots d\}$ spans the full matrix algebra
  $\cB(\bC^{D})$: but since that is exactly the set of Kraus operator of $\bE^m$,
  it implies that $\bE$ is primitive (a primitive linear map is an
  irreducible positive map with trivial peripheral spectrum) \cite{WolfNotes}. The adjoint map
  $\bE^*(g) = \sum_j T_{j}^\dag g T_{j}$ will share the same property, so that we can find a
  positive-definite operator $y$ such that $\bE^*(y) = \lambda y$ where $\lambda>0$ is the largest
  eigenvalue of $\bE$, which is also simple. We can then change the matrices in the MPS
  representation, replacing each $T_{j}$ with $\frac{1}{\lambda} y^{1/2} T_{j} y^{-1/2}$, without
  changing the state $\ket{\rm MPS}$ (apart from its normalization). With this choice of gauge,
  the transfer operator is trace preserving, since $\bE^*(\1) = \1$. Moreover, the sequence of linear
  maps $\bE^m$ will converge, as $m$ goes to infinity, to another completely positive, trace
  preserving map $\bE^\infty$, which is of the form $\bE^\infty(f) = \tr{f} \sigma$, for some fixed
  $\sigma >0$. The Choi matrix of $\bE^\infty$ is given by

  \begin{eqnarray}
    \tau_{\bE^\infty} &=& \lim_{m\rightarrow\infty}\bE^m \otimes \id (\ket{\Omega}\bra{\Omega})\nonumber \\
                      &=& \sum_{jk} \lim_{m\rightarrow\infty}\bE^m(\ket{j}\bra{k}) \otimes \ket{j}\bra{k}\nonumber\\
                      &=& \sum_j \sigma\otimes \ket{j}\bra{j}=\sigma\otimes\1\nonumber.
  \end{eqnarray}

  Since the dimension $D$ is fixed, by the Jordan decomposition there exists some $\gamma \in (0,1)$ and some constant $c>0$ such that
  \[ \norm{\bE^m - \bE^\infty}_{1\to 1, cb} \le c\gamma^m, \]
  where $\norm{\cdot}_{1\to 1,cb}$ is the completely bounded $1\to 1$ norm, also known as the diamond norm.
  Furthermore, the difference between the Choi matrix of two channels is bounded by their difference in diamond norm $\norm{\tau_T-\tau_S}_1\leq \norm{T-S}_{1\rightarrow 1,cb}$.
  This is obvious since $\norm{T-S}_{1\to 1,cb}=\sup_{\rho}\norm{T\otimes\id(\rho)-S\otimes\id(\rho)}_1$.
  By this argument, we get that for this specific choice of gauge, if $A$ is a chain of length $m$, it
  holds that
  \[  \norm{\rho_{\partial A}-\sigma\otimes\1}_1\leq c \gamma^m.\]
\end{proof}

Since we are working in finite dimension (every $\rho_{\partial A}$ lives in the same space
$\cB(\bC^D)$ independently of $m$), the fact that $\rho_{\partial A}$ converges in the trace norm to
a product state is sufficient to prove that is approximately factorizable in the sense of equations \eqref{QF1}
and \eqref{QF2}. To show this, we first need the following lemma.

\begin{lemma}\label{lemma:comparing_norms}
  If $X, Y$ are positive operators, and $Y$ is invertible,
  it holds that:
  \begin{equation}\label{eq:comparing_norms}
    \norm{X^{1/2} Y^{-1} X^{1/2} - \1 } \le Y_{\min}^{-1} \norm{X -Y},
  \end{equation}
  where $Y_{\min}$ is the smallest eigenvalue of $Y$.
\end{lemma}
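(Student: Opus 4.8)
The plan is to reduce the asymmetric expression $X^{1/2}Y^{-1}X^{1/2}$ to a symmetric one, and then exploit that $AB$ and $BA$ have the same spectrum. First I would introduce the (square) operator $A := X^{1/2}Y^{-1/2}$. Since $X^{1/2}$ and $Y^{-1/2}$ are Hermitian (as powers of positive operators, with $Y^{-1/2}$ well defined because $Y$ is invertible), one computes directly that $A A^\dag = X^{1/2}Y^{-1}X^{1/2}$ and $A^\dag A = Y^{-1/2}XY^{-1/2}$. Thus the quantity we must bound is $\norm{AA^\dag - \1}$, while the quantity we know how to bound is $\norm{A^\dag A - \1}$.

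The key step is to argue that these two norms coincide. Because $A$ and $A^\dag$ are square matrices of the same size, $AA^\dag$ and $A^\dag A$ have identical characteristic polynomials, and hence identical eigenvalues counted with multiplicity, \emph{including} the multiplicity of the eigenvalue $0$. Both operators are Hermitian, so applying the function $t \mapsto t-1$ through the functional calculus, and using that the operator norm of a Hermitian operator minus the identity equals $\max_i \abs{\lambda_i - 1}$ over its spectrum, I conclude $\norm{AA^\dag - \1} = \norm{A^\dag A - \1}$.

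Finally I would bound the symmetric side directly: since $A^\dag A - \1 = Y^{-1/2}(X-Y)Y^{-1/2}$, submultiplicativity gives $\norm{A^\dag A - \1} \le \norm{Y^{-1/2}}^2 \norm{X - Y} = Y_{\min}^{-1}\norm{X-Y}$, using $\norm{Y^{-1/2}} = Y_{\min}^{-1/2}$. Chaining the two displays yields the claim. The one point requiring care is that $X$ need not be invertible, so one cannot simply conjugate $X^{1/2}Y^{-1}X^{1/2}-\1$ by $X^{-1/2}$ to symmetrize it; the $AB$-versus-$BA$ spectral identity is precisely what circumvents this obstacle, since in finite dimension it matches the kernels of $AA^\dag$ and $A^\dag A$ automatically and makes the two norms exactly equal rather than merely comparable.
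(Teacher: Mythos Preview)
Your proof is correct and rests on the same core fact the paper uses---that $MN$ and $NM$ share the same spectrum---but you apply it through a different and somewhat cleaner decomposition. You set $A = X^{1/2}Y^{-1/2}$ and compare the two Hermitian operators $AA^\dag = X^{1/2}Y^{-1}X^{1/2}$ and $A^\dag A = Y^{-1/2}XY^{-1/2}$, obtaining an exact equality $\norm{AA^\dag - \1} = \norm{A^\dag A - \1}$; the bound then follows by factoring $A^\dag A - \1 = Y^{-1/2}(X-Y)Y^{-1/2}$. The paper instead rewrites $X^{1/2}Y^{-1}X^{1/2} - \1$ as $X^{1/2}(Y^{-1} - X^{-1})X^{1/2}$ (with $X^{-1}$ taken on the support of $X$), uses the spectral identity to pass to the non-normal product $(Y^{-1}-X^{-1})X = Y^{-1}(X-Y)$, and then invokes the inequality between spectral radius and operator norm. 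Your route avoids introducing $X^{-1}$ altogether, which sidesteps a delicate point when $X$ is not invertible (since then $X^{1/2}X^{-1}X^{1/2}$ is only the support projection of $X$, not $\1$), and your intermediate step is an equality rather than an inequality. Both arguments arrive at the same final bound.
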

\begin{proof}
  First of all, we rewrite $X^{1/2} Y^{-1} X^{1/2} - \1$ as $X^{1/2}(Y^{-1} - X^{-1})X^{1/2}$, where
  the inverse of $X$ is taken on its support. $X^{1/2}(Y^{-1} - X^{-1})X^{1/2}$ is a normal operator with the same spectrum as
  $(Y^{-1} - X^{-1})X$. Since the spectral radius of an operator is a lower bound to its operator
  norm, and for normal operators equality holds, we have that:
  \[ \norm{X^{1/2}(Y^{-1} - X^{-1})X^{1/2}} \le \norm{(Y^{-1} - X^{-1})X}.\]
  (See also \cite[Proposition IX.1.1]{Bhatia1997}).
  We can now rewrite $(Y^{-1} - X^{-1})X$ as $Y^{-1}(X-Y)$, so that we obtain
  \[ \norm{X^{1/2} Y^{-1} X^{1/2} - \1 } \le \norm{Y^{-1}(X-Y)} \le Y_{\min}^{-1} \norm{X -Y}. \]
\end{proof}

To conclude that $\rho_{\partial A}$ is approximately factorizable, we can then apply Lemma
\ref{lemma:comparing_norms} to $\rho_{\partial A}$ and $\sigma \otimes \1$, and we obtain that
\[
  \norm{\rho_{\partial A}^{1/2}(\sigma^{-1} \otimes \1) \rho_{\partial A}^{1/2} - \1 }
  \le \sigma_{\min}^{-1} \norm{\rho_{\partial A} - \sigma \otimes \1}_1 \le c \sigma_{\min}^{-1} \gamma^{m},
\]
since we can upper bound the operator norm with the trace norm. Therefore, it will decay
exponentially in the length of $A$.
We can more easily bound $\norm{\rho_{\partial A}^{-1/2}(\sigma^{1/2} \otimes \1) \rho_{\partial A}^{-1/2} -\1
}$ by $r_{\partial A}^{-1} \norm{\rho_{\partial A} - \sigma}_1$, where $r_{\partial A}$ is the
minimal eigenvalue of $\rho_{\partial A}$. We can then observe that $r_{\partial A}$
is lower bounded by $\sigma_{\min} - \norm{\rho_{\partial A} - \sigma}$. Therefore, if $m$ is
sufficiently large, we can assume that $r_{\partial A}$ is larger than $\sigma_{\min}/2$, so that
we also have
\[ \norm{\rho_{\partial A}^{-1/2}(\sigma^{1/2} \otimes \1) \rho_{\partial A}^{-1/2} -\1}
  \le c^\prime \sigma_{\min}^{-1} \gamma^{m},\]
for some positive constant $c^\prime$.

\section{The gap theorem for non-injective PEPS}\label{sec:topo}
All of the results so far have been obtained for injective PEPS. Injectivity is essential in guaranteeing that the boundary state $\rho_{\partial X}$ is full rank and hence invertible. In the MPS setting, we also saw that injectivity is sufficient to show approximate factorization.
However, injective PEPS exclude any description of topologically ordered phases. In this section we
consider extensions of injectivity (G-injectivity and MPO-injectivity) that allow for the
description of most known topological phases of gapped spin systems. We show that with a slight
modification of definitions, we can extend Theorem \ref{thm:main} to the setting of
MPO-injectivity.

\subsection{MPO-injectivity}

Injective PEPS can be seen as perturbations of trivial short-range entangled states (products of nearest-neighbor maximally entangled states) via Eqn.~(\ref{eqn:PEPS}). In the same way,  more complicated states, such as the toric code or other topologically-ordered states, can be taken as base states to be perturbed, giving rise to different classes of PEPS.

A first construction is to take as base state  $|{\rm Base}\rangle$ the so called $G$-isometric PEPS \cite{schuch2010}, for a given finite group $G$. They correspond to Kitaev's quantum double models $D(G)$. In particular for the group $G=\mathbb{Z}_2$, $|{\rm Base}\rangle$ is just the toric code. $G$-isometric PEPS  are defined by fixing the bond and physical dimensions respectively to $D=|G|$, $d=D^{\otimes 4}$ and by choosing as PEPS tensor $T_v=\frac{1}{|G|}\sum_{g\in G}L_g^{\otimes 4}$ in Eqn.~(\ref{eqn:PEPS}), with  $L_g$ being the left regular representation of $G$.

It was shown in \cite{cirac2017} that the parent Hamiltonian of a $G$-isometric PEPS is commuting, and that the boundary state $\rho_{\partial A}$ of the PEPS in a region $A\subset\Lambda$, as defined previously in the text (Figure 2), is exactly the projector $J_{\partial A}:=\frac{1}{|G|}\sum_{g\in G}L_g^{\otimes |\partial A|}$. Note that $J_{\partial A}$ can be written as a translational invariant Matrix Product Operator (MPO) with bond dimension $|G|$:

$$J_{N_\text{ sites}}=\frac{1}{|G|}\sum_{g_1, \ldots g_N\in G} {\rm tr}(B_{g_1}\cdots B_{g_N})L_{g_1}\otimes \cdots \otimes L_{g_N}.$$
For that, it is enough to take $B_g=|g\rangle\langle g|$.

Perturbed PEPS of the form $\bigotimes_{v\in \Lambda}Y_v\ket{{\rm Base}_\Lambda}$, with $Y_v$ invertible, where $|{\rm Base}_\Lambda\rangle$ is a $G$-isometric PEPS are called $G$-injective.
The construction of $G$-injective PEPS was generalized first by Buerschaper in \cite{buerschaper2014} and later by Sahinoglu et al in \cite{MPOinjective} considering as initial base state all Levin-Wen string-net states \cite{levin2005}, which are believed to cover all possible 2D non-chiral topological phases. The starting point of the construction by Sahinoglu et al in \cite{MPOinjective} (see \cite{bultinck2017}) is a translational invariant MPO $J_N$ which is a projector for all system size $N$. As shown in \cite{MPOinjective, bultinck2017}, by invoking the fundamental theorem of Matrix Product Vectors \cite{cirac2017}, this induces an algebra of MPO which in turn gives rise to a fusion category. The  state $|{\rm Base}_\Lambda\rangle$ is defined in the same way as for the $G$-injective case: $T_v$ is given by the MPO-projector acting on four sites $J_4$. The resulting state  $\bigotimes_{v\in \Lambda}Y_v\ket{{\rm Base}_\Lambda}$ is called an \emph{MPO-injective PEPS}.

\subsection{Approximate factorization for MPO-injective PEPS}
Some simple properties of MPO-injective PEPS will be sufficient to extend the results of Theorem \ref{thm:main}.
\begin{definition}
  Given a non-injective PEPS, for each region $A\subset \Lambda$ denote by $J_{\partial A} \in \cB(\cH_{\partial A})$ the projector on the complement of the kernel of $\rho_{\partial A}$.
\end{definition}
Note that the kernel of $\rho_{\partial A}$ coincides with the kernel of $V_A$ (see Remark \ref{remark:boundaries}), so that we trivially have  $V_A = V_A J_{\partial A}$.
We need a compatibility condition between the $J_{\partial A}$ acting on overlapping regions. We will use the modified map $\hat V_{A\to B}$ defined in the proof of Theorem \ref{thm:main}.
\begin{figure}[h]
  \centering
  \includegraphics[scale=0.40]{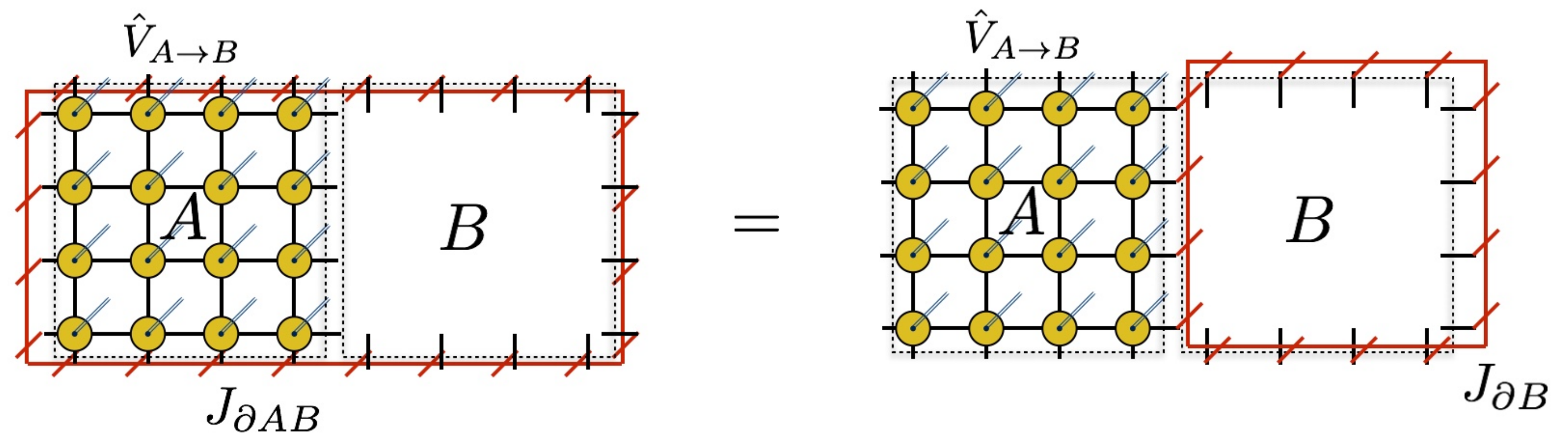}
  \caption{Setup for pulling through condition. The MPO projector $J_{\partial B}$ is indicated in red, and acts on the boundary of $AB$ from the right (bottom in figure), and acts on the boundary of $B$ from the left (top in the figure). }
  \label{fig:topo}
\end{figure}
\begin{definition}[Pulling-through condition]
  A PEPS satisfies the \emph{pulling-through} condition \cite{MPOinjective} if for every pair of contiguous regions $A$ and $B$, it holds that
  \begin{equation}
    J_{\partial B} \hat V_{A\to B}= \hat V_{A\to B} J_{\partial AB}
  \end{equation}
\end{definition}

As in Theorem \ref{thm:main}, it is understood that the virtual indices of $\hat{V}_{A\rightarrow B}$ that touch $B$ are output variables rather than input variables as in $V_A$ (see Fig.~\ref{fig:topo}).

We will now show that the proof of Theorem \ref{thm:main} can be adapted to the case of MPO-injective PEPS if we restrict all the operators living on the boundary to have support on the range of the projectors $J_{\partial A}$.

\begin{definition}\label{def:approx-fact-mpoinj}
  Let $ABC$ be regions as in Fig.~\ref{fig:ABC} and $\rho_{\partial AB}$, $\rho_{\partial BC}$, $\rho_{\partial B}$ and $\rho_{\partial ABC}$ the boundary states of regions $AB$, $BC$, $B$ and $ABC$ respectively. We will say that the boundary states are \emph{$\epsilon$-approximately factorizable} with respect to regions $ABC$, if there exist invertible operators $\Delta_{zb}$, $\Delta_{az}$, $\Omega_{zc}$, $\Omega_{dz}$ with support given by Fig.~\ref{fig:bdregions}, defining operators $\sigma_{\partial ABC}$, $\sigma_{\partial B}$, $\sigma_{\partial AB}$ and $\sigma_{\partial BC}$ as in \eqref{eqn:sigma},
  and the  following holds:
  \begin{align}
    [J_{\partial R}, \sigma_{\partial R}] &= 0; \\
    \norm{\rho^{1/2}_{\partial R}\sigma^{-1}_{\partial R}\rho^{1/2}_{\partial R}-J_{\partial R}} &\le
                                                                                                   \epsilon \quad \text{for } R \in \{ ABC, AB, BC\}; \label{QF1-mpoinj} \\
    \norm{\rho^{-1/2}_{\partial B}J_{\partial B}\sigma_{\partial B}J_{\partial B}\rho^{-1/2}_{\partial B}-J_{\partial B}} &\le
                                                                                                                            \epsilon. \label{QF2-mpoinj}
  \end{align}
\end{definition}

The reason for the change in Eqn. \eqref{QF1-mpoinj}  compared to Eqns. \eqref{QF1} and \eqref{QF2} is clear given the following extension of Lemma \ref{lemma:main}:
\begin{lemma}\label{lemma:main-mpoinj}
  For a region $X\subset \Lambda$, let $\rho_{\partial X}$ be the boundary state and $\sigma_{\partial X}$ another operator invertible on $J_{\partial X}\cH_{\partial X} J_{\partial X}$. Let $\tilde P_X = V_X \sigma^{-1}_{\partial X} V_X^\dag$. Then
  \begin{align}
    \norm{ \tilde P_X } &= \norm{\rho^{1/2}_{\partial X} \sigma^{-1}_{\partial X} \rho^{1/2}_{\partial X}}, \\
    \norm{ \tilde P_X - P_X } &= \norm{\rho^{1/2}_{\partial X} \sigma^{-1}_{\partial X} \rho^{1/2}_{\partial X}- J_{\partial X}}.
  \end{align}
\end{lemma}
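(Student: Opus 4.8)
The plan is to imitate the proof of Lemma~\ref{lemma:main} essentially verbatim, replacing the role of the identity by the support projector $J_{\partial X}$. The starting point is the observation that the two structural identities underlying the injective case survive in the non-injective setting once every inverse is understood on the support of $\rho_{\partial X}$. Concretely, since $\rho_{\partial X}^{-1/2}\rho_{\partial X}^{1/2}=J_{\partial X}$ and $V_X = V_X J_{\partial X}$ (because $\ker V_X = \ker \rho_{\partial X}$, Remark~\ref{remark:boundaries}), we still have $V_X = W_X \rho_{\partial X}^{1/2}$. Moreover $W_X$ is a partial isometry with $W_X^\dagger W_X = J_{\partial X}$ and $W_X W_X^\dagger = P_X$, so in particular $P_X = W_X J_{\partial X} W_X^\dagger$, which is the correct replacement for the identity-sandwich used in the injective proof.

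With these in hand, the first step is the algebraic rewriting. Substituting $V_X = W_X \rho_{\partial X}^{1/2}$ into $\tilde P_X = V_X \sigma_{\partial X}^{-1} V_X^\dagger$ gives
\[
\tilde P_X = W_X \bigl(\rho_{\partial X}^{1/2}\sigma_{\partial X}^{-1}\rho_{\partial X}^{1/2}\bigr) W_X^\dagger =: W_X M W_X^\dagger ,
\]
and subtracting $P_X = W_X J_{\partial X} W_X^\dagger$ yields $P_X - \tilde P_X = W_X (J_{\partial X}-M) W_X^\dagger$. Here one checks that both $M$ and $J_{\partial X}-M$ are supported on the range of $J_{\partial X}$: indeed $\rho_{\partial X}^{1/2} = J_{\partial X}\rho_{\partial X}^{1/2}J_{\partial X}$ and $\sigma_{\partial X}^{-1}=J_{\partial X}\sigma_{\partial X}^{-1}J_{\partial X}$ (as $\sigma_{\partial X}$ is invertible precisely on $J_{\partial X}\cH_{\partial X}J_{\partial X}$), so $M = J_{\partial X} M J_{\partial X}$ and hence also $J_{\partial X}-M = J_{\partial X}(J_{\partial X}-M)J_{\partial X}$.

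The second step is a norm-preservation statement: for any operator $N$ with $N = J_{\partial X} N J_{\partial X}$ one has $\norm{W_X N W_X^\dagger} = \norm{N}$. This follows by sandwiching both ways using $W_X^\dagger W_X = J_{\partial X}$ and $\norm{W_X}=\norm{W_X^\dagger}=1$: one direction is $\norm{W_X N W_X^\dagger}\le\norm{N}$, and the reverse comes from $N = J_{\partial X} N J_{\partial X} = W_X^\dagger (W_X N W_X^\dagger) W_X$, giving $\norm{N}\le \norm{W_X N W_X^\dagger}$. Applying this with $N = M$ proves the first identity and with $N = J_{\partial X}-M$ proves the second, which are exactly the claimed equalities.

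I expect the only real subtlety—rather than a genuine obstacle—to be the support bookkeeping: one must be consistent that all inverses are taken on $\operatorname{Im} J_{\partial X} = (\ker V_X)^\perp$, and verify that $V_X^\dagger$ indeed maps into this subspace (so that $\sigma_{\partial X}^{-1}$ is applied where it is defined), which holds since $\operatorname{Im} V_X^\dagger = (\ker V_X)^\perp$. The commutation condition $[J_{\partial R},\sigma_{\partial R}]=0$ from Definition~\ref{def:approx-fact-mpoinj} is what guarantees this support compatibility is preserved when the $\sigma$'s are later combined across $AB$, $B$, $BC$, but for the present lemma it is enough that $\sigma_{\partial X}$ is invertible on the range of $J_{\partial X}$.
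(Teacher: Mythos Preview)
Your proof is correct and follows exactly the approach the paper intends: the paper omits the proof, stating only that it is identical to that of Lemma~\ref{lemma:main} with $W_X$ now a partial isometry satisfying $W_X^\dagger W_X = J_{\partial X}$ (the paper writes $W_X W_X^\dagger = J_{\partial X}$, which appears to be a typo). You have correctly filled in the details, in particular the support bookkeeping and the two-sided norm inequality showing $\norm{W_X N W_X^\dagger}=\norm{N}$ for $N$ supported on $\operatorname{Im} J_{\partial X}$.
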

Since the proof is identical to that of Lemma \ref{lemma:main}, with the only difference that now $W_X$ is a partial isometry and $W_X W_X^\dag = J_{\partial X}$, so we will omit it. By using this last lemma instead of Lemma \ref{lemma:main} in the proof of Theorem \ref{thm:main}, the proof carries through almost identically: the pulling-through condition guarantees that the projections $J_{\partial X}$ can be moved through the equation as needed. In particular, we have that
\bea
\tilde P_{ABC} &=& V_{ABC}   \sigma_{\partial ABC}^{-1}  V_{ABC}^\dag\nonumber\\
&=& V_{AB} \hat V_{C\to B} J_{\partial ABC} \Delta_{az}^{-1} \Delta_{zb}^{-1} J_{\partial ABC}\hat V_{A\to B}^\dag V_{BC}^\dag\nonumber\\
&=& V_{AB} J_{\partial AB} \Delta_{az}^{-1} \hat V_{A\to B}^\dag \hat V_{C\to B} \Delta_{zb}^{-1} J_{\partial BC} V_{BC}^\dag \nonumber \\
&=& V_{AB} J_{\partial AB} \Delta_{az}^{-1} \Omega_{zc}^{-1} \hat V_{A\to B}^\dag \Omega_{zc} \Omega_{dz} \hat V_{C\to B} \Omega_{dz}^{-1} \Delta{zb}^{-1} J_{\partial BC} V_{BC}^\dag \nonumber\\
&=& V_{AB} J_{\partial AB} \sigma_{\partial AB}^{-1} \hat V_{A\to B}^\dag \sigma_{\partial B} \hat V_{C\to B} \sigma_{\partial BC}^{-1} J_{\partial BC} V_{BC}^\dag  \nonumber\\
&=& V_{AB} \sigma_{\partial AB}^{-1} \hat V_{A\to B}^\dag J_{\partial B} \sigma_{\partial B}J_{\partial B}  \hat V_{C\to B} \sigma_{\partial BC}^{-1}  V_{BC}^\dag.
\eea
The rest of the proof is identical to the non-topological case.

\section{Approximate factorization of 1D thermal states}\label{sec:qfact}
In this section, we consider a class of physically motivated states for which we can show that they
are approximately factorizable, by explicitly constructing operators $\{\Delta,\Omega\}$ satisfying
Eqns.~(\ref{QF1}, \ref{QF2}).

\subsection{Local and quasi-local Hamiltonians}
We define the following properties for
families of operators $f_{\partial A}$ defined for every rectangular subset $A$ of the lattice, and
acting on the boundary Hilbert space $\cH_{\partial A}$.
\begin{definition}
  Let $\{f_{\partial A}\}_{A}$ be a family of operators such that $f_{\partial A}
  \in \cB(\cH_{\partial A})$, where the index $A$ runs over all rectangles $A\subset G$. For each
  $A$, we decompose $f_{\partial A}$ as follows:
  \[ f_{\partial A} = \sum_{Z \subset \partial A} f^{\partial A}_Z, \]
  where each $f^{\partial A}_Z$ is supported on $Z$ (such decomposition is always trivially possible). Moreover,
  if every $f_{\partial A}$ is Hermitian, we require every $f^{\partial A}_Z$ to be Hermitian as well. We will then say that:
  \begin{description}
  \item[locality:\label{def:locality}] the family $\{f_{\partial A}\}_{A}$ is \emph{local}, if there exist an integer $k^*$
    and a constant $J>0$ such that
    \[ \sup_{\partial A} \sup_Z \norm{f^{\partial A}_Z} \le J, \]
    and moreover $f_Z = 0 $ if the diameter of $Z$ is larger than $k^*$.
    The value $k^*$ will denote the \emph{range} of $f_{\partial A}$, while $J$ will be the
    \emph{strength} of $f_{\partial A}$. We will also say that $f_{\partial A}$ is \emph{$k^*$-local}.
  \item[quasi-locality:\label{def:quasi-locality}] the family $\{f_{\partial A}\}_{A}$ is \emph{quasi-local}, if
    \begin{equation}
      \forall x \ge 1, \quad \sup_{A} \sup_{u\in \partial A}\sum_{Z\ni u} x^{\diam{Z}} \norm{f^{\partial A}_Z} < \infty.
    \end{equation}
  \end{description}
\end{definition}

The definition of quasi-locality implies that, for each $\partial A$, the norm
$\norm{f_Z^{\partial_A}}$ decays in the diameter of $Z$ faster than any exponential. This is quite
a stronger requirement than what is usually made. In the setting of local Hamiltonians, one usually considers the case of exponential or faster than polynomial decaying interactions,
while the quasi-local algebra for quantum spin systems in the thermodynamic limit usually contains
any norm-convergent sequence of local operators \cite{Bratteli_1987}. The motivation for our choice of a stronger notion  will be clear in the next section.

\begin{figure}[h]
  \centering
  \includegraphics[scale=0.35]{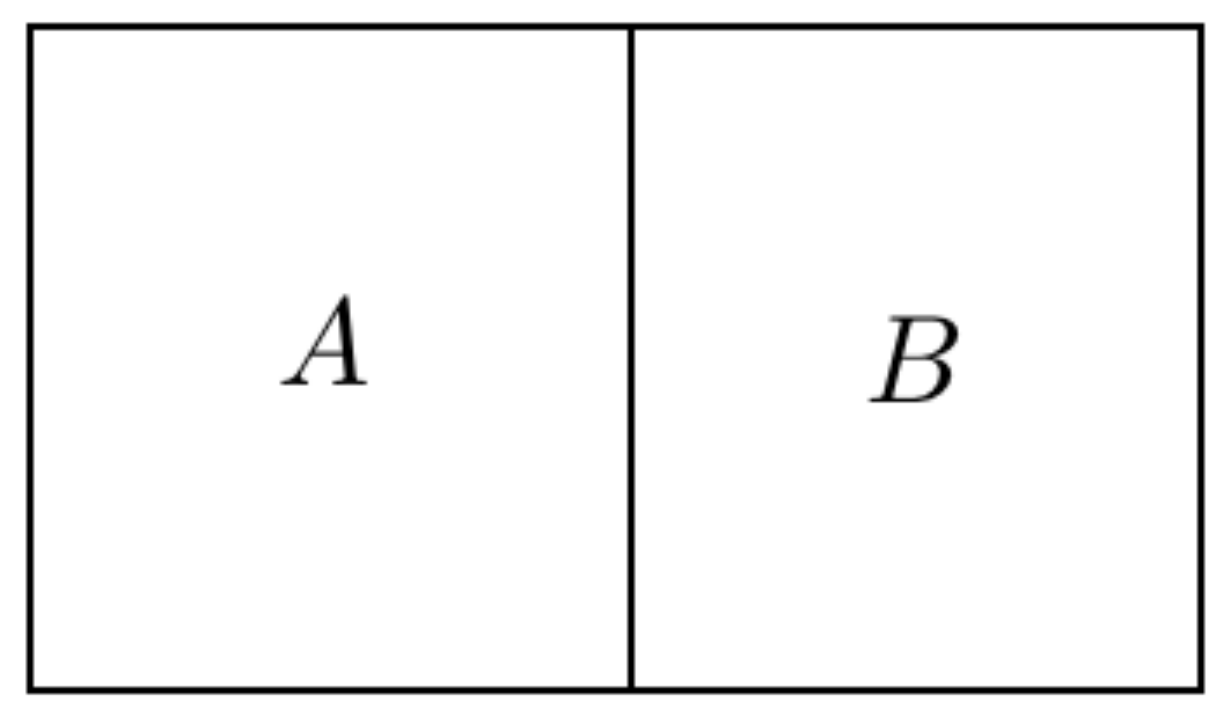}
  \caption{Setup for the homogeneity assumption.}
  \label{fig:AB}
\end{figure}

The next assumption relates the local terms of different
(overlapping) regions: we require that different regions have
approximately the same local terms over the segments where they overlap.

\begin{definition}[$\delta$-homogeneity]
  Let $\{f_{\partial A}\}_A$ be a family of quasi-local operators.
  We say that the family is \emph{$\delta$-homogeneous} if for every pair of rectangles $A$ and $B$
  arranged as in Fig.~\ref{fig:AB} and for every $Z\subset \partial A \setminus \partial B$,
  \begin{equation}
    \norm{f_Z^{\partial A} - f_Z^{\partial AB}} \le \delta^{AB}(\dist(Z,\partial B)),
  \end{equation}
  for some family $\{\delta^{A}(r)\}_{A}$ of decaying functions.
  If there exists a constant $r^*$ for which $\delta^A(r)=0$ for every $r>r^*$ and every $A$, we say that the family
  $\{f_{\partial A}\}_A$ is \emph{strictly homogeneous}.
\end{definition}
It is quite clear that strict homogeneity only makes sense in the case of strict locality.

In the following result, we will consider regions $ABC\subseteq\Lambda$ as in
Fig.~\ref{fig:ABC}, and we will furthermore subdivide region $z$ of $\partial B$ into two parts, which
we denote $x$ and $y$, as in Fig.~\ref{fig:bdregions-2}. We will assume that $B$ is $4\ell$ thick,
for some $\ell>0$, that moreover both region $x$ and $y$ are $\ell$
long.
\begin{figure}[hb]
  \centering
  \includegraphics[scale=0.35]{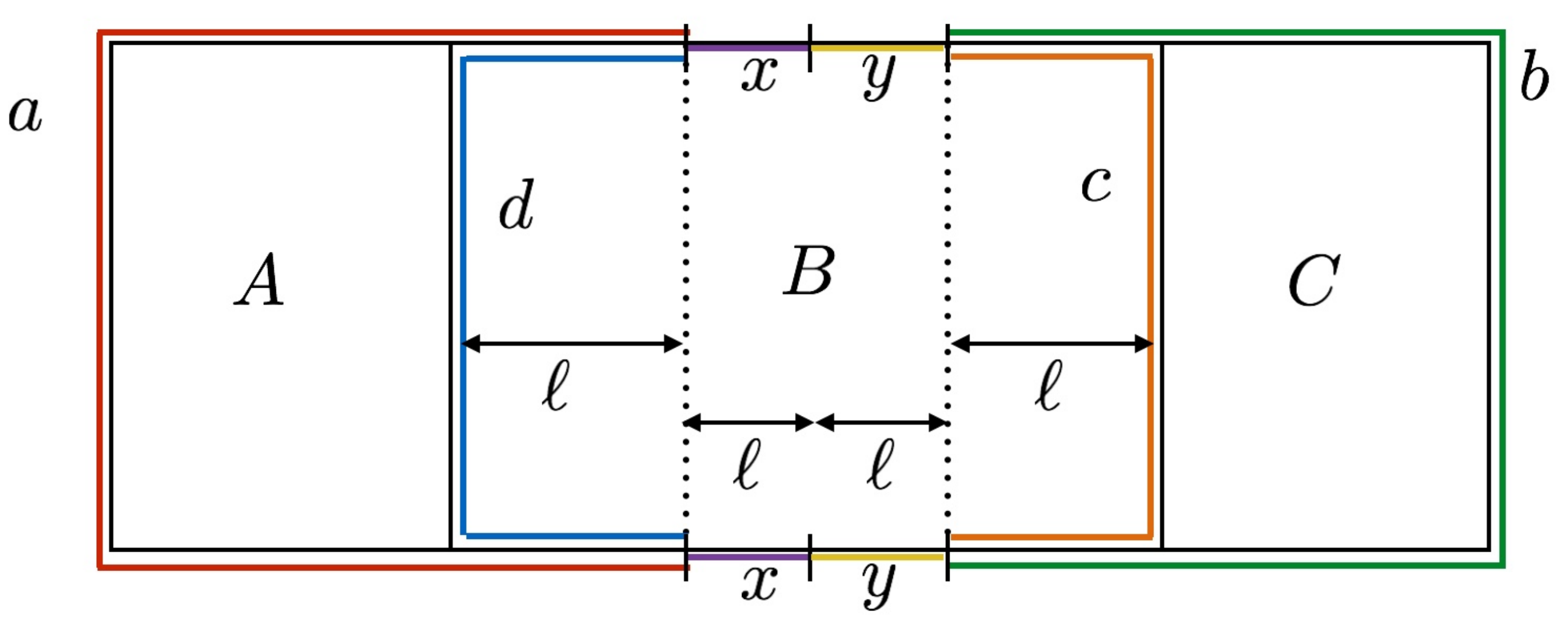}
  \caption{
    We further decompose region $z$ into $x$ and $y$ as compared with Fig.~\ref{fig:bdregions}.}
  \label{fig:bdregions-2}
\end{figure}

\begin{thm}\label{thm:strictlocality}
  Let us consider a family $\{Q_{\partial A}\}_{A\subset G}$ of local Hermitian operators with range
  $r$ and strength $J$, which is $\delta$-homogeneous, and let
  $\rho_{\partial A} = \exp(2 Q_{\partial A})$ the corresponding thermal state for each $A\subset G$
  (where the factor of 2 is added for convenience).  Then there exists an $\ell_0$ such that for a
  rectangular region $\Lambda$ satisfying the conditions described in Fig.~\ref{fig:ABC} with
  $\ell \ge \ell_0$, $\rho_{\partial \Lambda}$ is $\epsilon(\ell)$-approximately factorizable
  (equations \eqref{QF1} and \eqref{QF2}), where
  \begin{equation}
    \epsilon(\ell) = \eta\qty( c(r,J) \max\left\{\chi_\ell(2J), \sum_{l \ge \ell} \delta^\Lambda(l)
    \right\} ) ,
\end{equation}
where $\eta(t) = e^t-1 $, $c(r,J)$ depends only on the range $r$ and the strength $J$, and $\chi_\ell(2J)$ decays
faster than any exponential in $\ell$.
\end{thm}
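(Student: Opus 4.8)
The plan is to construct the four factorizing operators $\Delta_{az},\Delta_{zb},\Omega_{dz},\Omega_{zc}$ of Eqn.~\eqref{eqn:sigma} directly from the local Hamiltonian $Q_{\partial A}$, and then verify the factorization bounds \eqref{QF1} and \eqref{QF2} using Araki's expansional machinery. The central object is the exponential $\rho_{\partial R}=\exp(2Q_{\partial R})$, which for a \emph{local} Hamiltonian can be compared to the exponential of a \emph{truncated} Hamiltonian supported on a sub-segment. The guiding principle is that, because $Q_{\partial R}$ is finite-range, cutting its interaction across a seam of width $\ell$ only perturbs $\exp(Q)$ by an operator that is exponentially localized near the seam — and thanks to $\delta$-homogeneity, the truncated pieces on overlapping regions nearly agree. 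So the main work is to set up these truncations cleanly and control the resulting errors.

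**First I would** define, for each of the two separating sub-segments $x$ and $y$ of $z$, a splitting of $Q_{\partial R}$ into a ``left of the seam'' part and a ``right of the seam'' part, dropping the finitely many interaction terms of range $\le r$ that straddle the seam. Concretely, for $R=ABC$ I would set $\Delta_{az}:=\exp(Q^{\mathrm{left}})$ and $\Delta_{zb}:=\exp(Q^{\mathrm{right}})$ where the seam sits inside $x$; for $R=B$ I would place the analogous seam in $y$, defining $\Omega_{dz},\Omega_{zc}$. The supports then match Fig.~\ref{fig:bdregions} because the dropped straddling terms have range $\le r\le\ell$, and since $x,y$ are $\ell$ long the two seams are separated, which is exactly why the identity $\sigma_{\partial BC}\sigma_{\partial B}^{-1}\sigma_{\partial AB}=\sigma_{\partial ABC}$ holds. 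The key technical input is that $\exp(Q)$ and $\exp(Q^{\mathrm{left}})\exp(Q^{\mathrm{right}})$ differ by a factor that is well-approximated by the identity away from the seam: this is precisely the content of Araki's analyticity bounds on the ``expansionals'' $E_\Lambda=\exp(Q^{\mathrm{left}}+Q^{\mathrm{right}})\exp(-Q^{\mathrm{left}})\exp(-Q^{\mathrm{right}})$, which decay faster than any exponential in the distance to the seam for finite-range interactions.

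**Next I would** reduce \eqref{QF1} and \eqref{QF2} to estimating operators of the form $\rho_{\partial R}^{1/2}\sigma_{\partial R}^{-1}\rho_{\partial R}^{1/2}-\1$, and observe that this equals $\rho_{\partial R}^{1/2}(\sigma_{\partial R}^{-1}-\rho_{\partial R}^{-1})\rho_{\partial R}^{1/2}$, so the whole estimate is driven by how close the product $\sigma_{\partial R}=\exp(Q^{\mathrm{left}})\exp(Q^{\mathrm{right}})$ is to $\exp(2Q_{\partial R})$ in a similarity-invariant sense. Factoring out the genuine local exponential and using Araki's bound on the expansional's logarithm, the residual operator has norm controlled by $\sum_{l\ge\ell}$ of the straddling-term contributions, plus a $\delta$-homogeneity term $\sum_{l\ge\ell}\delta^\Lambda(l)$ accounting for the fact that the $Q$'s on $AB$, $B$, $BC$, and $ABC$ are only approximately equal on their common segments. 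Exponentiating these bounded perturbations produces the stated $\eta(t)=e^t-1$ form, with the faster-than-exponential piece $\chi_\ell(2J)$ coming directly from Araki's convergence estimate and the $\delta$ piece from homogeneity. The threshold $\ell_0$ is whatever makes the argument of $\eta$ less than $1$ so that $\epsilon(\ell)\le1$ as required by Theorem~\ref{thm:main}.

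**The hard part will be** the careful bookkeeping in adapting Araki's estimates, which are stated for the interaction as a whole, to the \emph{cut} Hamiltonian and to the comparison of two different regions' local terms. In particular, Araki's theorem gives analyticity and convergence of the expansional for a single translationally-defined interaction, whereas here I need uniform control as the region $R$ and the seam location vary, and I need the two seams in $\sigma_{\partial ABC}$ versus $\sigma_{\partial B}$ to be handled by the \emph{same} estimate so that the four $\sigma$'s share a consistent factorization. Threading $\delta$-homogeneity through this — so that replacing $Q^{\partial AB}_Z$ by $Q^{\partial ABC}_Z$ on the overlap contributes only $\delta^\Lambda(\dist(Z,\partial B))$ summed over the seam region — is where the bulk of the estimation lives, and it is also what forces the restriction to strictly finite-range (rather than merely quasi-local) interactions, since Araki's convergence radius is only established in that case.
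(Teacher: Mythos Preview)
Your high-level strategy --- express the factorization error via expansionals, then control each piece with Araki's estimates together with $\delta$-homogeneity --- matches the paper. The gap is in the specific construction of the $\Delta$'s and $\Omega$'s, which is really the crux of the argument.

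You set $\Delta_{az}=\exp(Q^{\mathrm{left}})$ and $\Delta_{zb}=\exp(Q^{\mathrm{right}})$, obtained from $Q_{\partial ABC}$ by dropping the straddling terms at a seam in $x$. Since the two pieces then have disjoint supports and commute,
\[
\sigma_{\partial ABC}=\Delta_{zb}\Delta_{az}=\exp\bigl(Q^{\mathrm{left}}+Q^{\mathrm{right}}\bigr)=\exp\bigl(Q_{\partial ABC}-W\bigr),
\]
with $W$ the bounded seam interaction. But $\rho_{\partial ABC}=\exp(2Q_{\partial ABC})$, so
\[
\rho_{\partial ABC}^{1/2}\,\sigma_{\partial ABC}^{-1}\,\rho_{\partial ABC}^{1/2}
=e^{Q}\,e^{-(Q-W)}\,e^{Q},
\]
which is of order $e^{Q}$ in norm and diverges with the system size; it is nowhere near $\1$. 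The expansional $E=e^{Q}e^{-Q^{\mathrm{left}}}e^{-Q^{\mathrm{right}}}$ you invoke is indeed localized near the seam, but that only says $e^{Q}\approx e^{Q^{\mathrm{left}}}e^{Q^{\mathrm{right}}}$, not $e^{2Q}\approx e^{Q^{\mathrm{left}}}e^{Q^{\mathrm{right}}}$. The same problem hits $\sigma_{\partial AB}=\Omega_{zc}\Delta_{az}$ even harder: with your seam placement, $\Delta_{az}$ lives left of the $x$-seam and $\Omega_{zc}$ lives right of the $y$-seam, so $\sigma_{\partial AB}$ carries \emph{no} interaction on a stretch of width $\sim\ell$ in the middle of $z$, and the sandwich with $e^{R_{axyc}}$ blows up there.

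The paper's remedy is to make each $\Delta,\Omega$ a \emph{triple product} of exponentials that double-counts the overlap region, e.g.
\[
\Delta_{az}=e^{Q_{ax}}\,e^{-Q_{y}}\,e^{Q_{axy}},\qquad
\Omega_{zc}=e^{T_{xyc}}\,e^{-T_{x}}\,e^{T_{yc}},
\]
with $T=Q_{\partial B}$. Then $\rho_{\partial AB}^{1/2}\sigma_{\partial AB}^{-1}\rho_{\partial AB}^{1/2}$ telescopes: after one commutation of disjointly supported factors it splits as $O_LO_R$ with $O_L=e^{R_{axyc}}e^{-Q_{axy}}e^{Q_y}e^{-T_{yc}}$, and the iterated expansional formula rewrites $O_L$ with an integrand that is a sum of five terms $X_1,\dots,X_5$, each of which is either a difference of local Hamiltonians on overlapping regions (controlled by $\sum_{l\ge\ell}\delta(l)$ via homogeneity) or a $(\Gamma^{-t}_{Q_{axy}}-\Gamma^{-t}_{Q_y})$-type error localized at distance $\ge\ell$ from its argument (controlled by $\chi_\ell(2J)$ via Araki). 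It is this engineered telescoping --- not a single cut of the interaction --- that produces the smallness, and that is the piece your proposal is missing.
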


\subsection{Analysis of 1D Gibbs states}\label{sec:Araki}
As a preliminary to the proof of Theorem~\ref{thm:strictlocality}, we will first recall some results
about 1D local Hamiltonians, and some useful theory on expansionals.  A very important set of tools
in our analysis of thermal states has been developed by H. Araki \cite{Araki} nearly a half a
century ago, with the purpose of proving the absence of phase transitions at finite temperatures for
1D local Hamiltonians. On the way to proving that result, he also showed that the evolution
associated to 1D local Hamiltonians is entire analytic, and that local observables preserve their
locality, up to small errors, under imaginary time evolution.

Here we state the results of Araki in a language that suits us, and refer back to the original paper
for the proofs.  We will make use of the following notation, for every pair of operators $A$ and
$B$:
\begin{equation}
  \Gamma^t_B(A) = e^{tB} A e^{-tB}.
\end{equation}
Note that if $[A,B] = 0$ then $\Gamma_B^t(A) =A$.

Consider the one dimensional local Hamiltonian $Q_\Sigma=\sum_{Z\subset \Sigma}q_Z$, where
$\Sigma \subset \bZ$ is a finite interval containing the origin, and the Hamiltonian terms $q_Z$ are
zero if the radius of $Z$ is greater than $r$.
For every integer $\ell$ let $Q_{[-\ell;\ell]}=\sum_{Z\subset [-\ell;\ell]}q_Z$ be the restriction to region $[-\ell;\ell]\subset\Sigma$; i.e. all
local Hamiltonian terms that are strictly inside of $[-\ell;\ell]$.
We will use the shorthand notation $\Gamma^t_\Sigma$ for $\Gamma^t_{Q_\Sigma}$ and $\Gamma^t_\ell$ for
$\Gamma^t_{Q_{[-\ell,\ell]}}$ for the rest of this section.

We can now state Arakis' theorem:
\begin{thm}[Araki \cite{Araki}]\label{thm:Araki}
Let $Q_\Sigma=\sum_{Z\subseteq \Sigma} q_Z$ be a local Hamiltonian on $\Sigma$ with
interaction length $r$ and strength $J$, and let $f$ be an observable with support on $[-n;n]$. Then
\bea \norm{\Gamma^t_\Sigma(f)-\Gamma^t_\ell(f)}&\leq& \chi_{\ell}(\tau)F_n(\tau)\norm{f}\label{eqn:Araki1}\\
 \norm{\Gamma^t_\Sigma(f)}&\leq& F_n(\tau) \norm{f},\label{eqn:Araki2}\eea
 where $\tau = 2tJ$ and the functions $\chi_\ell$ and $F_n$ can be bounded as
 \be F_n(x)\leq e^{(n-r+1)x+2\log(r)e^{xr}}, ~~~~~~~and
 ~~~~~~~\chi_\ell(x)\leq\frac{\left[2\log(r)e^{rx}\right]^{\lfloor \ell/r\rfloor+1}}{(\lfloor
   \ell/r\rfloor+1)!}.\label{eqn:Arakifuncs}\ee
\end{thm}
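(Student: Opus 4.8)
The plan is to reduce both inequalities to a uniform bound on the multiple commutators of the local Hamiltonian terms, and then to control the imaginary-time evolution by its power series (the ``expansional'' expansion). I would start by writing out $\Gamma^t_\Sigma(f) = e^{tQ_\Sigma} f e^{-tQ_\Sigma}$ as a norm-convergent series in nested commutators,
\begin{equation}
\Gamma^t_\Sigma(f) = \sum_{k\ge 0} \frac{t^k}{k!}\, \mathrm{ad}_{Q_\Sigma}^k(f),
\qquad \mathrm{ad}_{Q_\Sigma}(f) = [Q_\Sigma, f].
\end{equation}
The whole point of Araki's analysis is that, although $Q_\Sigma$ is extensive, only those interaction terms $q_Z$ that actually overlap the (spreading) support of $f$ contribute to each commutator. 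Because the range is $r$, each application of $\mathrm{ad}_{Q_\Sigma}$ enlarges the support by at most $r$ sites, so the number of terms $q_Z$ that can act at the $k$-th step grows only polynomially-times-combinatorially rather than with the total volume; this is what makes the series converge with a volume-independent (indeed entire) radius, and it is the mechanism behind the factor $F_n(\tau)$.

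\medskip
First I would establish \eqref{eqn:Araki2}. The natural route is a Grönwall/differential-inequality argument: set $g(t) = \Gamma^t_\Sigma(f)$, note $\frac{d}{dt} g(t) = [Q_\Sigma, g(t)]$, and bound $\norm{[Q_\Sigma, g(t)]}$ by counting how many interaction terms can fail to commute with an operator whose support has already grown to size roughly $2n + 2rk$. Each such term contributes at most $2J$ to the commutator norm, and the number of relevant terms is controlled by $r$ and the current support size. Integrating the resulting bound and carefully tracking the growth of the effective support produces an estimate of the stated exponential-of-exponential type $e^{(n-r+1)\tau + 2\log(r) e^{\tau r}}$; the double exponential reflects that the number of contributing interaction terms itself grows exponentially in the number of commutators before the factorials in the series tame it.

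\medskip
For the locality estimate \eqref{eqn:Araki1} I would compare the two evolutions by the standard interpolation identity
\begin{equation}
\Gamma^t_\Sigma(f) - \Gamma^t_\ell(f)
= \int_0^t \frac{d}{ds}\Big( \Gamma^{t-s}_\Sigma \Gamma^s_\ell (f) \Big)\, ds
= \int_0^t \Gamma^{t-s}_\Sigma\big( [\,Q_\Sigma - Q_{[-\ell;\ell]},\, \Gamma^s_\ell(f)\,] \big)\, ds .
\end{equation}
The difference $Q_\Sigma - Q_{[-\ell;\ell]}$ consists only of interaction terms straddling or lying outside the boundary $\pm\ell$. Since $\Gamma^s_\ell(f)$ still has support essentially confined to a region of radius $n + rs'$ around the origin, it can only fail to commute with boundary terms once the spreading support reaches distance $\ell$; this forces the commutator to pick up the far tail of the series, which after estimating yields the factor $\chi_\ell(\tau) = \frac{[2\log(r) e^{r\tau}]^{\lfloor \ell/r\rfloor + 1}}{(\lfloor \ell/r\rfloor + 1)!}$, decaying faster than any exponential in $\ell$ because of the factorial. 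Applying \eqref{eqn:Araki2} (already proven) to bound the outer evolution $\Gamma^{t-s}_\Sigma$ then closes the estimate.

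\medskip
The main obstacle is the bookkeeping in the combinatorial counting: one must bound, uniformly in the volume, the norm of a $k$-fold nested commutator $\mathrm{ad}_{q_{Z_1}}\cdots \mathrm{ad}_{q_{Z_k}}(f)$ summed over all admissible chains of interaction terms $Z_1, \dots, Z_k$, keeping track of the fact that consecutive terms in a nonvanishing chain must geometrically overlap the growing support. Organizing this sum so that the resulting series is dominated termwise by the explicit bounds in \eqref{eqn:Arakifuncs} — in particular getting the precise exponents $(n-r+1)$ and the $2\log(r)$ prefactor rather than merely some constants — is the delicate part, and is exactly where I would lean on Araki's original estimates rather than rederive them. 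Since the statement is quoted verbatim from \cite{Araki}, I expect the paper to cite it and not reprove it; my reconstruction above is the route one would take if a self-contained proof were demanded.
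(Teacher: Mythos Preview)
Your expectation is correct: the paper does not prove Theorem~\ref{thm:Araki} at all. It explicitly states the result ``in a language that suits us, and refer[s] back to the original paper for the proofs,'' so there is nothing to compare your argument against within this paper. Your sketch --- the commutator (Hadamard) expansion, the locality-based counting of contributing interaction chains, and the Duhamel interpolation for the difference $\Gamma^t_\Sigma(f)-\Gamma^t_\ell(f)$ --- is indeed the standard skeleton of Araki's original argument, and your closing remark that the paper would cite rather than reprove the result is exactly right.
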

The bounds in Eqn.~(\ref{eqn:Arakifuncs}) only strictly holds for $x>c$ for some constant $c$ of order
one. For very small $t$, the bound on $F_n(x)$ and $\chi_\ell(x)$  take on a slightly different
functional form, that is mostly irrelevant for us now. The important thing to notice is that for any
fixed $n$ and $x$, $F_n(x)$ is bounded, and $\chi_\ell(x)$ is decaying faster than any exponential
in $\ell$, both uniformly in the size of $\Sigma$.

Araki's Theorem tells us that the imaginary time evolution of a local observable on a 1D line gets
mapped to a quasi-local observable with the same center for every time $t$, which can be well
approximated by evolution generated by restricted Hamiltonians.
Although for real time evolution, this statement holds in lattices of any dimension, and is more
widely known as a Lieb-Robison bound \cite{Lieb_1972}, it breaks down in 2 and higher dimensions for imaginary time.
In fact, Araki's Theorem shows that it is possible to extend the time evolution
$f \to e^{it Q_\Sigma}f e^{-it Q_\Sigma}$ in the limit where $\Sigma\to \bZ$, to an entire analytic function. This will not
be true in higher dimensions, and the resulting evolution will only be analytic in a strip centered
on the real line. An explicit counter-example was constructed in Ref. \cite{bouch}.

Note that, contrary to the Lieb-Robinson bound (which holds for real time evolution), where the
support of the evolved observable grows linearly with time, Araki's Theorem can only bound the
growth of the support with an exponential in time. Therefore the proof would not carry trough if we
defined quasi-local observables to have exponentially decaying tails.

We finally point out an easy corollary of Theorem \ref{thm:Araki} that we will need in Section \ref{sec:qfactPEPS}:
\begin{corollary}[Araki \cite{Araki}]\label{cor:Araki}
Let $Q_\Sigma=\sum_{Z\subseteq \Sigma} q_Z$ be a local Hamiltonian on $\Sigma$ with
interaction length $r$ and strength $J$, and let $f$ be a quasi-local observable with center at the origin. Then
\bea \norm{\Gamma^t_\Sigma(f)-\Gamma^t_\ell(f)} &\leq& \chi_{\ell}(\tau)G(\tau)\norm{f}\label{eqn:cor-Araki1}\\
 \norm{\Gamma^t_\Sigma(f)} &\leq& H(\tau) \norm{f},\label{eqn:cor-Araki2}\eea
for $\tau = 2tJ$ and for some analytic functions $G$ and $H$, which depend on $r$ but not on the
length of $\Sigma$.
\end{corollary}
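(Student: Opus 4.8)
The plan is to reduce the corollary to the already-established Theorem~\ref{thm:Araki} by decomposing the quasi-local observable into strictly local pieces and then reassembling the bounds by linearity. Concretely, writing $f=\sum_Z f_Z$ with each $f_Z$ supported on an interval $Z$ centered at the origin, I would group the terms according to the radius of their support: set $f^{(n)}=\sum_{Z:\,\text{rad}(Z)=n} f_Z$, so that $f=\sum_{n\ge 0} f^{(n)}$ with each $f^{(n)}$ supported on $[-n;n]$. The defining property of quasi-locality is precisely that $\norm{f^{(n)}}$ decays faster than any exponential in $n$; in particular $\sum_n x^n \norm{f^{(n)}}<\infty$ for every $x\ge 1$.

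First I would apply the two bounds of Theorem~\ref{thm:Araki} to each individual piece $f^{(n)}$, which is supported on $[-n;n]$, giving $\norm{\Gamma^t_\Sigma(f^{(n)})}\le F_n(\tau)\norm{f^{(n)}}$ from \eqref{eqn:Araki2} and $\norm{\Gamma^t_\Sigma(f^{(n)})-\Gamma^t_\ell(f^{(n)})}\le \chi_\ell(\tau) F_n(\tau)\norm{f^{(n)}}$ from \eqref{eqn:Araki1}. Since $\Gamma^t_\Sigma$ and $\Gamma^t_\ell$ are linear and $f=\sum_n f^{(n)}$, the triangle inequality then yields
\be
  \norm{\Gamma^t_\Sigma(f)}\le \sum_{n\ge 0} F_n(\tau)\norm{f^{(n)}}, \qquad
  \norm{\Gamma^t_\Sigma(f)-\Gamma^t_\ell(f)}\le \chi_\ell(\tau)\sum_{n\ge 0} F_n(\tau)\norm{f^{(n)}},
\ee
where I have pulled the $\ell$-dependent factor $\chi_\ell(\tau)$, which is independent of $n$, out of the sum. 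It then remains only to control the common series $\sum_n F_n(\tau)\norm{f^{(n)}}$.

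This is where the quasi-locality hypothesis does the essential work. From \eqref{eqn:Arakifuncs}, $F_n(\tau)\le C(\tau)\, e^{n\tau}$ with $C(\tau)=e^{(1-r)\tau+2\log(r)e^{\tau r}}$, so $F_n(\tau)$ grows only exponentially in $n$, whereas $\norm{f^{(n)}}$ decays faster than any exponential. Hence the series converges, and I would take $G(\tau)=H(\tau):=\norm{f}^{-1}\sum_n F_n(\tau)\norm{f^{(n)}}$, which depends on $r$ and on the fixed decay profile of $f$ but not on the length of $\Sigma$; substituting into the two displayed inequalities reproduces exactly \eqref{eqn:cor-Araki1} and \eqref{eqn:cor-Araki2}. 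For the analyticity claim I would note that each summand $\Gamma^t_\Sigma(f^{(n)})$ is already entire in $t$ by Theorem~\ref{thm:Araki}, and that the majorant $F_n(\tau)\le C(\tau)e^{n\tau}$ combined with the super-exponential decay of $\norm{f^{(n)}}$ forces the series to converge uniformly on compact subsets of the complex $t$-plane, so the limiting functions $G$ and $H$ are analytic.

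The main obstacle is exactly the convergence in the last paragraph, and it is the reason the paper insists on the stronger notion of quasi-locality rather than mere exponential decay. As emphasized in the remark preceding the corollary, Araki's theorem only controls the spreading of the support under imaginary-time evolution by an exponential in time, so the factor $F_n(\tau)$ unavoidably grows like $e^{n\tau}$; a merely exponentially decaying observable would then produce a series that diverges for large $\tau$, whereas the faster-than-exponential decay guaranteed by quasi-locality dominates $F_n(\tau)$ for every $\tau$ and delivers a finite, analytic majorant uniformly in $\Sigma$.
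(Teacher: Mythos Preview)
The paper does not actually supply a proof of this corollary; it is stated as an ``easy corollary'' of Theorem~\ref{thm:Araki} and attributed to Araki's original work. Your argument---decomposing the quasi-local $f$ into strictly local shells $f^{(n)}$, applying Theorem~\ref{thm:Araki} term by term, and summing using the super-exponential decay of $\norm{f^{(n)}}$ against the merely exponential growth of $F_n(\tau)$---is correct and is precisely the natural derivation the paper has in mind. Your closing remark about why the stronger notion of quasi-locality (rather than exponential decay) is needed matches the paper's own discussion verbatim.

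One small comment: your $G(\tau)=H(\tau)=\norm{f}^{-1}\sum_n F_n(\tau)\norm{f^{(n)}}$ depends on the particular decay profile of $f$, not only on $r$. This is consistent with the corollary as stated (which only demands independence from the length of $\Sigma$), but you should be aware that in the later application (Lemma~\ref{lemma:strictlocality}) the functions $G$ and $H$ are reused across several different quasi-local observables as if they were universal. That is harmless there because all the observables in question are themselves imaginary-time evolutions of strictly $r$-local terms, hence share a common decay profile governed by $\chi_\ell$; a single choice of $G$ and $H$ depending only on $r$ and $J$ then suffices for the whole class. It would sharpen your write-up to note this explicitly.
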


It is immediately clear from Theorem \ref{thm:Araki} and Corollary \ref{cor:Araki} that if $f$ is
quasi-local then $\Gamma_\Sigma(f)$ is also quasi-local for every $t$, and its norm is bounded
uniformly in the size of $\Sigma$.

\subsection{Expansionals}
In order to construct the operators required to prove Theorem~\ref{thm:strictlocality}, we will need
the following object, which is known as a time-ordered exponential or expansional. For a detailed
account of its properties see \cite{fujiwara,Araki1973}.
\begin{definition}[Expansional \cite{fujiwara,Araki1973}]\label{def:expansional}
  Let $O: [0,1] \to \cB(\bC^D)$ be a continuous path of operators such that
  \begin{equation}\label{eq:exp-norm}
    \pathnorm{O} := \sup_{t \in [0,1]} \norm{O(t)} < \infty.
  \end{equation}
  The \emph{expansional} of $O(t)$, also known as the \emph{time-ordered exponential} of $O(t)$, is defined by
  \begin{equation}\label{eq:expansional}
    \OExp\left[ \int_0^1 \dd{t} O(t)\right] := \sum_{n=0}^\infty \int_0^1 \dd{t_1}
    \int_0^{t_1}\dd{t_2} \cdots \int_0^{t_{n-1}} \dd{t_n} O(t_1)\cdots O(t_n).
  \end{equation}
\end{definition}
Let us recall some useful properties of the expansional.
\begin{prop}\label{prop:expansional-properties}
  \mbox{}\\
  \begin{enumerate}
  \item If $[O(t_1),O(t_2)] = 0$ for all $t_1, t_2 \in [0,1]$, then $\OExp\left[ \int_0^1 \dd{t} O(t)
    \right]$ is equal to the usual exponential of the integral $\exp(\int_0^1 \dd{t} O(t))$.
    \item The norm of $\OExp \left[ \int_0^1 \dd{t} O(t) \right]$ is bounded by
      $\exp(\pathnorm{O})$, and moreover it holds that
    \begin{equation}\label{eq:expansional-norm}
      \norm{ \OExp\left[ \int_0^1 \dd{t} O(t) \right] - \1 } \le \exp(\pathnorm{O}) - 1.
    \end{equation}
    \item \cite[Proposition 15]{Araki1973} Let $O_1, \dots, O_n$ be operators. Then it holds that
    \begin{equation}\label{eq:iterated-expansional}
      e^{O_1}\cdots e^{O_n} = \OExp\left[ \int_0^1 \dd{t} \sum_{m=1}^n \Gamma_{O_1}^t \circ \cdots
        \circ \Gamma^t_{O_{m-1}}(O_m) \right].
    \end{equation}
  \end{enumerate}
\end{prop}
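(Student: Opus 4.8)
The plan is to dispatch the three claims in turn, using the absolute convergence furnished by the bounded-path hypothesis \eqref{eq:exp-norm}. I would prove the norm bound (item 2) first, since it simultaneously legitimizes the term-by-term manipulations needed in the other two parts. To estimate the $n$-th summand of \eqref{eq:expansional}, I would note that the ordered domain $\{1 \ge t_1 \ge \cdots \ge t_n \ge 0\}$ is one of the $n!$ congruent simplices into which the cube $[0,1]^n$ decomposes, hence has volume $1/n!$. Pulling the norm inside the integral and bounding each $\norm{O(t_i)}$ by $\pathnorm{O}$ gives $\pathnorm{O}^n/n!$ for the $n$-th term, so that $\norm{\OExp[\int_0^1 O]} \le \sum_n \pathnorm{O}^n/n! = \exp(\pathnorm{O})$; discarding the $n=0$ term, which is the empty product $\1$, yields the ``moreover'' estimate $\exp(\pathnorm{O})-1$. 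This bound also shows the defining series converges absolutely in operator norm, uniformly in $t$, justifying the interchanges of summation and integration used below.

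For the commuting case (item 1), the key observation is that when all the $O(t)$ commute the integrand $O(t_1)\cdots O(t_n)$ is invariant under permuting the $t_i$. I would write the ordinary exponential as $\exp(\int_0^1 O(t)\dd{t}) = \sum_n \frac{1}{n!}\big(\int_0^1 O(t)\dd{t}\big)^n$, expand the $n$-th power as an integral of $O(t_1)\cdots O(t_n)$ over the full cube $[0,1]^n$, and split the cube into its $n!$ ordering simplices. By the permutation symmetry of the integrand each simplex contributes the same value, namely the integral over the standard ordered simplex, so $\frac{1}{n!}$ times the cube integral equals exactly the $n$-th term of \eqref{eq:expansional}. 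Matching term by term gives the identity.

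Item 3 is the product-of-exponentials formula, which I would take directly from \cite{Araki1973} as stated. If a self-contained argument is wanted, I would set $U(s) = e^{sO_1}\cdots e^{sO_n}$, note $U(0)=\1$ and $U(1)=e^{O_1}\cdots e^{O_n}$, and compute the left logarithmic derivative
\[
U'(s) U(s)^{-1} = \sum_{m=1}^n e^{sO_1}\cdots e^{sO_{m-1}}\, O_m\, e^{-sO_{m-1}}\cdots e^{-sO_1} = \sum_{m=1}^n \Gamma_{O_1}^s \circ \cdots \circ \Gamma_{O_{m-1}}^s(O_m),
\]
where the telescoping cancellation of $e^{sO_m}\cdots e^{sO_n}$ against $U(s)^{-1}$ is what produces the nested conjugations. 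Differentiating \eqref{eq:expansional} in its outermost endpoint shows the expansional $F$ solves $F'(s)=A(s)F(s)$, $F(0)=\1$, with the generator on the left; by uniqueness of solutions to this linear equation, taking $A(s)$ to be the sum above and evaluating at $s=1$ gives the claim.

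The only genuine obstacle I anticipate is the bookkeeping in item 3: correctly identifying the nested conjugations as the composition $\Gamma_{O_1}^s \circ \cdots \circ \Gamma_{O_{m-1}}^s$ and verifying that the generator multiplies $U$ from the left, so that the convention matches the time ordering in Definition \ref{def:expansional}. Items 1 and 2 are otherwise routine once the simplex-volume and permutation-symmetry facts are in hand.
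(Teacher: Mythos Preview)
Your proof of item 2 is correct and essentially identical to the paper's: the paper's argument is precisely the simplex-volume bound $\pathnorm{O}^n/n!$ summed to $\exp(\pathnorm{O})-1$. The paper in fact proves \emph{only} item 2, leaving item 1 as folklore and citing \cite{Araki1973} for item 3, so your treatments of items 1 and 3 go beyond what the paper supplies; both are correct and standard (the symmetrization-over-simplices argument for item 1, and the ODE uniqueness argument for item 3, with the left-acting generator matching the convention in Definition~\ref{def:expansional}).
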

\begin{proof}[Proof of 2.]
  Since we have that
  \[\OExp\left[ \int_0^1 \dd{t} O(t) \right] = \1 + \sum_{n=1}^\infty
  \int_0^1 \dd{t_1}\int_0^{t_1}\dd{t_2} \cdots \int_0^{t_{n-1}} \dd{t_n} O(t_1)\cdots O(t_n),\]
  then we can bound the norm of $\OExp\left[ \int_0^1 \dd{t} O(t) \right] - \1$ as follows:
  \begin{eqnarray}
    \norm{ \OExp\left[ \int_0^1 \dd{t} O(t) \right] - \1 }
    &\le& \sum_{n=1}^\infty \int_0^1 \dd{t_1}\int_0^{t_1}\dd{t_2} \cdots \int_0^{t_{n-1}} \dd{t_n}
    \pathnorm{O}^n \nonumber\\
    &=& \sum_{n=1}^\infty \frac{\pathnorm{O}^n}{n!} = \exp(\pathnorm{O})-1.\nonumber
  \end{eqnarray}
\end{proof}

\subsection{The finite range Hamiltonian case}\label{sec:qfactPEPS}
We are now ready to prove Theorem \ref{thm:strictlocality}.
In order to show that the thermal states $\rho_{\partial ABC}$ are approximately factorizable, we will explicitly
construct operators
$\sigma_{\partial ABC}:=\Delta_{zb}\Delta_{az}$, $\sigma_{\partial B}:=\Omega_{zc}\Omega_{dz}$,
$\sigma_{\partial AB}:=\Omega_{zc}\Delta_{az}$, and $\sigma_{\partial BC}:=\Delta_{zb}\Omega_{dz}$
that satisfy Equations \eqref{QF1} and \eqref{QF2}.

To avoid excessive use of superscripts, we will denote the Hamiltonian associated to the
different regions as follows:
\bea
\rho_{\partial ABC}&=&e^{2Q_{axyb}}, {\rm with} ~~~ Q_{axyb}=\sum_{Z\subset axyb} q_Z\label{eqn:Q}\\
\rho_{\partial AB}&=&e^{2R_{axyc}}, {\rm with} ~~~  R_{axyc}=\sum_{Z\subset axyc} r_Z\label{eqn:R}\\
\rho_{\partial BC}&=&e^{2S_{dxyb}}, {\rm with} ~~~ S_{dxyb}=\sum_{Z\subset dxyb} s_Z\label{eqn:S}\\
\rho_{\partial B}&=&e^{2T_{dxyc}}, {\rm with} ~~~ T_{dxyc}=\sum_{Z\subset dxyc} t_Z.\label{eqn:T}
\eea

Given $\alpha$ one or more of the regions $a,b,c,d,x,y$, we will write $Q_\alpha=\sum_{Z\subseteq \alpha}Q_Z$ to mean all of the
local Hamiltonian terms of $Q_{axyb}$ that are strictly inside $\alpha\subseteq axyb$. The terms
intersecting two regions (say $\alpha$ and $\beta$) will be denoted $Q_{\partial \alpha\beta}$.
The same convention will be used for the other three Hamiltonians (Eqns.~(\ref{eqn:Q},\ref{eqn:R},\ref{eqn:S},\ref{eqn:T})).

We define the operators $\Delta$ and $\Omega$ as follows:
\bea \Delta_{axy}&:=& e^{Q_{ax}}e^{-Q_y}e^{Q_{axy}}\label{eqn:Deltaql}\\
\Omega_{xyc}&:=& e^{T_{xyc}}e^{-T_x}e^{T_{yc}}.\label{eqn:Deltaql2}\eea

We will use the properties of expansionals to show that, with these definitions of $\Delta$ and
$\Omega$, the boundary states are approximately factorizable.
We will first prove the following lemma, regarding $\sigma_{AB}$: the rest of the bounds of
Equations \eqref{QF1} and \eqref{QF2} can be proven in an analogous manner.

\begin{lemma}\label{lemma:strictlocality}
  With the definitions given above, we have that
  \begin{equation}
  \norm{\rho_{\partial AB}^{1/2} \sigma_{\partial AB}^{-1} \rho_{\partial AB}^{1/2} - \1} \le
  \eta(\epsilon(\ell)),
\end{equation}
where $\eta(t) = e^t-1$ and
\[ \epsilon(\ell) \le 2F_r(2J) \left[ 2r(1+H(2J)^2) + H(2J)G(2J) \right]\max\left\{  J\chi_{\ell}(2J),
      \sum_{l \ge \ell} \delta(l) \right\},
  \]
  and the functions $F_r$, $G$, $H$ and $\chi$ are given by Theorem \ref{thm:Araki} and Corollary \ref{cor:Araki}.
\end{lemma}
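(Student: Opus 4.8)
The plan is to convert the sandwiched distance into the path-norm of an expansional and then read off the decay from Araki's estimates. Using $\rho_{\partial AB}=e^{2R_{axyc}}$, hence $\rho_{\partial AB}^{1/2}=e^{R_{axyc}}$, and substituting $\sigma_{\partial AB}=\Omega_{xyc}\Delta_{axy}$ with the definitions of $\Delta_{axy}$ and $\Omega_{xyc}$, the quantity to control becomes a product of eight exponentials,
\[
 \rho_{\partial AB}^{1/2}\sigma_{\partial AB}^{-1}\rho_{\partial AB}^{1/2}=e^{R_{axyc}}e^{-Q_{axy}}e^{Q_{y}}e^{-Q_{ax}}\,e^{-T_{yc}}e^{T_{x}}e^{-T_{xyc}}\,e^{R_{axyc}}.
\]
The first thing I would record is that the eight exponents $O_1,\dots,O_8$ nearly cancel: writing $Q_{axy}=Q_{ax}+Q_{y}+Q_{\partial xy}$ and $T_{xyc}=T_{x}+T_{yc}+T_{\partial xy}$ for the pieces $Q_{\partial xy},T_{\partial xy}$ crossing the $x|y$ cut, they sum to $2R_{axyc}-2Q_{ax}-2T_{yc}-Q_{\partial xy}-T_{\partial xy}$, which $\delta$-homogeneity turns (up to a controlled error) into $2R_{axyc}-2R_{ax}-2R_{yc}-2R_{\partial xy}=0$. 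So in the commuting approximation (Proposition~\ref{prop:expansional-properties}(1)) the product is exactly $\1$, and the entire content of the lemma is the control of non-commutativity and of the homogeneity mismatch.

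I would then pass to the expansional. By Proposition~\ref{prop:expansional-properties}(3) the product equals $\OExp[\int_0^1\dd{t}\,O(t)]$ with $O(t)=\sum_{m=1}^{8}\Gamma^t_{O_1}\circ\cdots\circ\Gamma^t_{O_{m-1}}(O_m)$, so by Proposition~\ref{prop:expansional-properties}(2) it suffices to prove $\pathnorm{O}\le\epsilon(\ell)$, giving $\norm{\rho_{\partial AB}^{1/2}\sigma_{\partial AB}^{-1}\rho_{\partial AB}^{1/2}-\1}\le\eta(\pathnorm O)\le\eta(\epsilon(\ell))$. The mechanism I would exploit is already visible in the two-factor identity $e^A e^{-B}=\OExp[\int_0^1\Gamma^t_A(A-B)\dd{t}]$, obtained from the fixed-point relation $\Gamma^t_A(A)=A$: the extensive ``constant'' part disappears and one is left with a conjugation, whose norm is controlled by Araki, applied to the \emph{difference} $A-B$. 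The aim is to run the analogous reduction on all eight factors so that, after using $\Gamma^t_{O_i}(O_i)=O_i$ and $\sum_mO_m\approx0$, the extensive bulk of $O(t)$ cancels and only two types of residue remain.

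The first type of residue is the interface mismatch $2R_{\partial xy}-Q_{\partial xy}-T_{\partial xy}$ together with the differences $Q_Z-R_Z$, $T_Z-R_Z$ on the overlap regions; summed over terms $Z$ at distance $\ge\ell$ from the region where the Hamiltonians disagree, these contribute $\sum_{l\ge\ell}\delta(l)$. The second type consists of $O(r)$ local terms supported within range $r$ of the $x|y$ cut, each appearing as a difference of two conjugations of the same operator $f$; the point is that the two conjugating strings, restricted to the $\ell$-neighborhood of $f$ inside the buffer $z=xy$, coincide, so Araki's Theorem~\ref{thm:Araki} (with Corollary~\ref{cor:Araki} for the quasi-local images produced by the earlier conjugations) bounds each difference by $\chi_\ell(2J)$ times a norm factor drawn from $F_r(2J)$, $G(2J)$, $H(2J)$. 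Counting the $\sim 2r$ interface terms, accounting for the two-sided conjugations through the factors $1+H(2J)^2$ and $H(2J)G(2J)$, and bounding each local term's norm by $J$, yields the prefactor $2F_r(2J)[2r(1+H(2J)^2)+H(2J)G(2J)]$ multiplying $\max\{J\chi_\ell(2J),\sum_{l\ge\ell}\delta(l)\}$, i.e. exactly $\epsilon(\ell)$. The remaining estimates \eqref{QF1} and \eqref{QF2} follow from the same computation with the roles of the regions permuted.

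I expect the main obstacle to be the collapse-and-localize bookkeeping of the previous paragraph. Because the maps $\Gamma^t$ act nontrivially, the cancellation of the extensive bulk is not term-by-term: one must organize the rewriting, via the identities $\Gamma^t_{O_i}(O_i)=O_i$, so that every uncancelled piece is presented as a full-versus-truncated conjugation of a genuinely local operator that sees a buffer of width $\ge\ell$ on each side. This is where the geometry of Fig.~\ref{fig:bdregions-2} and the hypothesis that $x$ and $y$ are each $\ell$ long are used decisively, and where Araki's super-exponential $\chi_\ell$ (rather than a useless $O(1)$ bound) is extracted; checking that the quasi-local observables generated by successive conjugations still satisfy the hypotheses of Corollary~\ref{cor:Araki} is the delicate technical point, after which the constants $F_r,G,H$ and the factor $2r$ are routine.
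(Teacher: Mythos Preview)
Your plan is essentially the same as the paper's: rewrite $\rho_{\partial AB}^{1/2}\sigma_{\partial AB}^{-1}\rho_{\partial AB}^{1/2}$ as a product of exponentials, pass to the expansional via Proposition~\ref{prop:expansional-properties}(3), and bound the path-norm of the resulting integrand using Araki's estimates together with $\delta$-homogeneity. Your identification of the two types of residue (homogeneity mismatch contributing $\sum_{l\ge\ell}\delta(l)$, and full-versus-truncated conjugation contributing $\chi_\ell$) matches the paper's decomposition into the terms $X_1,\dots,X_5$.

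The one substantive simplification you miss is this: after writing out the eight exponentials, the paper observes that $Q_{ax}$ and $T_{yc}$ have disjoint supports and hence commute, so the middle two factors can be swapped to give
\[
\rho_{\partial AB}^{1/2}\sigma_{\partial AB}^{-1}\rho_{\partial AB}^{1/2}
= \bigl(e^{R_{axyc}}e^{-Q_{axy}}e^{Q_y}e^{-T_{yc}}\bigr)\bigl(e^{-Q_{ax}}e^{T_x}e^{-T_{xyc}}e^{R_{axyc}}\bigr)=:O_LO_R,
\]
and it then suffices to show $\norm{O_L-\1},\norm{O_R-\1}\le e^{\epsilon(\ell)/2}-1$ separately. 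Each of $O_L$, $O_R$ is now a product of only four exponentials, and the expansional integrand for $O_L$ involves at most three nested conjugations rather than seven. This is precisely what makes the ``collapse-and-localize bookkeeping'' you flag as the main obstacle tractable: the paper's explicit decomposition of the $O_L$ integrand into five pieces $X_1,\dots,X_5$ (two pure homogeneity terms, two mixed terms, and one pure Araki-localization term $X_5=\Gamma^t_{R_{axyc}}\bigl(R_{\partial yc}-\Gamma^{-t}_{Q_{axy}}\Gamma^t_{Q_y}(R_{\partial yc})\bigr)$) is clean exactly because the nesting depth is small. Your eight-factor expansional would require cancelling the extensive $R_{axyc}$ against a seven-fold nested conjugation of itself, which is doable in principle but considerably messier. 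The factor of $2$ in the stated bound for $\epsilon(\ell)$ is the trace of this $O_LO_R$ split.
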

\begin{proof}
  To start with, rewrite the expression
  \bea \rho^{1/2}_{\partial AB}\sigma^{-1}_{\partial AB}\rho^{1/2}_{\partial AB}&=& e^{R_{axyc}} \Delta^{-1}_{axy}\Omega^{-1}_{xyc}e^{R_{axyb}}\nonumber\\
  &=& e^{R_{axyc}}e^{-Q_{axy}}e^{Q_y}e^{-Q_{ax}}e^{-T_{yc}}e^{T_x}e^{-T_{xyc}}e^{R_{axyc}}\nonumber\\
  &=& (e^{R_{axyc}}e^{-Q_{axy}}e^{Q_y}e^{-T_{yc}})(e^{-Q_{ax}}e^{T_x}e^{-T_{xyc}}e^{R_{axyc}})\nonumber\\
  &:=&O_LO_R,\nonumber\eea where we have used that $Q_{ax}$ and $T_{yc}$ commute since they are
  non-overlapping.  We will now show that there exists a decaying function $\epsilon(\ell)$,
  that we will specify later, such that
  \[
    \max(\norm{O_L -\1},\norm{O_R-\1}) \le \exp(\frac{\epsilon(\ell)}{2}) -1,
  \]
  which implies that
  \begin{equation}
    \norm{O_LO_R-\1}\leq \norm{O_L}~\norm{O_R-\1}+\norm{O_L-\1}\leq
    \exp(\epsilon(\ell)) - 1.
  \end{equation}
Now, in order to show that $O_L$ (or equivalently $O_R$) is close to the identity, we will apply
equation \eqref{eq:iterated-expansional} to the operator $O_L$ and obtain
\begin{multline*}
  O_L = e^{R_{axyc}}e^{-Q_{axy}}e^{Q_y}e^{-T_{yc}} = \\
  \OExp\left[ \int_0^1 \dd{t} R_{axyc} - \Gamma^t_{R_{axyc}}(Q_{axy}) +
    \Gamma^t_{R_{axyc}}\Gamma^{-t}_{Q_{axy}}(Q_y) -  \Gamma^t_{R_{axyc}}\Gamma^{-t}_{Q_{axy}}\Gamma^t_{Q_y}(T_{yc})  \right].
\end{multline*}

We will further decompose the r.h.s. of the previous equation as follows: we substitute $R_{axyc} =
\Gamma^t_{R_{axyc}}(R_{axyc})$ with $\Gamma^t_{R_{axyc}}(R_{axy} + R_{c} + R_{\partial yc})$, and
$\Gamma^t_{Q_y}(T_{yc})$ with $\Gamma^t_{Q_y}(T_y + T_{\partial yc}) + T_c$, since $[T_c,Q_y] = 0$.
The expression then reduces to
\begin{equation}\label{eqn:keyeq}
  O_L = \OExp\left[\int_0^1 \dd{t} \sum_{i=1}^5 X_i(t) \right],
\end{equation}
where
\begin{align*}
  X_1(t) &= \Gamma^t_{R_{axyc}}\left(R_{axy} - Q_{axy}\right),
  & X_2(t) &= \Gamma^t_{R_{axyc}}\left(R_c - T_c\right), \\
  X_3(t) &= \Gamma^t_{R_{axyc}}\Gamma_{Q_{axy}}^{-t}\Gamma_{Q_y}^t(Q_y -T_y),
  & X_4(t) &= \Gamma^t_{R_{axyc}}\Gamma_{Q_{axy}}^{-t}\Gamma_{Q_y}^t(R_{\partial yc}-T_{\partial yc}), \\
  X_5(t) &= \Gamma^t_{R_{axyc}}\left(R_{\partial yc}-\Gamma_{Q_{axy}}^{-t}\Gamma_{Q_y}^t(R_{\partial yc})\right). & &
\end{align*}
Let us denote $\epsilon(\ell) := 2\sum_i \pathnorm{X_i}$. Then by equation
\eqref{eq:expansional-norm} we have that
\[ \norm{O_L- \1} \le \exp(\pathnorm{\sum_i X_i}) -1 \le \exp(\frac{\epsilon(\ell)}{2}) -1. \]
Thus in order to bound $\norm{O_L-\1}$, it remains to show
that for each $i$ the norm $\pathnorm{X_i}$ is small.
The first term can be bounded as follows. Note
that \[ R_{axy}-Q_{axy}=\sum_{Z\subset axy} (r_Z-q_Z).\]
Then $(r_Z-q_Z)$ is zero if $Z$ has
radius larger than $r$. Then, from Theorem \ref{thm:Araki}, we get that $\Gamma^t_{R_{axyc}}$ acting
on a local operator is quasi-local in the sense of Def.~\ref{def:quasi-locality}, and that its norm is bounded
by a constant function $F$, hence
\begin{equation*}
  \norm{\Gamma^t_{R_{axyc}}(R_{axy}-Q_{axy})}
  \le F_r(2tJ) \sum_{Z\subset axy} \norm{r_{Z}-q_{Z}}
   \le F_r(2tJ) \sum_{Z\subset axy}\delta(d(Z,\partial C)) ,
\end{equation*}
where term on the r.h.s. is controlled by $\delta$-homogeneity. Since any $Z\subset axy$ is at least
at distance $\ell$ from $\partial C$, we obtain the bound:
\begin{equation}
  \pathnorm{X_1} \le r F_r(2J) \sum_{l > \ell} \delta(l).
\end{equation}
A similar argument works for $\pathnorm{X_2}$, which can be bounded in the same way. In order to
bound the norm of $X_3$ and $X_4$, we will have to add an extra step instead. We start from the same decomposition
\[
  X_3(t) = \sum_{Z\subset y} \Gamma^t_{R_{axyc}}\Gamma_{Q_{axy}}^{-t}\Gamma_{Q_y}^t(q_Z - t_Z),
\]
where the sum only runs over $Z$ with diameter smaller than the interaction length $r$. Each of the
three $\Gamma^t$ maps quasi-local operators to quasi-local operators, with a bound on the norm given
by equation \eqref{eqn:Araki2}, so that we get
\[ \norm{ \Gamma^t_{R_{axyc}}\Gamma_{Q_{axy}}^{-t}\Gamma_{Q_y}^t(q_Z - t_Z)} \le
  F_r(2tJ)H(2tJ)^2 \delta(d(Z,\partial B)), \]
so that again, we can bound
\[ \pathnorm{X_3} \le r F_r(2J)H(2J)^2 \sum_{l > \ell} \delta(l). \]
A similar analysis will work for $X_4$.

We now focus on $X_5$.  We expect $\Gamma_{Q_{axy}}^{-t}\Gamma_{Q_y}^t(R_{\partial yc})\approx
\Gamma_{Q_{y}}^{-t}\Gamma_{Q_y}^t(R_{\partial yc})=R_{\partial yc}$ when $y$ is large enough. Once
again, we use Theorem \ref{thm:Araki} and Corollary \ref{cor:Araki} to show that this indeed holds.
As for the other terms, we invoke the fact that quasi-local operators get mapped to quasi-local
operators under $\Gamma^t$. Furthermore,
we note that since $R_{\partial yc}$ is strictly local we get that $R_{\partial yc}-\Gamma_{Q_{axy}}^{-t}\Gamma_{Q_y}^t(R_{\partial yc})$ is quasi-local with center at $\partial yc$. This implies that
\be
\norm{\Gamma^t_{R_{axyc}}(R_{\partial yc}-\Gamma_{Q_{axy}}^{-t}\Gamma_{Q_y}^t(R_{\partial yc}))}
\leq H(2tJ)\norm{R_{\partial yc}-\Gamma_{Q_{axy}}^{-t}\Gamma_{Q_y}^t(R_{\partial yc})},
\ee
by Theorem \ref{thm:Araki}.
We now set out to show that $\norm{R_{\partial yc}-\Gamma_{Q_{axy}}^{-t}\Gamma_{Q_y}^t(R_{\partial yc})}$ is small:
\bea
\norm{R_{\partial yc}-\Gamma_{Q_{axy}}^{-t}\Gamma_{Q_y}^t(R_{\partial yc})} &
=& \norm{(\Gamma_{Q_{y}}^{-t}-\Gamma_{Q_{axy}}^{-t})\Gamma_{Q_y}^t(R_{\partial yc})}\nonumber\\
&\leq
&F_r(2tJ)G(2tJ) J\chi_\ell(2tJ).
\eea
Putting all of the bits together, we get that
\[ \epsilon(\ell) \le
  F_r(2J) \left[ 2r(1+H(2J)^2) + H(2J)G(2J) \right]\max\left\{  J\chi_{\ell}(2J),
      \sum_{l \ge \ell} \delta(l) \right\}.
 \]
\end{proof}

\subsection{Extension to quasi-local interactions}
In the proof of Lemma \ref{lemma:strictlocality}, Theorem \ref{thm:Araki} and Corollary
\ref{cor:Araki} played a crucial role. It is clear from the proof that the same approach will
generalize to a larger class of interactions as long as one is able to generalize equations
\eqref{eqn:cor-Araki1} and \eqref{eqn:cor-Araki2} to such Hamiltonians.

Abstractly, we can see Corollary \ref{cor:Araki} as a statement about
two classes of operators: on the one hand we have the local operators $\mathcal U_0$ and the
quasi-local ones $\mathcal U_1$. Then the result states that, for every $H\in \mathcal U_0$ and
$f\in \mathcal U_1$, the imaginary time evolution $\Gamma_H^t(f)$ still belongs to $\mathcal U_1$ for all
time $t$, and moreover that it can be well approximated by evolutions generated by a ``truncation''
of $H$ around the center of the support of $f$. In the proof of Lemma \ref{lemma:strictlocality} we
then apply the imaginary time evolution $\Gamma_H^t$ to the interaction terms of the Hamiltonians,
exploiting the fact that $\mathcal U_0 \subset \mathcal U_1$.

Extensions of Corollary \ref{cor:Araki} would then enlarge the classes $\mathcal U_0$ and
$\mathcal U_1$ for which these properties hold.  By using similar techniques to the original proof,
we believe that a first extension of Araki's theorem can be proven, to allow for quasi-local
Hamiltonian interactions (so that $\mathcal U_0 = \mathcal U_1$ are the quasi-local operators
defined in Definition \ref{def:quasi-locality}).
\begin{conjecture}[Quasi-local Araki]\label{cnj:qlAraki}
  Let $Q_\Sigma = \sum_{Z\subset \Sigma} q_Z$ be a quasi-local Hamiltonian on $\Sigma$ with strength
  $J$, and let $f$ be
  a quasi-local observable with center at the origin.
  Let $Q_{\ell} = \sum_{Z\subset [-\ell, \ell]} q_Z $. Then
  \begin{align}
    \norm{\Gamma^t_\Sigma(f)-\Gamma^t_\ell(f)} &\leq \mu_{\ell}(\tau)G(\tau)\norm{f}\label{eqn:Araki3}\\
    \norm{\Gamma^t_\Sigma(f)} &\leq H(\tau) \norm{f},\label{eqn:Araki4}
  \end{align}
  for $\tau=2Jt$ and for some analytic functions $G$ and $H$  which do not depend on the length of $\Sigma$.
\end{conjecture}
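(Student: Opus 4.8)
The plan is to follow the architecture of Araki's argument behind Theorem~\ref{thm:Araki}, replacing every step in which one enumerates the finitely many range-$r$ terms meeting a region by a \emph{norm-weighted} sum that is rendered finite by the quasi-locality hypothesis of Definition~\ref{def:quasi-locality}. For an operator $g = \sum_Z g_Z$ I would work with the family of weighted seminorms
\be
  \norm{g}_x := \sup_{u}\sum_{Z\ni u} x^{\diam Z}\norm{g_Z},
\ee
which are simultaneously finite for every $x\ge 1$ exactly when $g$ is quasi-local, so that the quasi-local strength of $Q_\Sigma$ provides an $x$-indexed budget we can spend as the imaginary time grows. The two assertions \eqref{eqn:Araki3} and \eqref{eqn:Araki4} are then, respectively, a uniform-in-$\Sigma$ boundedness statement and a truncation statement for the map $\Gamma^t_\Sigma$ in this scale of norms.

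For \eqref{eqn:Araki4} I would expand the conjugation as a Dyson series, $\Gamma^t_\Sigma(f)=\sum_{n\ge 0}\frac{t^n}{n!}\,\mathrm{ad}^n_{Q_\Sigma}(f)$ with $\mathrm{ad}_{Q_\Sigma}(\cdot)=[Q_\Sigma,\cdot]$ (equivalently, organise the bookkeeping through the expansional of Proposition~\ref{prop:expansional-properties}), and control each iterated commutator in the weighted norm. The key estimate is that $\mathrm{ad}_{Q_\Sigma}$ acts boundedly in the weighted norms at the cost of a controlled decrease of the weight parameter, because $[q_Z,g_Y]$ is nonzero only when $Z$ and $Y$ overlap and then $\diam(Z\cup Y)\le \diam Z+\diam Y$, so the weight $x^{\diam Z}$ attached to $q_Z$ pays for the enlargement of the support. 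Summing the resulting recursion against the $1/n!$ from the series yields $\norm{\Gamma^t_\Sigma(f)}\le H(\tau)\norm{f}$ with $H$ analytic and independent of $\Sigma$, the convergence of the series playing exactly the role that the boundedness of $F_n$ plays in Theorem~\ref{thm:Araki}.

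For the truncation bound \eqref{eqn:Araki3} I would interpolate between the two evolutions by Duhamel's formula,
\be
  \Gamma^t_\Sigma(f)-\Gamma^t_\ell(f)=\int_0^t \Gamma^s_\Sigma\left([\,Q_\Sigma-Q_\ell\,,\,\Gamma^{t-s}_\ell(f)\,]\right)\dd{s},
\ee
and note that $Q_\Sigma-Q_\ell$ collects only the interaction terms $q_Z$ with $Z\not\subset[-\ell,\ell]$. Since $f$ is centered at the origin, the estimate of the previous paragraph shows $\Gamma^{t-s}_\ell(f)$ is quasi-local about the origin, so the commutator can be nonzero only when such a $q_Z$ bridges the distance $\ell$ back to the center of $f$; this forces either a large diameter of $Z$ or a long tail of the evolved observable, both of which are small. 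Bounding the commutator by a weighted tail $\mu_\ell(\tau)$ of the interaction over regions reaching across distance $\ell$, and applying \eqref{eqn:Araki4} to the outer $\Gamma^s_\Sigma$, produces the claimed bound with an analytic $G$, provided $\mu_\ell$ decays as fast as required.

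The hard part will be carrying out the recursion of the second paragraph \emph{uniformly in} $\Sigma$ while simultaneously extracting from $\mu_\ell$ a decay faster than any exponential in $\ell$. This difficulty is intrinsic and is precisely why the statement remains a conjecture: as Araki's bound $F_n(x)\le e^{(n-r+1)x+2\log(r)e^{xr}}$ already shows, each unit of imaginary time inflates the effective correlation length like $e^{\tau}$, so the budget $x$ one must spend in $\norm{Q_\Sigma}_x$ grows like $e^{\tau}$ as well. Quasi-locality supplies a finite budget at \emph{every} $x$, but merely exponential tails $\norm{q_Z}\sim e^{-c\,\diam Z}$ make $\norm{Q_\Sigma}_x$ diverge once $x>e^{c}$, so no $\Sigma$-independent bound survives for large $\tau$; this is the same obstruction noted after Corollary~\ref{cor:Araki}. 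Re-deriving the combinatorics so that the factorial suppression present in $\chi_\ell$ (Eqn.~\eqref{eqn:Arakifuncs}) persists when the fixed range $r$ is replaced by the full quasi-local tail---thereby giving $\mu_\ell$ a factorial-type denominator---is the genuine technical crux, and the one place where Araki's finite-range counting lemmas must be rebuilt rather than quoted.
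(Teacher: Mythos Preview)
There is nothing in the paper to compare against: the statement is explicitly labeled a \emph{conjecture}, and the authors write immediately afterwards that ``We do not provide a proof of this generalization of Theorem~\ref{thm:Araki}, but we observe that if it holds, then the proof of Theorem~\ref{thm:strictlocality} carries through verbatim.'' The only hint they give is the remark that ``by using similar techniques to the original proof, we believe that a first extension of Araki's theorem can be proven,'' which is precisely the programme you outline --- Dyson/commutator expansion for \eqref{eqn:Araki4}, Duhamel interpolation for \eqref{eqn:Araki3}, with the finite-range bookkeeping replaced by weighted sums finite under the quasi-locality hypothesis.

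Your sketch is therefore aligned with the authors' stated intuition, and you have correctly isolated the genuine obstruction: in Araki's finite-range argument the factorial in $\chi_\ell$ comes from the fact that reaching distance $\ell$ requires at least $\lfloor\ell/r\rfloor$ commutators, whereas with unbounded-range interactions a single long jump can cover the distance. Recovering a factorial-type denominator for $\mu_\ell$ then becomes a combinatorial estimate over paths with step lengths weighted by $x^{\diam Z}\norm{q_Z}$, and it is exactly the faster-than-exponential decay in Definition~\ref{def:quasi-locality} that makes every such weighted norm $\norm{Q_\Sigma}_x$ finite for all $x$, so that one can let $x$ grow with $\tau$ as Araki's $e^{\tau r}$ scaling demands. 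You are right that merely exponential tails break this; that is also why the paper distinguishes quasi-locality from exponential decay and leaves the latter case open in the discussion surrounding Conjecture~\ref{conjecture-bulk-boundary}.

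In short: your proposal is not a proof and does not pretend to be one, but as a roadmap it matches what the paper itself suggests, and your diagnosis of where the work lies is accurate. What remains is to actually execute the path-counting/recursion step uniformly in $\Sigma$ --- neither you nor the paper supplies this.
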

We do not provide a proof of this generalization of Theorem~\ref{thm:Araki}, but we observe that if
it holds, then the proof of Theorem \ref{thm:strictlocality} carries through verbatim of the case of
quasi-local interactions, obtaining the following:
\begin{thm}\label{thm:quasilocality}
  If Conjecture~\ref{cnj:qlAraki} holds, then thermal states of quasilocal Hamiltonians with
  strength $J$ and $\delta$-homogeneous are
  $\epsilon(\ell)$-approximately factorizable on regions
  $\{\partial ABC,\partial AB,\partial BC,\partial B\}$, with
  \begin{equation}
    \epsilon(\ell) = c(J)\left[ \mu_\ell(2J) + \sum_{l \ge \ell} \delta(l) \right]
  \end{equation}
  with $c(J)$ a positive constants independent of $\ell$.
\end{thm}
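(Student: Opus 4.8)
The plan is to reprove Theorem~\ref{thm:strictlocality} line by line, replacing every appeal to Araki's Theorem~\ref{thm:Araki} and Corollary~\ref{cor:Araki} by the corresponding quasi-local bounds \eqref{eqn:Araki3} and \eqref{eqn:Araki4} furnished by Conjecture~\ref{cnj:qlAraki}. Concretely, I would keep the \emph{same} construction of the auxiliary operators $\Delta_{axy}$ and $\Omega_{xyc}$ from \eqref{eqn:Deltaql} and \eqref{eqn:Deltaql2}, now built from the quasi-local interaction terms, and establish the analogue of Lemma~\ref{lemma:strictlocality} for the region $AB$; the bounds for $\partial B$, $\partial BC$ and $\partial ABC$ then follow by the same argument with the regions permuted, exactly as in the strictly local case.

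The first observation is that the entire \emph{algebraic} skeleton of the proof is insensitive to the locality assumption. The factorization $\rho^{1/2}_{\partial AB}\sigma^{-1}_{\partial AB}\rho^{1/2}_{\partial AB}=O_L O_R$, the expansional identity \eqref{eq:iterated-expansional} applied to $O_L$, and the subsequent decomposition \eqref{eqn:keyeq} into the five terms $X_1,\dots,X_5$ use only linearity of $\Gamma^t_\bullet$, the expansional properties of Proposition~\ref{prop:expansional-properties}, and the commutation of Hamiltonian terms supported on disjoint regions (such as $[Q_{ax},T_{yc}]=0$ and $[T_c,Q_y]=0$). None of these depend on the interactions having finite range, so they transfer verbatim to the quasi-local setting.

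The second step is to re-derive the bounds on each $\pathnorm{X_i}$. Each $X_i$ is an imaginary-time-evolved difference of interaction terms, and each nested $\Gamma^t$ map is now controlled by the uniform norm bound \eqref{eqn:Araki4}, $\norm{\Gamma^t_\Sigma(f)}\le H(2tJ)\norm{f}$, which holds independently of the support of $f$ because $f$ is treated as a quasi-local observable. This replaces the $F_r$ and $H$ factors that appeared in the strictly local estimates for $X_1,\dots,X_4$. For the term $X_5$, the crucial quantity $\norm{(\Gamma^{-t}_{Q_y}-\Gamma^{-t}_{Q_{axy}})\Gamma^t_{Q_y}(R_{\partial yc})}$ is estimated by \eqref{eqn:Araki3}, so that the fast-decaying truncation error $\chi_\ell$ is simply replaced by $\mu_\ell$. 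Summing the five contributions and absorbing all the analytic prefactors $G(2J)$, $H(2J)$ into a single constant $c(J)$ yields $\epsilon(\ell)\le c(J)\bigl[\mu_\ell(2J)+\sum_{l\ge\ell}\delta(l)\bigr]$, after which $O_LO_R$ is brought close to $\1$ through \eqref{eq:expansional-norm} exactly as before.

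The main obstacle, and the only place where genuine care beyond ``verbatim'' substitution is required, is the convergence of the sums over interaction terms. In the finite-range case each $X_i$ produced a \emph{finite} sum $\sum_{Z\subset\alpha}\norm{r_Z-q_Z}$, with boundedly many terms through any given point; in the quasi-local case the $q_Z$ no longer vanish beyond a fixed radius, so this becomes an infinite series. I would control it by combining $\delta$-homogeneity, which bounds $\norm{r_Z-q_Z}$ by $\delta(\dist(Z,\partial B))$, with the quasi-local decay of the interactions, which renders the sum over the \emph{extents} of $Z$ anchored at a fixed distance convergent. The counting factor $r$ of the strict case is thereby replaced by a finite constant depending only on the quasi-local decay, which is folded into $c(J)$; what survives is the distance sum $\sum_{l\ge\ell}\delta(l)$. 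Verifying that this series converges uniformly in the region is the one point that must be checked explicitly, and it is why the super-exponential (quasi-local) decay assumed throughout, rather than mere exponential decay, is the natural hypothesis: it is the decay rate preserved under the imaginary-time evolution $\Gamma^t$ and keeps these sums summable.
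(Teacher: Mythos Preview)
Your proposal is correct and matches the paper's own argument, which is literally a single sentence stating that the proof of Theorem~\ref{thm:strictlocality} ``carries through verbatim'' once Conjecture~\ref{cnj:qlAraki} is assumed. Your discussion of the convergence of the infinite interaction sums in the quasi-local case is a genuine point of care that the paper does not spell out, but it is consistent with (and more thorough than) what the paper intends.
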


\subsection{Boundary Hamiltonians of PEPS}
We now comment on how to apply Theorem~\ref{thm:strictlocality} and Theorem~\ref{thm:quasilocality} in order to show that the boundary state of a PEPS is
approximately factorizabile. If the PEPS is injective on region $A$, then the boundary state $\rho_{\partial A}$ will be full
rank and therefore can be written as the Gibbs state of some Hermitian operator, which we will call
the \emph{boundary Hamiltonian}.
\begin{definition}\label{def:boundary-Hamiltonian}
  If a PEPS is injective on a region $A$, then the \emph{boundary Hamiltonian} is given by
  \begin{equation}
    Q_{\partial A} = \frac{1}{2} \log(\rho_{\partial A}),
  \end{equation}
  where $\rho_{\partial A}$ is the boundary state of the PEPS.
\end{definition}

To cover the case of $G$-injective or MPO-injective PEPS, one faces the problem that, in such cases,
boundary states are no longer full rank and hence cannot be Gibbs states of Hamiltonians. What is
then the structure that is expected to hold in the boundary state of a gapped $G$-injective or
MPO-injective PEPS? Numerical evidence from \cite{schuch2013topological} and analytical evidence
from \cite{cirac2017} suggests that the boundary states are, in that case, of the form $J_{\partial
  A} e^{2Q_{\partial A}}$, where $J_{\partial A}$ is the MPO projector on the boundary of region $A$
and the {\it boundary Hamiltonian} $Q_{\partial A}$ is  (quasi-)local, (quasi-)homogeneous and its
constituent interactions all commute with $J_{\partial A}$.

We can then modify Definition \ref{def:boundary-Hamiltonian} as follows.
\begin{definition}\label{def:mpo-boundary-Hamiltonian}
  If a PEPS is MPO-injective on a region $A$, and $J_{\partial A}$ is the projector on the kernel of
  $\rho_{\partial A}$, then the \emph{boundary Hamiltonian} is given by
  \begin{equation}
    Q_{\partial A} = \frac{1}{2} \log(\rho_{\partial_A}) \in
    J_{\partial A} \cH_{\partial A} J_{\partial A};
  \end{equation}
  where the logarithm is understood to be restricted to the support of $\rho_{\partial A}$.
\end{definition}

Does the boundary Hamiltonian of a PEPS defined in this way satisfy the assumptions of
Theorem~\ref{thm:strictlocality}? While we do not have a satisfying answer except for the case of
isometric and G-isometric PEPS (for which the parent Hamiltonian is commuting), we comment on what
numerical evidence can tell us. The range of the boundary Hamiltonian has been investigated in
detail in Ref. \cite{ciracES} for the square lattice AKLT and the Ising PEPS models on a
cylinder. There, the authors numerically compared the boundary Hamiltonians on the cylinder to the
long range Heisenberg Hamiltonian:
\begin{equation}
  H=\sum_{\ell\geq1}\eta_\ell\sum_{j\in \lambda} S_j S_{j+\ell} +R,
\end{equation}
where $R$ is some unknown rest term. They extracted the values of $\eta_\ell$ for a depth two and
for an infinite depth cylinder, and found to very high accuracy that the terms $\eta_\ell$ decayed
exponentially with $\ell$ for $\ell>2$. The norm of $R$ was also shown to be very small. They
perform the same numerics for Ising PEPS in the non-critical regime, and also observe that the
boundary Hamiltonian shows some decay in the interaction strength. In the regime where the Ising
PEPS becomes non-ergodic, they find that the boundary state begins to resemble a mean field
Heisenberg model; i.e. $\eta_\ell=O(1)$ for all $\ell$. While it is not reasonable to expect a
finite range boundary Hamiltonian from these results, it seems quite challenging to distinguish a
faster than exponential decay from an exponential decay, meaning that even assuming that the
quasi-local version of Araki given in Conjecture~\ref{cnj:qlAraki} holds, one could not easily apply
Theorem~\ref{thm:quasilocality}.

Unfortunately, the authors of Ref. \cite{ciracES} did not make any specific statements about the
homogeneity of the boundary state, however there is some evidence in their numerics to support our
assumptions. In \cite[Fig. 8a]{ciracES} the magnitude of the $\eta_\ell$ term was plotted for different cylinder
diameters. The values of $\eta_\ell$ are essentially independent of the cylinder diameter as long as
$\ell<L/2$, where $L$ is the cylinder diameter. This suggests, at least in the translationally
invariant case, that the boundary Hamiltonian has a universal character, and only the (exponentially
suppressed) very long range contributions are perturbed with changes at a long distance.

Beyond the numerical evidence above, locality of the boundary state has been shown to hold
analytically for models of non-interacting free-fermions \cite{fidkowski2010}, and for certain
conformal field theories \cite{lou2011}. However, caution must be taken, since systems with a chiral
symmetry give rise to critical boundary states.

\section{Conclusion}

We have proved a fundamental theorem relating boundary states of two dimensional injective PEPS to
the bulk gap of the parent Hamiltonian. Our work raises a lot of further questions for our
understanding and analysis of bulk-boundary correspondences in many body systems as well as for
relating static to dynamic properties in physical systems, which we discuss below.

\paragraph{Is the approximate factorization condition necessary?}
The most pressing question is perhaps to know if the assumptions on the boundary states made in
Theorem~\ref{thm:main}, which are sufficient to prove a spectral gap in the bulk, are also
necessary. There are good reasons to believe that this is the case. It was recently shown
\cite{SchwarzPEPS} that a uniform bulk gap of the parent Hamiltonian which is similar (although not
identical) to Definition \ref{def:spectral-gap} implies that the PEPS satisfies local
indistinguishability. Local indistinguishability, and its topological variant, LTQO \cite{LTQOPEPS},
imply that local observables can be evaluated accurately by only contracting a finite ring of
tensors around the observable.
We believe (although we do not have a proof at present) that the property of LTQO should allow us to
show that shielded regions of the boundary states satisfy the decay of mutual information bound that
has recently been shown to by equivalent to the existence of local recovery maps
\cite{fawzi2014}. This in turn implies that the boundary state is close to a local (although not
necessarily bounded) Gibbs state \cite{KatoBrandao16}.  Given this insight, as well as the numerical
evidence from \cite{ciracES}, we conjecture the following:
\begin{conjecture}
  If for any rectangular region $A\subset\Lambda$, $H_A$ is gapped, then the boundary states of $A$
  are close to a Gibbs state of a 1D Hamiltonian with exponentially decaying interactions,
  and they are approximately factorizable.
\end{conjecture}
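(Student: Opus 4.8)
The plan is to establish this conjecture as the converse to the program carried out in Theorem~\ref{thm:main} and Theorem~\ref{thm:strictlocality}: there, approximate factorization (and locality of the boundary Hamiltonian) was shown to imply the bulk gap, whereas here the bulk gap must be used to reconstruct enough locality in the boundary state. The starting point is the hypothesis that $H_A$ is gapped uniformly over rectangular regions $A$. By the result of \cite{SchwarzPEPS}, a uniform bulk gap (in a sense close to Definition~\ref{def:spectral-gap}) implies local indistinguishability, and together with its topological strengthening LTQO \cite{LTQOPEPS} this guarantees that expectation values of local bulk observables can be computed, up to exponentially small error, by contracting only a bounded ring of tensors surrounding the observable. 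The first step is therefore to translate this \emph{bulk} shielding property into a structural statement about the \emph{boundary} state $\rho_{\partial A}$.

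Concretely, I would fix a boundary interval and split $\partial A = L\,M\,R$ into a left part $L$, a shielding middle part $M$ whose width grows with the desired accuracy, and a right part $R$. Using the isometry $W_A$ and the identity $\tr_{A^c}[\ket{\PEPS_\Lambda}\bra{\PEPS_\Lambda}] = W_A \rho^{1/2}_{\partial A}\rho_{\partial A^c}\rho^{1/2}_{\partial A} W_A^\dag$ from Section~\ref{sec:bnd}, together with the shielding provided by LTQO, the aim is to show that the conditional mutual information $I(L:R\,|\,M)_{\rho_{\partial A}}$ decays with the width of $M$. This is the key conceptual step: the bulk gap controls how much information about one end of the boundary is accessible from the other, and LTQO is precisely the tool that localizes this correlation. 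Once the conditional mutual information is small, the approximate-Markov characterization of \cite{fawzi2014} yields a local recovery map on $\partial A$, and the reconstruction theorem of \cite{KatoBrandao16} converts this into the statement that $\rho_{\partial A}$ is close in trace norm to the Gibbs state of a one-dimensional Hamiltonian whose interactions decay exponentially in range, establishing the first half of the conjecture.

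For the second half, approximate factorizability, the idea is to feed the reconstructed Hamiltonian into the machinery of Section~\ref{sec:qfact}. That Hamiltonian is exponentially decaying and, using the quasi-homogeneity suggested by the numerics of \cite{ciracES}, approximately $\delta$-homogeneous, so one would like to invoke Theorem~\ref{thm:quasilocality} to deduce factorization directly. The obstruction is that Theorem~\ref{thm:quasilocality} requires interactions decaying \emph{faster} than any exponential and is itself conditional on the quasi-local extension of Araki's theorem (Conjecture~\ref{cnj:qlAraki}). Closing the gap between plain-exponential and faster-than-exponential decay in the factorization step would require an exponential-decay version of Corollary~\ref{cor:Araki}, which the paper explicitly flags as hard and intimately tied to the (open) exponential clustering of one-dimensional short-range thermal states.

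I expect two main obstacles. The first, and most serious, is deriving decay of conditional mutual information on the boundary from bulk LTQO: the boundary state is a nonlinear contraction of the bulk tensors and its correlations are not manifestly controlled by the gap, so the passage from a shielded bulk region to an approximate Markov structure on $\rho_{\partial A}$ must be made quantitative and uniform in $|A|$. The second is the exponential-versus-faster-than-exponential mismatch above, which blocks a clean application of the factorization results even once the first half is in hand. A sensible intermediate target, sidestepping the second obstacle, is to prove the conjecture with ``exponentially decaying'' weakened to ``quasi-local'', which would then follow from the first half together with Theorem~\ref{thm:quasilocality} under Conjecture~\ref{cnj:qlAraki}.
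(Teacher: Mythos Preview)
The statement you are addressing is a \emph{conjecture}: the paper does not prove it and explicitly says ``we do not have a proof at present''. So there is no paper-proof to compare against; what the paper does supply is a heuristic argument motivating why the authors believe the conjecture. Your proposal is essentially that same heuristic, spelled out in more detail: bulk gap $\Rightarrow$ local indistinguishability/LTQO via \cite{SchwarzPEPS,LTQOPEPS}, then LTQO $\Rightarrow$ decay of conditional mutual information on the boundary, then \cite{fawzi2014} $\Rightarrow$ approximate Markov/recovery structure, then \cite{KatoBrandao16} $\Rightarrow$ closeness to a Gibbs state with decaying interactions, and finally the Section~\ref{sec:qfact} machinery to obtain approximate factorization. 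You have correctly reconstructed the intended route.

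You have also correctly identified where the genuine gaps lie, and it is worth being explicit that these are not merely technical annoyances but the actual reasons the statement remains a conjecture. First, the step ``LTQO $\Rightarrow$ decay of $I(L:R|M)$ for $\rho_{\partial A}$'' is not established anywhere: LTQO controls bulk reduced density matrices under changes of boundary conditions, whereas you need a statement about the virtual boundary operator $\rho_{\partial A}=V_A^\dag V_A$ itself, and the nonlinear dependence on the tensors (plus the square roots in the isometry $W_A$) makes this translation nontrivial. Second, even granting exponentially decaying boundary interactions, the factorization results of Section~\ref{sec:qfact} do not apply: Theorem~\ref{thm:strictlocality} needs finite range, Theorem~\ref{thm:quasilocality} needs faster-than-exponential decay \emph{and} is conditional on Conjecture~\ref{cnj:qlAraki}. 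Third, the $\delta$-homogeneity assumption is only supported by numerics, not by anything coming out of the \cite{KatoBrandao16} reconstruction. Your suggested intermediate target (weakening ``exponentially decaying'' to ``quasi-local'', conditional on Conjecture~\ref{cnj:qlAraki}) is reasonable, but it still inherits the first obstacle, which is the hard one.
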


More abstractly, one might ask whether there are further correspondences between bulk and
boundary properties, including conventional symmetries. One interesting direction to look into is
whether the effective temperature of the boundary Hamiltonian is related to the correlation length
of bulk observables as predicted by Poilblanc \cite{poilblanc2010}.

\paragraph{A canonical form for PEPS?} In every PEPS there is a gauge degree of freedom in its
defining tensor, in the sense that if we multiply the left and right virtual levels by $Y$ and
$Y^{-1}$ respectively, this action gets canceled in the tensor contraction that defines the
PEPS. The same happens for the top and bottom virtual levels when multiplying by $Z$ and $Z^{-1}$
respectively. Moreover, by considering non-translation invariant tensor networks it is possible to
change the choice of gauge matrices at each edge. While this operation does not change the physical
state represented by the PEPS, it will transform the boundary state via a product of congruence (but
not similarity) transformations, in the sense that the boundary state $\rho_{\partial A}$ of a
region $A$ will be mapped to
$(X_1^\dag\otimes \dots \otimes X^\dag_{\abs{\partial A}}) \rho_{\partial A}
(X_1 \otimes\dots \otimes X_{\abs{\partial A}})$, where $\{X_i\}_i$ are invertible matrices.
It has been proved in \cite{perez2010characterizing} that for injective PEPS
this is the only freedom in the PEPS tensor. It is not clear however how to choose the {\it best}
gauge matrices for a given PEPS.
In 1D, the canonical form \cite{MPSrep} defined in Lemma
\ref{lemma:mps-state-convergence} gives a way to fix the gauge which implies the required
approximate factorization of the boundary state. Based on this, one could then define the canonical
gauge in 2D exactly as the one needed to have an approximate factorization of the boundary state (in
case such factorization exists). Note that since the gauge transformation can potentially change both the
eigenvalues and the eigenvectors of the boundary state, it is not clear that having the approximate
factorization property for a given choice of gauge implies the same for other choices of gauges
(given that square roots of the boundary states appear in \eqref{QF1} and \eqref{QF2}).

\subsection*{Acknowledgments}
We thank Albert Werner and Wojciech De Roeck for fruitful discussions.
M.\,J.\,K.\, was supported by the VILLUM FONDEN Young Investigator Program.
A.\,L.\, acknowledges financial support from the European Research Council (ERC Grant Agreement no
337603), the Danish Council for Independent Research (Sapere Aude), VILLUM FONDEN via the QMATH
Centre of Excellence (Grant No. 10059), the Walter Burke Institute for Theoretical Physics in the
form of the Sherman Fairchild Fellowship as well as support from the Institute for Quantum
Information and Matter (IQIM), an NSF Physics Frontiers Center (NFS Grant PHY-1733907).
D.\,P.\,G.\, acknowledges support from MINECO (grant MTM2014-54240-P), Comunidad de Madrid (grant
QUITEMAD+-CM, ref. S2013/ICE-2801), and Severo Ochoa project SEV-2015-556. This project has received
funding from the European Research Council (ERC) under the European Union's Horizon 2020 research
and innovation programme (grant agreement No 648913).

\bibliographystyle{unsrt}
\bibliography{PEPSGap}

\end{document}